\theoremstyle{plain} 
\newtheorem{theorem}{Theorem}[section]
\newtheorem{lemma}[theorem]{Lemma}
\newtheorem{definition}[theorem]{Definition}
\theoremstyle{definition}
\newtheorem{claim}[theorem]{Claim}
\newtheorem{remark}{Remark}
\def \pr {\mathrm{Pr}}
\def \b {\mathbf{b}}
\def \X {\mathbf{X}}
\def \v {\mathbf{v}}
\def \B {\mathbf{B}}
\def \C {\mathbf{C}}
\def \D {\mathbf{D}}
\def \O {\mathbf{O}}
\def \o {\mathbf{o}}
\def \w {\mathbf{w}}
\def \P {\mathbb{P}}
\def \a {\mathbf{a}}
\DeclareMathOperator*{\E}{\mathbb{E}}
\title{Tight Bounds for the Price of Anarchy of Simultaneous First
  Price Auctions}
\author{George Christodoulou\thanks{Department of Informatics,
    University of Liverpool, UK.  Email:
    \texttt{G.Christodoulou@liverpool.ac.uk} } \and Annam{\'a}ria
  Kov{\'a}cs\thanks{Department of Informatics, Goethe University,
    Frankfurt M., Germany. Email: \texttt{panni@cs.uni-frankfurt.de}}
  \and Alkmini Sgouritsa\thanks{Department of Informatics, University
    of Liverpool, UK. Email: \texttt{a.sgouritsa@liverpool.ac.uk}} \and
  Bo Tang\thanks{Department of Informatics, University of Liverpool,
    UK. Email: \texttt{Bo.Tang@liverpool.ac.uk}}}
\begin{document}

\maketitle
\begin{abstract}
  We study the Price of Anarchy of simultaneous first-price auctions
  for buyers with submodular and subadditive valuations. The current
  best upper bounds for the Bayesian Price of Anarchy of these
  auctions are $e/(e-1)$~\cite{ST13} and $2$~\cite{FFGL13},
  respectively. 
  We provide matching lower bounds for
  both cases {\em even} for the case of full information and for
  mixed Nash equilibria via an explicit construction.
%

  We present an alternative proof of the upper bound of $e/(e-1)$ for
  first-price auctions with fractionally subadditive valuations which
  reveals the worst-case price distribution, that is used as a
  building block for the matching lower bound construction.  

  We generalize our results to a general class of item bidding
  auctions that we call bid-dependent auctions (including first-price
  auctions and all-pay auctions) where the winner is always the
  highest bidder and each bidder's payment depends only on his own
  bid.


  Finally, we apply our techniques to discriminatory price multi-unit
  auctions. We complement the results of~\cite{KMST13} for the case of
  subadditive valuations, by providing a matching lower bound of
  $2$. For the case of submodular valuations, we provide a lower bound
  of 1.109. For the same class of valuations, we were able to
  reproduce the upper bound of $e/(e-1)$ using our non-smooth
  approach.
\end{abstract}


\section{Introduction}
\emph{Combinatorial auctions} constitute a fundamental, well-studied resource
allocation problem that involves the interaction of $n$ selfish agents
in competition for $m$ in\-di\-vi\-sible resources/goods. The preferences of
each player for different bundles of the items are expressed via a
valuation set function (one per player). The main challenge is to design a {\em
  (truthful) mechanism} that allocates the items in an \emph{efficient} way
in the equilibrium, i.e., so that it
maximizes the {\em social welfare}, which is the sum of the valuations of
the players for the received bundles. Although it is well-known that this can be achieved
optimally by the VCG mechanism~\cite{Vic61,Cla71,Gro73}, unfortunately
this might take exponential time in $m$ and
$n$~\cite{NR99,NR00} 
(unless P=NP).

In practice, several simple {\em non-truthful} mechanisms are
used. The most notable examples are \emph{generalized second price
  (GSP)} auctions used by AdWords \cite{EOS07,Va07}, simultaneous
ascending price auctions for wireless spectrum allocation
\cite{Mil00}, or independent second price auctions on
eBay. Furthermore, in these auctions the expressive power of the
buyers is heavily restricted by the bidding language, so that they are
not able to represent their complex preferences precisely. In light of
the above, Christodoulou, Kov\'acs and Schapira \cite{CKS08} proposed
the study of simple, non-truthful auctions using the \emph{price of
  anarchy (PoA)}~\cite{KP99} as a measure of inefficiency of such
auctions.\footnote{In this setting, the price of anarchy is defined as
  the worst-case ratio of the optimal social welfare over the social
  welfare obtained in a (Bayesian) Nash equilibrium.}


\paragraph{Item bidding} 

Of particular interest are the so-called {\em combinatorial auctions with
 item-bidding}, from both practical and theoretical aspects. In such an auction, the auctioneer sells each item
by running {\em simultaneously} $m$ independent single-item
auctions. Depending on the type of single-item auctions used, the two main
variants that have been studied are \emph{simultaneous second-price
  auctions (SPAs)}~\cite{CKS08,BR11,FFGL13} and \emph{simultaneous
  first-price auctions (FPAs)}~\cite{HKMN11,ST13,FFGL13}. In both
cases, the bidders are asked to submit a bid for each item. Then each
item is assigned to the highest bidder. The main difference is that in
the former a winner is charged an amount equal to the second highest bid
while in the latter a winner pays his own bid.

FPAs have been shown to be more efficient than SPAs. For general
valuations, Hassidim et al.~\cite{HKMN11} showed that pure equilibria of FPAs are
efficient whenever they exist, but
mixed, and Bayesian Nash equilibria of FPAs can be highly inefficient in settings with
complementarities.  For two important classes
of valuation functions, namely {\em fractionally subadditive} and {\em
  subadditive}\footnote{Fractionally subadditive valuations are also
  known as XOS valuations. For definitions of these valuation
  functions we refer the reader to Section~\ref{sec:pre}.}, for
mixed and Bayesian Nash equilibria,~\cite{HKMN11,ST13} and
\cite{FFGL13} showed that FPAs have lower (constant) price of anarchy
than the respective bounds obtained for
SPAs~\cite{CKS08,BR11,FFGL13}. The current best upper bounds for the price of anarchy in FPA are $e/(e-1)$ for XOS valuations \cite{ST13}, and $2$ for subadditive
valuations \cite{FFGL13} (proven by different techniques). 

\paragraph{Our Contribution}

Following the work of \cite{HKMN11,FFGL13,ST13}, we study the price of
anarchy of FPAs for games with complete and incomplete
information. Our main concern is the development of tools that provide
{\em tight} bounds for the price of anarchy of these auctions. Our
results complement the current knowledge about simultaneous
first-price auctions. We provide {\em matching} lower bounds to the
upper bounds by Syrgkanis and
Tardos~\cite{ST13} and by Feldman et al.~\cite{FFGL13}, showing that {\em
  even} for the case of full information and mixed Nash equilibria the
PoA is at least $e/(e-1)$ for submodular\footnote{In fact our lower
  bound holds even for the class of OXS valuations that is a strict
  subclass of submodular valuations. We refer the reader to
  Section~\ref{sec:pre} for a definition of OXS valuations and for
  their relation to other valuation classes.} valuations (and
therefore for XOS) and $2$ for subadditive valuations\footnote{
  Independently, and after a preliminary version of our
  work~\cite{CKST13}, Roughgarden~\cite{Rou14} showed a general method to provide
  lower bounds for the Price of Anarchy of auctions. We discuss it and
  compare it to our work in the Related Work paragraph.}.

We present an alternative proof of the upper bound of $e/(e-1)$ for
FPAs with \emph{fractionally subadditive} valuations. This bound was shown
before in \cite{ST13} using a general smoothness framework. Our
approach does not adhere to their framework. A nice thing with our
approach, is that it reveals the worst-case price distribution, that
we then use as a building block for the matching lower bound construction.
An immediate consequence of our results is that the price of anarchy
of these auctions {\em stays the same}, for mixed, correlated,
coarse-correlated, and Bayesian Nash equilibria. Only for pure Nash
equilibria it is equal to 1.  Our findings suggest that smoothness may
provide tight results for certain classes of auctions, using as {\em a
  base class} the class of mixed Nash equilibria, and not that of pure
equilibria. This is in contrast to what is known for routing games,
where the respective base class was the class of pure equilibria.

For buyers with {\em additive} valuations (or for the single item auction), we show that any mixed Nash
equilibrium is {\em efficient} in contrast to Bayesian Nash equilibria 
that were previously known not to be always efficient~\cite{Kri02}. 
This suggests an interesting separation between the full and the incomplete information
cases as opposed to other valuation functions (for example submodular and subadditive) 
and other auction formats such as all-pay auctions due to Baye et al.~\cite{BKV96}. 

Then we generalize our results to a class of item bidding auctions
that we call \emph{bid-dependent} auctions. Intuitively, a single item
auction is bid-dependent if the winner is always the highest bidder,
and a bidder's payment depends only on his own bid. Note that both
winner and losers may have to pay. Apart from the FPA (where the
losers pay 0), another notable item-bidding auction that falls into
this class is the simultaneous all-pay (first-price) auction (APA)
\cite{ST13}, where all bidders (even the losers) are charged their
bids. For subadditive valuations, we show that the PoA of simultaneous
bid-dependent auctions is exactly 2, by showing tight upper and lower
bounds. We show that the upper bound technique due to Feldman~et~al.~\cite{FFGL13}
for FPAs, can be applied to {\em all} mechanisms of
this class. Interestingly, although one might expect that FPAs perform
strictly better than APAs, our results suggest that all simultaneous
bid-dependent auctions perform equally well. We note that our upper
bound for subadditive valuations extends the previously known upper
bound of 2 for APAs that was only known for XOS valuations~\cite{ST13} .

Finally, we apply our techniques on \emph{discriminatory price
 multi-unit auctions} \cite{Kri02}. We complement the results by 
de Keijzer et al.~\cite{KMST13} for the case of subadditive valuations, by providing
a matching lower bound of 2, for the standard bidding format. 
For the case of submodular valuations, we
were able to provide a lower bound of 1.109. We were also able to
reproduce their upper bound of $e/(e-1)$ for submodular bids, using
our non-smooth approach.  Note that the previous lower bound for such
auctions was 1.0004~\cite{KMST13} for Bayesian Nash equilibria. Both of
our lower bounds hold for the case of mixed Nash equilibria.


\paragraph{Related Work}
A long line of research aims to design simple auctions with good
performance guarantee (see e.g. \cite{HR09,CHMS10}). The
(in)efficiency of first-price price auctions has been observed in
economics (cf. \cite{Kri02}) starting from the seminal work by
Vickrey~\cite{Vic61}. Bikhchandani~\cite{Bik99} was the first who studied 
the simultaneous
sealed bid auctions in full information settings and observed the 
inefficiency of their equilibria. 

Christodoulou, Kov\'acs and Schapira \cite{CKS08} extended the concept of PoA to the Bayesian setting and
applied it to item-bidding auctions. Bikhchandani~\cite{Bik99} and then Hassidim et al.~\cite{HKMN11} showed that in
case of general valuations, in FPAs pure Nash equilibria are always
efficient (whenever they exist), whereas for SPAs Fu, Kleinberg and Lavi~\cite{FKL12} proved that the PoA is at
most $2$. For Bayesian Nash equilibria, Syrgkanis and
Tardos~\cite{ST13} and Feldman et al.~\cite{FFGL13} showed improved upper bounds on the \emph{Bayesian
  price of anarchy (BPoA)} for FPAs. Syrgkanis and
Tardos came up with
a general composability framework of smooth mechanisms, that proved to
be quite useful, as it led to upper bounds for several settings, such 
as first price auctions, all-pay auctions and multi-unit auctions.

Only a few lower-bound results are known for the PoA of simultaneous
auctions. For valuations that include {\em complementarities},
Hassidim et al.~\cite{HKMN11} presented an example with PoA $=\Omega(\sqrt{m})$ for
FPA; as suggested in \cite{FFGL13}, similar lower bound can be derived
for SPAs, as well. Under the non-overbidding assumption, Bhawalkar and
Roughgarden~\cite{BR11}
gave a lower bound of $2.013$ for SPAs with subadditive bidders and
$\Omega(n^{1/4})$ for correlated bidders. In \cite{FFGL13}, similar
results are shown under the weak non-overbidding assumption. We
summarize the PoA results for FPAs in table
\ref{tbl:relatedWorkFPA-SPA}.

\begin{table}[t]
\centering
\caption{Summary of the bounds on the PoA of FPAs.
\label{tbl:relatedWorkFPA-SPA}}
{\begin{tabular}{| l | c r | c r |}
\hline
\rule{0pt}{2ex} Valuations & \multicolumn{2}{|c|}{Lower Bound} & \multicolumn{2}{|c|}{Upper Bound}  \\ \hline \hline
\rule{-1mm}{2ex} General, Pure & $1$ &  & $1$ & {\cite{Bik99}} \\
& &  & & {\cite{HKMN11}} \\
General, M-B & $\sqrt{m}$ & {\cite{HKMN11}} & $m$ & {\cite{HKMN11}}\\
SA, M-B & $2$ & {[{\color{red} This paper}]} & $2 $ & {\cite{FFGL13}}\\
XOS, M-B & $e/(e-1)$ & {[{\color{red} This paper}]} & $e/(e-1)$ & {\cite{ST13}} \\
SM, M-B & $e/(e-1)$ & {[{\color{red} This paper}]} & $e/(e-1)$ & {\cite{ST13}} \\
OXS, M-B & $e/(e-1)$ & {[{\color{red} This paper}]} & $e/(e-1)$ & {\cite{ST13}} \\\hline
\end{tabular}}
%
%

In the first column, the first argument refers to the valuation class 
and the second argument to the related equilibrium concept. SA and SM stand for subadditive and submodular 
valuations, respectively, and where `M-B' appears the bounds hold for mixed, correlated, coarse 
correlated or Bayesian Nash equilibria.
\end{table}

%
%

Very recently and independently, Roughgarden~\cite{Rou14} presented a very
elegant methodology to provide PoA lower bounds via a reduction from
communication or computational complexity lower bounds for the
underlying optimization problem. One consequence of his reduction is a
general lower bound of $2$ and $e/(e-1)$ for the PoA of {\em any}
simple auction (including item-bidding auctions) with subadditive and
fractionally subadditive bidders, respectively. Therefore, there is an
overlap with our results for these two classes of valuations.
We show these lower bounds via an explicit construction for FPAs (and
also for bid-dependent auctions). 

We emphasize that these two approaches are incomparable in the
following sense.  On the one hand, the results in \cite{Rou14} hold
for more general formats of combinatorial auctions than the ones we
study here.  On the other hand, our $e/(e-1)$ lower bound holds even
for more special valuation functions where \cite{Rou14}'s results are
either weaker ($2e/(2e-1)$ for submodular valuations) or not
applicable.  For the case of submodular valuations, Feige and
Vondr{\'{a}}k~\cite{FV10}
showed that a strictly higher than $1-1/e$ amount of the optimum
social welfare can be obtained in polynomial communication and for
gross substitute valuations (and therefore for its subclass, OXS
valuations), Nisan and Segal~\cite{NS06} showed that exact efficiency can be obtained
in polynomial communication. These two results show that
\cite{Rou14}'s technique does not provide tight lower bounds for the
settings studied in this paper. We also note that the PoA lower bound
obtained by \cite{Rou14}'s reduction can only be applied to
approximate Nash equilibria while our results apply to exact Nash
equilibria. Further, our PoA lower bound proof for subadditive
valuations uses a simpler construction than the proof in \cite{Rou14}
and it holds even for the case of only $2$ bidders and identical items
(multi-unit auction).  Finally, it should be stressed that none of our
lower bounds for multi-unit auctions can be derived from
\cite{Rou14}.


Markakis and Telelis~\cite{MT12} studied uniform price \emph{multi-unit}
auctions. De Keijzer et al.~\cite{KMST13} bounded the BPoA for several formats of
multi-unit auctions with first or second price rules. Auctions
employing greedy algorithms were studied by Lucier and Borodin~\cite{LB10}. A number of
works \cite{PT10, CKK11, Rou12} studied the PoA of generalized SPAs in
the full information and Bayesian settings and even with correlated
bidders \cite{LP11}. Chawla and Hartline~\cite{CH13} proved that for the generalized FPAs
with symmetric bidders, the pure Bayesian Nash equilibria are unique
and always efficient.

\paragraph{Organization of the paper}
We introduce the necessary background and notation in Section
\ref{sec:pre}. Then, we present the tight bounds for FPAs with
fractionally subadditive, subadditive  and additive valuations
in Sections \ref{sec:upper}, \ref{sec:subadditive_LB} and
\ref{sec:additive}, respectively. In Section \ref{sec:general}, we show
how our results can be generalized to a class of auctions that we call
bid-dependent. Finally, we apply our techniques to get
bounds on the PoA of discriminatory multi-unit auctions in Section
\ref{sec:discriminatory}.


\section{Preliminaries}
\label{sec:pre}

\emph{Simultaneous first-price auctions} constitute a simple type of
combinatorial auctions. In a combinatorial auction with $n$
\emph{players} (or \emph{bidders}) and $m$ \emph{items}, every player 
$i\in [n]$ has a valuation for each subset of items, given by a
valuation function $v_i: 2^{[m]} \rightarrow \mathbb R_{\geq 0},$
where $v_i\in V_i$ for some possible set of valuations $V_i.$ A
valuation profile for all players is $\v=(v_1,v_2,\ldots, v_n)\in \times_i V_i.$ The $v_i$
functions are monotone and normalized, that is, $S\subseteq
T\Rightarrow v_i(S)\leq v_i(T),$ and $v_i(\emptyset)=0.$ We use the short notation $v_i(j)=v_i(\{j\}).$ 

In the \emph{Bayesian} setting, the valuation of each player $i$ is
drawn from $V_i$ according to some known distribution $D_i.$ We assume
that the $D_i$ are independent (and possibly different) over the
players. In the \emph{full information} setting the valuation $v_i$ is
fixed and known by all other players for all $i\in [n].$ Note that the
latter is a special Bayesian combinatorial auction, in which player
$i$ has valuation $v_i$ with probability $1.$

An \emph{allocation} $\X=(X_1,\ldots,X_n)$  is a partition  of the items 
(allowing empty sets $X_i$), so that each item is assigned to exactly one player. 
The most common global objective in combinatorial auctions is to maximize
the sum of the valuations of the players for their received sets of
items, i.e., to maximize the \emph{social welfare} $SW(\X)$ of the
allocation, where $SW(\X)=\sum_{i\in [n]} v_i(X_i).$ Therefore, for an
\emph{optimal allocation} $\O(\v)=\O=(O_1,\ldots,O_n)$ the
value $SW(\O)$ is maximum among all possible allocations.

In a \emph{simultaneous (or item bidding)} auction every player $i\in
[n]$ submits a non-negative bid $b_{ij}$ for each item $j\in [m].$ The
items are then allocated by independent auctions: for each $j\in [m],$
the bidder $i$ with the highest bid $b_{ij}$ receives the item.  We
consider the case when the payment for each item is the \emph{first
  price} payment: a player pays his own bid (the highest bid) for
every item he receives.  Our (upper bound) results hold for arbitrary
randomized tie-breaking rules. 
Note that with such a rule, for any fixed $\b=(b_1,\ldots,b_n),$ 
the probabilities for the players to get a particular item are fixed. 


For a given bid vector $b_i,$ item $j\in [m]$ and a subset of  items $S\subseteq [m]$ 
we use the notation $b_i(S)=\sum_{j\in S} b_{ij},$ and $b_i(j)=b_{ij}.$
Assume that the players submitted bids for the items according to
$b_i=(b_{i1},\ldots ,b_{im})$ and the simultaneous first-price
auction yields the allocation $\X(\b)$. For simplicity, we use
$v_i(\b)$ and $SW(\b)$ instead of $v_i(X_i(\b))$ and $SW(\X(\b)),$ to
express the valuation of player $i$ and the social welfare for the
allocation $\X(\b)$ if $\X$ is clear from the context. The
\emph{utility $u_i$} of player $i$ is defined as his valuation for the
received set, minus his payments: $u_i(\b)= v_i(X_i(\b))-b_i(X_i(\b)).$

\subsection{Bidding strategies, Nash equilibria, and the price of anarchy}

A \emph{pure (bidding) strategy $b_i$} for player $i$ is a vector of
bids for the $m$ items $b_i=(b_{i1},\ldots ,b_{im}).$ As usual,
$\b_{-i}$ denotes the strategies of all players except for $i.$ The
pure strategy profile of all bidders is then $\b=(b_1,b_2,\ldots,
b_n).$ 

A \emph{mixed strategy $B_i$} of player $i$ is a probability
distribution over pure strategies. Let $\B=(B_1,\ldots,B_n)$ be a
profile of mixed strategies. Given a profile $\B$, we fix the notation
for the following \emph{cumulative distribution functions (CDF):} unless defined otherwise,
$G_{ij}$ is the CDF of the bid of player $i$ for item $j;$ $F_{j}$
is the CDF of the highest bid for item $j$ in $\b,$ and $F_{ij}$ is
the CDF of the highest bid for item $j$ in $\b_{-i}.$  Observe that
$F_j=\Pi_k G_{kj},$ and $F_{ij}=\Pi_{k\neq i} G_{kj}.$ We also use 
$\varphi_{ij}(x)$ to denote the probability that player $i$ gets item $j$
by bidding $x.$ Then $\varphi_{ij}(x)\leq F_{ij}(x)$ due to a possible tie in $x.$


We review five standard equilibrium concepts studied in this paper:
pure, mixed, correlated, coarse correlated and Bayesian Nash
equilibria. The first four of them are for the \emph{full information}
setting and the last one is defined in the \emph{Bayesian}
setting. Let $\v=(v_1,\ldots,v_n)$ be the players valuation
functions. In the Bayesian setting, $v_i$ is drawn from $V_i$
according to some known distribution. Let $\B$ denote a distribution over bidding profiles $\b$ of the players.
Then, $\B$ is called a

\smallskip

\noindent-- \emph{pure Nash equilibrium}, if 
$\B$ is a pure strategy profile $\b$ and $u_i(\b)\ge u_i(b'_i, \b_{-i})$.

\noindent-- \emph{mixed Nash equilibrium}, if $\B=\times_i B_i$ and
$\E_{\b\sim\B}[u_i(\b)]\ge \E_{\b_{-i}\sim
  \B_{-i}}[u_i(b'_i,\b_{-i})]$.

\noindent-- \emph{correlated Nash equilibrium}, if
$\E_{\b\sim\B}[u_i(\b)|b_i]\ge \E_{\b\sim\B}[u_i(b'_i,\b_{-i})|b_i]$.

\noindent-- \emph{coarse correlated Nash equilibrium}, if
$\E_{\b\sim\B}[u_i(\b)]\ge \E_{\b\sim \B}[u_i(b'_i,\b_{-i})]$.

\noindent-- \emph{Bayesian Nash equilibrium}, if
$B_i(\v)=\times_iB_i(v_i)$ and $\E_{\v_{-i},\b\sim \B(\v)}[u_i(\b)]\ge
\E_{\v_{-i},\b_{-i}\sim \B_{-i}(\v_{-i})}[u_i(b'_i, \b_{-i})].$

\smallskip

\noindent where the given inequalities hold for all players $i$ and
(pure) deviating bids $b'_i$. It is well known that each one of the
first four classes is contained in the next class, i.e., pure
$\subseteq$ mixed $\subseteq$ correlated $\subseteq$ coarse
correlated. If we regard the \emph{full information} setting as a
special case of the \emph{Bayesian} setting, we also have pure $\subseteq$
mixed $\subseteq$ Bayesian.

For a given auction and fixed valuations $\v$ of the bidders, let $\O$
be an optimal allocation. Then for this auction (game) the \emph{price
  of anarchy in pure equilibria} is $\mathrm{PoA}= \max_{\b\,
  \mathrm{pure} \,\mathrm{Nash}} \frac{SW(\O)}{SW(\b)}; $ Given a
class of auctions, the \emph {price of anarchy (PoA)} for this type of
auction is the worst case of the above ratio, over
all auctions of the class, valuation profiles $\v$ and bidding
profile $B$. For the other four types of equilibria, the price of
anarchy can be defined analogously.

For the expected utility of a
given bidder $i$ we often use the short notation $\E[u_i]$ (if $\B$ is clear from the context) or $u_i(\B)$ to denote $\E_{\b\sim
  \B}[u_i(\b)]$.  Similarly, for fixed $b'_i,$ we use
$\E[u_i(b'_i)]=\E_{\b_{-i}\sim \B_{-i}}[u_i(b'_i,\b_{-i})].$ We also
use $\E_{\v}$ instead of $\E_{\v\sim \mathbf{D}}.$

\subsection{Types of valuations}
Our results concern different classes of valuation functions, which we
define next, in increasing order of inclusion.  Let $v: 2^{[m]}
\rightarrow \mathbb R_{\geq 0},$ be a valuation function. Then $v$ is called

\smallskip

\noindent-- \emph{additive}, if $v(S)=\sum_{j\in S} v({j});$

\noindent-- \emph{multi-unit-demand} or \emph{OXS}, if for some $k$ there exist 
$k$ unit demand valuations $v^r$, $r \in [k]$ (defined as 
$v^r(S)=\max_{j\in S}v^r(j)$), such that
$v(S)=\max_{S=\dot\bigcup_{r\in[k]} S_r }\sum_{r\in [k]} v^r(S_r);$\footnote{{\small $\dot\bigcup$} stands for disjoint union}

\noindent-- \emph{submodular}, if $v(S\cup T)+v(S\cap T)\leq
v(S)+v(T);$

\noindent-- \emph{fractionally subadditive} or \emph{XOS}, if $v$ is
determined by a finite set of \emph{additive} valuations $f_\gamma$
for $\gamma\in \Gamma,$ so that $v(S)= \max_{\gamma\in \Gamma}
f_\gamma(S);$

\noindent-- \emph{subadditive}, if $v(S\cup T)\leq v(S)+v(T);$

\smallskip

\noindent where the given equalities and inequalities must hold for arbitrary item sets
$S,T\subseteq [m].$  It is well-known that each one of
the above classes is strictly contained in the next class, e.g., an
additive set function is always submodular but not vice versa, a
submodular is always XOS, etc. \cite{Fei06}.  As an equivalent
definition, submodular valuations are exactly the valuations with
\emph{decreasing marginal values}, meaning that $v(\{j\}\cup T)-v(T)\leq
v(\{j\}\cup S)-v(S)$ holds for any item $j,$ given that $S\subseteq T.$

\section{Submodular valuations}
\label{sec:upper}

In this section we present a lower bound of ${\frac{e}{e-1}}$ for the mixed PoA in
simultaneous first price auctions with OXS and therefore, submodular and fractionally subadditive valuations. 
This is a matching lower bound to the results by Syrgkanis and
Tardos~\cite{ST13}.

In order to the explain the key properties of the instance proving a tight lower bound, first
we discuss a new approach to obtain the same upper bound for the PoA of a first price single item auction
as in \cite{ST13}. While the upper
bound that we derive with the help of this idea, can also be
obtained based on the very general \emph{smoothness} framework
\cite{Rou12,ST13,KMST13}, the approach we introduce here does not
adhere to this framework.\footnote{Roughly, because the pure deviating
  bid $a$ that we identify, depends on the other players' bids
  $\b_{-i}$ in the Nash equilibrium.} The strength of our approach consists in its
potential to lead to better (in this case tight) \emph{lower} bounds,
as we demonstrate subsequently.



\subsection{PoA Upper Bound for Single Item Auctions}
\label{sec:one_item_upper}

\begin{theorem}
  The PoA of mixed Nash equilibria in first-price single-item auctions
  is at most $\frac{e}{e-1}.$
\end{theorem}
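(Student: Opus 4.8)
The plan is to show directly that the equilibrium welfare satisfies $SW(\B)\ge (1-1/e)\,SW(\O)$, which is equivalent to the claimed bound. Write $v_1=\max_i v_i=SW(\O)$ for the optimal welfare (achieved by giving the item to a highest-value bidder, say bidder $1$), let $H=\max_i b_i$ be the winning bid, and let $Z=\max_{k\neq 1}b_k$ be the highest competing bid against bidder $1$, with CDF $F_1(x)=\pr[Z\le x]$ and left limit $\overline{F_1}(x)=\pr[Z<x]$. Two elementary ingredients drive the proof. First, a welfare--revenue split: bidding $0$ guarantees nonnegative utility, so $u_i(\B)\ge 0$ for every $i$; summing utilities over all bidders gives $\sum_i u_i(\B)=SW(\B)-\E[H]$, hence $SW(\B)\ge u_1(\B)+\E[H]$. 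Second, a \emph{distribution-dependent} deviation for the top bidder: for any fixed $a\in[0,v_1]$, if bidder $1$ bids $a$ he wins whenever every competing bid is strictly below $a$, so $u_1(a)\ge (v_1-a)\,\overline{F_1}(a)$, and by the equilibrium condition $u_1(\B)\ge\sup_{a\in[0,v_1]}(v_1-a)\,\overline{F_1}(a)$. Since $\E[H]\ge\E[Z]=\int_0^\infty(1-F_1(x))\,dx\ge\int_0^{v_1}(1-F_1(x))\,dx$, combining the two ingredients yields
\[
 SW(\B)\ \ge\ \sup_{a\in[0,v_1]}(v_1-a)\,\overline{F_1}(a)\ +\ \int_0^{v_1}\bigl(1-F_1(x)\bigr)\,dx .
\]

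It then remains to show the right-hand side is at least $(1-1/e)\,v_1$ for every nondecreasing $\overline{F_1}$. Normalize $v_1=1$ and view the data as a nondecreasing $g\colon[0,1]\to[0,1]$ (namely $g=\overline{F_1}$), setting $c=\sup_{a\in[0,1]}(1-a)g(a)$, so that $SW(\B)\ge c+\int_0^1(1-g(x))\,dx$. The defining inequality $(1-a)g(a)\le c$ forces $g(x)\le\min\{1,\,c/(1-x)\}$ pointwise, and $c/(1-x)\le 1$ exactly when $x\le 1-c$, so
\[
 \int_0^1\bigl(1-g(x)\bigr)\,dx\ \ge\ \int_0^{1-c}\Bigl(1-\frac{c}{1-x}\Bigr)\,dx\ =\ 1-c+c\ln c
\]
by the substitution $t=1-x$. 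Hence $SW(\B)\ge c+(1-c+c\ln c)=1+c\ln c$, and minimizing $1+c\ln c$ over $c\in[0,1]$ gives the value $1-1/e$, attained at $c=1/e$. This proves $SW(\B)\ge(1-1/e)\,SW(\O)$ and the theorem. As a bonus, equality forces $\overline{F_1}(x)=\min\{1,(1/e)/(1-x)\}$, i.e.\ the competing bid is distributed as an atom of mass $1/e$ at $0$ together with density $(1/e)/(1-x)^2$ on $(0,1-1/e)$; this is exactly the worst-case price distribution that serves as the building block of the matching lower bound.

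The genuinely non-routine point is the deviation in the second ingredient: the bid $a$ realizing the supremum depends on the competitors' equilibrium bid distribution $F_1$ --- this is precisely why the argument falls outside the smoothness framework --- and one must take care to lower bound bidder $1$'s winning probability by the left-continuous version $\overline{F_1}$ rather than $F_1$ itself, so that the bound is valid for arbitrary randomized tie-breaking; since $\overline{F_1}$ and $F_1$ differ only on a countable set, the two places they appear stay consistent. Everything else --- the utility-summation identity, the one-variable integral, and the scalar minimization of $1+c\ln c$ --- is routine, and notably no assumption that bids are capped by values is needed, since all bounds follow from $u_i(\B)\ge 0$ and a single pure deviation.
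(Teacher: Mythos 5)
Your proof is correct and takes essentially the same route as the paper: the welfare--revenue decomposition $SW(\B)=\sum_i u_i(\B)+\E[\max_k b_k]$, a pure deviation for the top bidder whose bid $a$ is chosen as a function of the competitors' equilibrium price distribution, and the scalar optimization showing that $c+\int_0^{1-c}(1-c/(1-x))\,dx=1+c\ln c\ge 1-1/e$ is exactly the paper's Lemma~\ref{lem:1.58upper1}. The only cosmetic difference is that you sidestep tie-breaking by lower-bounding the winning probability with the left limit $\overline{F_1}$ directly, whereas the paper's Lemma~\ref{lem:tie} achieves the same effect by approaching $a$ from the right and using right-continuity of $F$; since $\overline{F_1}$ and $F_1$ agree almost everywhere, the two versions are interchangeable wherever they are integrated.
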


\begin{proof}
  Let $\v=(v_1,\ldots,v_n)$ be the valuations of the players, and
  suppose that $v_i=\max_{k\in [n]} v_k.$ We fix a mixed Nash
  equilibrium $\B=(B_1,B_2,\ldots,B_n).$ Let $p_i$ denote the highest
  bid in $\b_{-i},$ and $F(x)=F_i(x)$ be the cumulative distribution
  function (CDF) of $p_i,$ that is, $F(x)=\P_{\b_{-i}\sim
    \B_{-i}}[p_i\leq x].$
The following lemma prepares the ground for the selection of an
appropriate deviating bid.
  \begin{lemma}\label{lem:tie}
    For any pure strategy $a$ of player $i$,
    $\E_{\b\sim\B}[u_i(\b)]\geq F(a)(v_i-a)$.
  \end{lemma}
  \begin{proof}
    If $F$ is continuous in $a,$ then $F(a)=\P[p_i\leq
    a]=\P[p_i<a],$ tie-breaking in $a$ does not matter, and $F(a)$
    equals also the probability that bidder $i$ gets the item if he
    bids $a.$ Therefore,
    $F(a)(v_i-a)=\E_{\b_{-i}\sim\B_{-i}}[u_i(a,\b_{-i})]\leq
    \E_{\b\sim\B}[u_i(\b)],$ since $\B$ is a Nash equilibrium.
    If $F$ is not continuous in $a$ ($\P[p_i=a]>0$), then, as a
    CDF, it is at least right-continuous. By the previous argument
    $\E[u_i(\b)]\geq F(a+\epsilon)(v_i-a-\epsilon)$ holds for every
    $x=a+\epsilon$ where $F$ is continuous, and the lemma follows by
    taking $\epsilon\rightarrow 0.$
  \end{proof}

  Since in a Nash equilibrium the expected utility of every (other) player is
  non-negative, by summing over all players, it holds that $\sum_{k=1}^n \E[u_k(\b)]\geq
  F(a)(v_i-a).$   
  
  On the other hand, for any fixed bidding profile $\b$ we have $u_k(\b)=v_k(X_k(\b))-b_k(X_k(\b)),$ where $b_k(X_k(\b))=b_k$
  whenever $b_k$ is a winning bid, and $b_k(X_k(\b))=0$ otherwise.
  Let  $b_{\max}$ be the maximum bid in $\b.$ 
  By taking expectations with regard to $\b\sim \B,$ and summing over the players, 
  $\E[\sum_k  u_k(\b)]=\E[\sum_k (v_k(X_k(\b))-b_k(X_k(\b)))] 
  =\E[\sum_k v_k(X_k(\b))- b_{\max}]
  =\E[SW(\b)]-\E[b_{\max}].$
By combining this with Lemma~\ref{lem:tie},
  we obtain
  \begin{equation}
    \label{eq:lambda0}\E[SW(\b)]=\E\left[\sum_{k=1}^n
      v_k(X_k(\b))\right]\geq F(a)(v_i-a)+\E[b_{\max}]\geq
    F(a)(v_i-a)+\E[p_i], \end{equation}
 for any (deviating) bid $a.$ (Analogues of this derivation are standard
  in the simultaneous auctions literature.)
We choose the bid $a^*$ that {\em maximizes} the right hand side of
(\ref{eq:lambda0}), i.e. $a^*=\arg\max_{a}F(a)(v-a)$ (see Figure~\ref{fig:F_and_tightF} (a) for an illustration). Then, in order to upper bound the PoA, we look for the {\em maximum} value of $\lambda$, such that,


  \begin{equation}
    \label{eq:lambda1}
    F(a^*)(v_i-a^*) + \E[p _i] \geq \lambda v_i.
  \end{equation}
The following lemma settles the maximum value of such $\lambda$ as $\,1-\frac{1}{e}\,$ for mixed equilibria.\footnote{If $\B$ is a pure equilibrium, then it is easy to verify that $F$ is  a step function, furthermore $a^*=p_i,$ and inequality~(\ref{eq:lambda1}) boils down to $1\cdot (v_i-a^*)+a^*=1\cdot v_i$} This will complete the proof of the theorem, since  by (\ref{eq:lambda0}) and $SW(\O)=v_i$ we obtain $\E[SW(\b)]\geq (1-\frac{1}{e})\cdot SW(\O).$ 
  
\begin{figure}[t]
	\centering
\begin{tabular}{c c}
\includegraphics[scale=0.45]{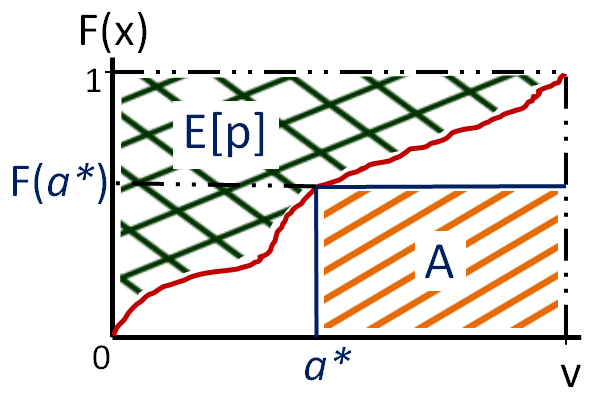} &
\includegraphics[scale=0.35]{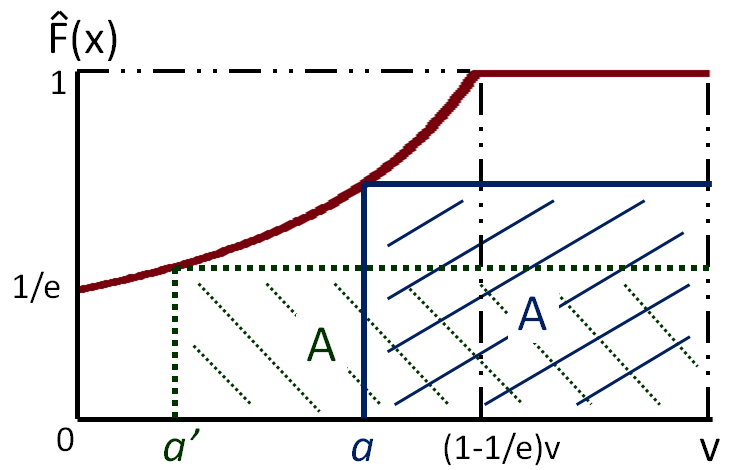}\\
$(a)$&$(b)$
\end{tabular}
 \caption{Figure $(a)$ is a schematic illustration of the
        expression $F(a^*)(v-a^*) + \E[p]$, where $A=F(a^*)(v-a^*)$. Figure $(b)$ shows the CDF $\hat{F}(x)$,which makes all the inequalities of Lemma
        \ref{lem:1.58upper1} tight, i.e. for every $x \in
        [0,\left(1-\frac{1}{e}\right)v]$, $F(x)(v-x)=A=\frac{v}{e}$.}
    \label{fig:F_and_tightF}
\end{figure}

  \begin{lemma}\label{lem:1.58upper1}
    For any non-negative random variable $p$ with CDF $F,$ and any
    fixed value $v$, it is true that 
    $$F(a^*)(v-a^*) + \E[p] \geq \left(1-\frac{1}{e}\right)v.$$
  \end{lemma}
  \begin{proof}
    Set $A= F(a^*)(v-a^*), $ for $a^*=\arg\max_{a}F(a)(v-a).$  We use the fact that the expectation of a non-negative random variable with CDF $F$ can be calculated as $\E[x]=\int_0^\infty (1-F(x))dx.$
     Thus, 
    \begin{eqnarray*}
      F(a^*)(v-a^*) + \E[p] 
      &\geq& A + \int_0^{v-A}(1-F(x))dx= A+(v-A)- \int_0^{v-A}F(x)dx\\
			&\geq& v- \int_0^{v-A}\frac{A}{v-x}dx 
      = v+A\ln\left(\frac{A}{v}\right)\\ &\geq&
      v+\frac{v}{e}\ln\left(\frac{1}{e}\right) = \left(1 -
        \frac{1}{e}\right)v,
    \end{eqnarray*}
    where the last inequality is due to the fact that $A\ln(\frac{A}{v})$ is
    minimized for $A=\frac{v}{e}$.
  \end{proof} \end{proof}


\paragraph{Worst-case price distribution}

The CDF $F(x)$ that makes all the inequalities of (the proof of) Lemma
\ref{lem:1.58upper1} tight (see Figure~\ref{fig:F_and_tightF}(b)), is
\[
\hat{F}(x) = \left\{
  \begin{array}{l l}
    \frac{v}{e(v-x)} & \quad \text{, for } x\leq \left(1-\frac{1}{e}\right)v\\
    1 & \quad \text{, for } x > \left(1-\frac{1}{e}\right)v\\
  \end{array} \right.
\]

Observe that for $x\leq \left(1-\frac{1}{e}\right)v$, $\hat{F}(x)(v-x)
= \frac{v}{e}$ and for $x> \left(1-\frac{1}{e}\right)v$,
$\hat{F}(x)(v-x) = v-x < v-(1-\frac{1}{e})v=\frac{v}{e}$.  So, the bid
that maximizes the quantity $\hat{F}(a)(v-a)$ is any value $a \in [0,
\left(1-\frac{1}{e}\right)v]$. 
The given distribution $\hat F$ for $p_i$ makes inequality
\eqref{eq:lambda1} tight. In order to construct a (tight) lower bound for the PoA,
we also need to tighten the inequalities in \eqref{eq:lambda0}. Note
that the inequality of Lemma~\ref{lem:tie} is tight for {\em all} $a \in [0,
\left(1-\frac{1}{e}\right)v].$ Intuitively, we need to construct a Nash
equilibrium, where the CDF of $p_i$ is equal to $\hat{F}(x)$ and $b_i$
doesn't exceed $p_i.$ We present a construction (with many items) in Section~\ref{sec:submodular_LB}.




\begin{remark}\label{rem:discussion-smoothness}
Here we discuss our technique and the smoothness technique that achieves
the same upper bound~\cite{ST13}. In \cite{ST13}, a particular mixed
bidding strategy $A_0$ was defined for each player $i$, such that for
every price $p=\max_{i'\neq i}b_{i'}$, $\E_{A_0}[u_i(A_0,p)]+p \ge
v(1-1/e)$. If we denote
$g(A,F)=\E_{A,F}[u_i(a,p)+p]$, it can be deduced that
$\max_A\min_pg(A,p)\ge v(1-1/e)$. In Lemma~\ref{lem:1.58upper1} we
show that $\min_F\max_ag(a,F)\ge v(1-1/e)$. Moreover, we prove that the inequality is
tight by providing the {\em minimizing} distribution $\hat{F}$, such that
$\max_ag(a,\hat{F})=v(1-1/e)$. By the Minimax Theorem, 
$\min_F\max_ag(a,F)=\max_A\min_pg(A,p)=v(1-1/e)$. 
 One advantage of our approach is that it can be coupled with a worst-case
distribution $\hat{F}$ that serves as an optimality certificate of
the method. Moreover, if one can convert $\hat{F}$ to Nash
Equilibrium strategy profile (see Section~\ref{sec:submodular_LB}), a tight Nash equilibrium construction is
obtained; this can be a challenging task, though.   
\end{remark}

\subsection{Tight Lower Bound}
\label{sec:submodular_LB}

Here we present a tight lower bound of ${\frac{e}{e-1}}$ for the mixed PoA in
simultaneous first price auctions with OXS valuations. This
implies a lower bound for submodular and fractionally subadditive valuations. 

\begin{theorem} The price of anarchy of simultaneous first price
  auctions with full information and OXS valuations is at least
  ${\frac{e}{e-1}}\approx 1.58.$
\end{theorem}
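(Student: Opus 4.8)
The plan is to build a full-information instance with $n$ players and $m$ items (both growing large) in which a carefully designed mixed Nash equilibrium yields social welfare arbitrarily close to $(1-1/e)$ times the optimum. The guiding principle comes from the worst-case price distribution $\hat F$ identified after the upper-bound proof: for a single item of value $v$, the CDF $\hat F(x)=\frac{v}{e(v-x)}$ on $[0,(1-1/e)v]$ makes the key inequality tight, the winner's bid never exceeds the running price, and the optimal bidder's expected utility plus the expected price equals exactly $(1-1/e)v$. So the construction should arrange that for each item the highest competing bid is distributed (essentially) like $\hat F$, that in equilibrium bidders only ever bid below this price, and that the optimal allocation assigns each item to a bidder who values it at roughly $v$ while the equilibrium allocation collects only a $(1-1/e)$ fraction.

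The concrete steps I would carry out: first, set up the valuations. Use OXS valuations so that each player is built from a small number of unit-demand clients; the optimal allocation gives each of a large set of ``target'' items to a distinct high-value owner, while the symmetric structure forces every player's equilibrium strategy to spread bids thinly across many items. I would make all items symmetric (value normalized to $1$ for the relevant owners) and use $n$ players in a symmetric role so that by symmetry each plays the same mixed strategy $B$; then the highest-of-the-others bid $p_i$ on any item has a CDF that is a product of $n-1$ identical marginals. Second, choose the per-player, per-item bid distribution $G$ so that the $(n-1)$-fold product $G^{n-1}$ approximates $\hat F$ as $n\to\infty$ — concretely $G(x)=\hat F(x)^{1/(n-1)}$, which tends pointwise to $1$ but whose product is exactly $\hat F$. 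Third, verify the Nash condition: since $\hat F(a)(v-a)$ is constant on the support, any deviation bid gives the deviating player the same expected utility, so no profitable deviation exists; one must also check that a bidder never wants to bid above $(1-1/e)v$ (there the curve $\hat F(x)(v-x)=v-x$ is strictly decreasing, so overbidding is strictly worse) and that the ``losing'' clients of a player have nothing to gain. Fourth, compute the equilibrium welfare: each item is won by whoever has the top bid; the expected value extracted is $\mathbb E[\text{(value of winner)}]$, and by the tightness of the $\hat F$ analysis this comes out to $(1-1/e)$ per item, while $\opt$ is $1$ per item, giving the ratio $e/(e-1)$ in the limit.

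The main obstacle I expect is the combinatorial bookkeeping that turns the single-item picture into a genuine multi-item OXS instance with a valid equilibrium: a single-item first-price auction with a non-degenerate price distribution $\hat F$ is \emph{not} itself a Nash equilibrium of a single-item game (some bidder would have an incentive to shade or jump), so the role of the many items is precisely to let each player's ``budget'' of bidding mass be allocated across items in a way that makes the marginal strategy on each item incentive-compatible. Getting the numbers of players, items, and unit-demand clients per player to line up — so that symmetry holds, so that $G^{n-1}\to\hat F$, and so that the optimal allocation really does achieve value $\approx 1$ per item while no player can profitably reassign their bidding mass — is the delicate part. I would also need to handle tie-breaking carefully (the construction should use continuous distributions so ties have probability zero, matching Lemma~\ref{lem:tie}'s continuous case) and to argue the limit cleanly: present the instance for parameter $N$, show the PoA is at least $\frac{e}{e-1}-o(1)$, and conclude by letting $N\to\infty$. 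A secondary subtlety is confirming that no player benefits from deviating on \emph{several} items at once; because utilities are additive over items for the unit-demand components and each per-item deviation is already non-improving, this should follow, but it needs to be stated.
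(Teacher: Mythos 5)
Your high-level plan matches the paper's: identify $\hat F$ as the target highest-opponent-bid distribution, arrange per-item marginal bid CDFs whose $(n-1)$-fold product is (approximately) $\hat F$, use an OXS valuation built from unit-demand ``clients'' with a symmetric role for the bidders, and take a large-parameter limit. The paper's instance has $n$ ``real'' players plus a dummy, $n^n$ items indexed by $[n]^n$, player $i$'s value being the size of the projection of her bundle in direction $i$, and a strategy of picking a random $(n-1)$-dimensional slice and bidding a single random amount on all of its items. So you are on the right track structurally.

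However there are two genuine gaps. First, you never identify \emph{where the welfare loss comes from}. You assert that ``by the tightness of the $\hat F$ analysis'' the equilibrium extracts $(1-1/e)$ of the value per item, but in a construction with only symmetric real bidders, every item is awarded to someone who values it, and you would obtain $\mathrm{PoA}\approx 1$, not $e/(e-1)$. The paper's fix is a \emph{dummy bidder} with zero value who always bids $0$ and wins all ties at $0$: each real player bids positively on any given item with probability only $1/n$, so with probability $((n-1)/n)^n\to 1/e$ an item is thrown away on the dummy. This is the sole source of inefficiency, and it is exactly what makes the $\E[b_{\max}]\ge\E[p_i]$ step of the upper-bound argument tight (the winner's bid never exceeds the highest competing bid, since the dummy never wins when a real bid is placed). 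Without this device your expected welfare is not $(1-1/e)$ per item. Second, your deviation analysis has a flaw: you argue that multi-item deviations are covered ``because utilities are additive over items for the unit-demand components,'' but within a single unit-demand group (a ``column'' in the paper's grid) the valuation is \emph{not} additive while the payment \emph{is}, so a two-item positive bid can cost twice while gaining once. The paper handles this explicitly by showing that bidding positively on more than one item of a column is strictly dominated by bidding only the largest of those amounts; this argument does not reduce to ``each per-item deviation is non-improving'' and must be made separately.
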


\begin{proof}
  We construct an instance with $n+1$ players and $n^n$ items. 
  We define the set of items as $M=[n]^n,$ that is,
  they correspond to all the different vectors $w=(w_1, w_2, ... ,
  w_n)$ with $w_i\in [n]$ (where $[n]$ denotes the set of integers
  $\{1,\ldots,n\}$). Intuitively, they can be thought of as the
  nodes of an $n$ dimensional grid, with coordinates in $[n]$ in each
  dimension.

  We call player $0$ the \emph{dummy} player, and all other players $i\in [n]$ \emph{real} players.
  We associate each \emph{real} player $i$ with one of the
  dimensions (directions) of the grid. In particular, for any fixed
  player $i,$ his valuation for a subset of items $S\subseteq M$ is the
  size (number of elements) in the $n-1$-dimensional projection of $S$
  in direction $i.$ Formally, $$v_i(S)=|\{w_{-i}\,|\,\exists w_i\,\,
  s.t. \,\,(w_i,w_{-i})\in S \}|.$$ It is straightforward to check
  that $v_i$ has decreasing marginal values, and is therefore
  submodular\footnote{These valuations are also OXS. In the definition of OXS valuations (Section \ref{sec:pre}), we set
	$k=n^{n-1}$ and for the unit-demand valuations corresponding to player $i$ the following holds: 
	if item $j$ corresponds to $w=(w_1, w_2,...,w_n)$ then for $r \in [k]$, 
	$v_i^r({j})=1$, if $w_{-i}$ is the $n$-ary representation of 
	$r$ and $v_i^r({j})=0$, otherwise.}. 
	The  \emph{dummy} player $0$ has
  valuation $0$ for any subset of items.

  Given these valuations, we describe a mixed Nash equilibrium
  $\B=(B_1,\ldots ,B_n)$ having a PoA arbitrarily close to $e/(e-1),$
  for large enough $n.$ The dummy player bids $0$ for every item, and
  receives the item if all of the real players bid $0$ for it. The
  utility and welfare of the dummy player is always $0.$ For real
  players the mixed strategy $B_i$ is the following. Every player $i$
  picks a number $\ell\in [n]$ uniformly at random, and an $x$
  according to the distribution with CDF
$$G(x)=(n-1)\left(\frac{1}{\left(1-x\right)^{\frac{1}{n-1}}}-1\right),$$
where $x\in\left[0,1-\left(\frac{n-1}{n}\right)^{n-1}\right]$.
Subsequently, he bids $x$ for every item $w=(\ell, w_{-i})$, with
$w_i=\ell$ as $i^{th}$ coordinate, and bids $0$ for the rest of the
items, see Figure~\ref{fig:exampleLBsub} for the cases of $n=2$ and $n=3$.
That is, in any $b_i$ in the support of $B_i,$ the player bids a
positive $x$ only for an $n-1$ dimensional slice of the items. Observe
that $G(\cdot)$ has no mass points, so tie-breaking matters only in case of
$0$ bids for an item, in which case player $0$ gets the item.

\begin{figure}
\begin{tabular}{c c c}
\includegraphics[scale=0.25]{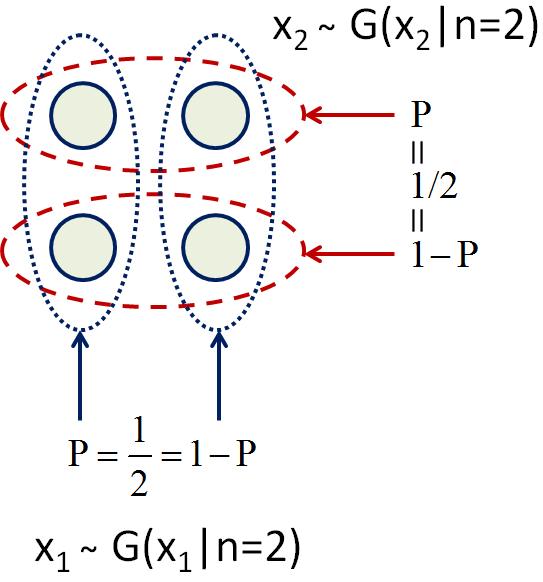} &
\includegraphics[scale=0.25]{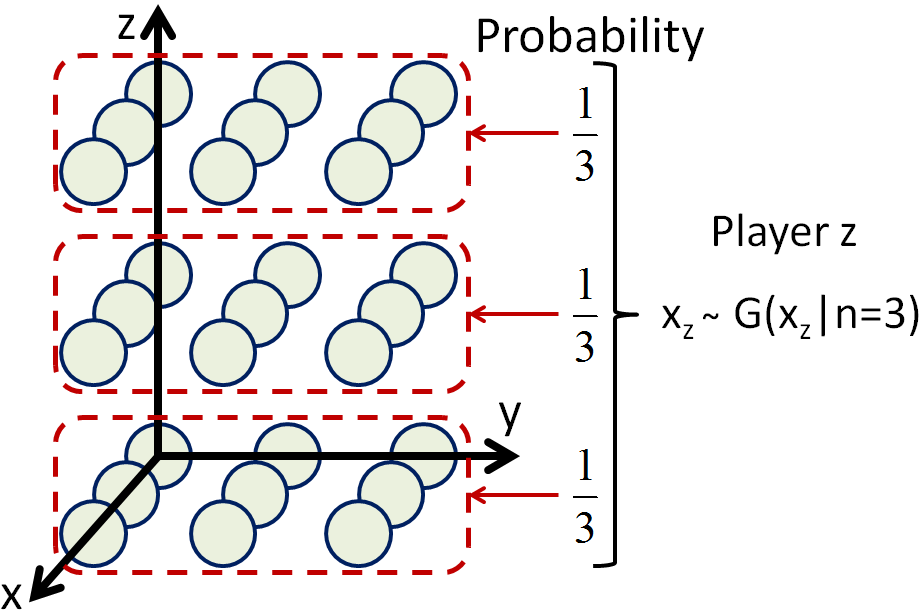} &
\includegraphics[scale=0.25]{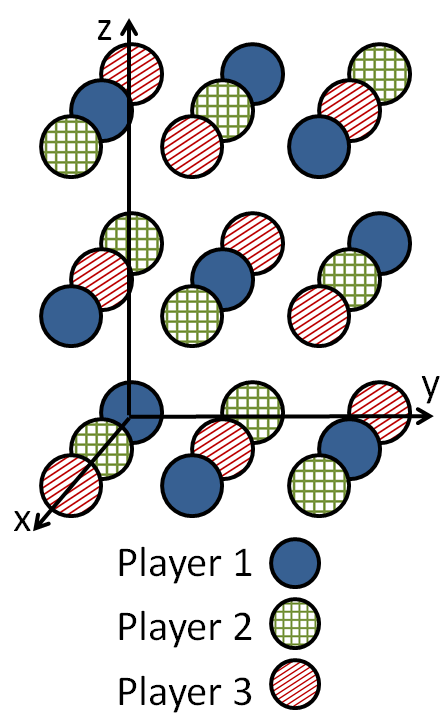}\\
$(a)$&$(b)$&$(c)$
\end{tabular}
 \caption{The figure illustrates the cases $n=2$ and $n=3$ ($(a)$ and $(b)$ respectively) for the
      lower bound example with submodular valuation functions. In $(c)$ an optimal allocation for the case $n=3$ is shown.}
    \label{fig:exampleLBsub}
\end{figure}

  Let $F(x)$ denote the probability that bidder $i$ gets a fixed item
  $j$, given that he bids $b_i(j)=x$ for this item, and the bids in
  $\b_{-i}$ are drawn from $\B_{-i}$ (due to symmetry, this
  probability is the same for all items $w=(\ell, w_{-i})$). For every
  other player $k,$ the probability that he bids $0$ for item $j$ is
  $(n-1)/n,$ and the probability that $j$ is in his selected slice but
  he bids lower than $x$ is $G(x)/n.$ Multiplying over the $n-1$ other
  players, we obtain
  \[F(x)= \left(\frac{G(x)}{n}+\frac{n-1}{n}\right)^{n-1}=
  \frac{\left(\frac{n-1}{n}\right)^{n-1}}{1-x}.\] Notice that $v_i$ is
  an additive valuation restricted to the slice of items that player
  $i$ bids for in a particular $b_i.$ Therefore, when player $i$ bids $x$ in $b_i,$ his expected utility
   is $F(x)(1-x)$ for one of these
  items, and comprising all items it is $\E[u_i(b_i)] =n^{n-1}
  F(x)\cdot(1-x)=n^{n-1} \left(\frac{n-1}{n}\right)^{n-1}=(n-1)^{n-1}$.

  Next we show that $\B$ is a Nash equilibrium. In particular, the
  bids $b_i$ in the support of $B_i$ maximize the expected utility of
  a fixed player $i.$

  First, we fix an arbitrary $w_{-i},$ and focus on the set of items
  $C:=\{(\ell,w_{-i})\,|\, \ell \in [n]\},$ which we call a
  \emph{column} for player $i.$ Recall that $i$ is interested in
  getting only one item within $C,$ while his valuation is
  additive over items from different columns. Moreover, in a fixed
  $\b_{-i}$, every other player $k$ submits the same bid for all items
  in $C,$ because either the whole $C$ is in the current slice of $k,$
  and he bids the same value $x$, or no item from the column is in the
  slice and he bids $0.$ Consider first a deviating bid, in which $i$
  bids a positive value for more than one items in $C,$ say (at least)
  the values $x\geq x'>0$ where $x$ is his highest bid in $C.$ Then
  his expected utility for this column is strictly less than
  $F(x)(1-x),$ because his value is $F(x)\cdot 1,$ but he might have
  to pay $x+x',$ in case he gets both items. Consequently, bidding $x$
  for only one item in $C$ and $0$ for the rest of $C$ is more
  profitable.

  Second, observe that restricted to a fixed column, submitting any
  bid $x\in\left[0,1-\left(\frac{n-1}{n}\right)^{n-1}\right]$ for one
  arbitrary item results in the constant expected utility of $
  \left(\frac{n-1}{n}\right)^{n-1}$, whereas a bid higher than
  $1-\left(\frac{n-1}{n}\right)^{n-1}$ guarantees the item but pays
  more so the utility becomes strictly less than $
  \left(\frac{n-1}{n}\right)^{n-1}$ for this column. In summary,
  bidding for exactly one item from each column, an arbitrary
  (possibly different) bid
  $x\in\left[0,1-\left(\frac{n-1}{n}\right)^{n-1}\right]$ is a best
  response for $i$ yielding the above expected utility, which
  concludes the proof that $\B$ is a Nash equilibrium.

  It remains to calculate the expected social welfare of $\B,$ and the
  optimal social welfare.  We define a random variable w.r.t. the
  distribution $\B.$ Let $Z_j=1$ if one of the real players $1,\ldots
  ,n$ gets item $j,$ and $Z_j=0$ if player $0$ gets the item. Note that the social welfare is the random variable $\sum_{j\in M}Z_j,$ and the
  expected social welfare is
  \[\E_{\b\sim\B}[SW(\b)]=\sum_{j}E[Z_j]\cdot 1=n^n(1-Pr(\text{no real
    player bids for j})) =
  n^n\left(1-\left(\frac{n-1}{n}\right)^{n}\right).\] 
  Finally, we show
  that the optimum social welfare is $n^n.$ An optimal allocation can
  be constructed as follows: For each item $(w_1,w_2, ...,w_n)$
  compute $r=\left(\sum_{i=1}^n w_i \mod n\right)$. Allocate this item
  to the player $r+1.$ It is easy to see that this way the $n$ items of \emph{any} particular column $\{(\ell,w_{-k})\,|\, \ell \in [n]\}$ (in any direction $k$) are given to the $n$ different players, and that each player is
  allocated $n^{n-1}$ items (Figure \ref{fig:exampleLBsub}$(c)$ shows
 the optimum allocation for $n=3$). In other words, any two items allocated to
  the same player differ in at least two coordinates. In particular,
  they belong to different columns of \emph{this player}, and all contribute
  $1$ to the valuation of the player, which is therefore $n^{n-1}.$
  Since this valuation is maximum possible for every player, the
  obtained social welfare of $n^{n}$ is optimal.

  Thus, the Price of Anarchy is
  $\frac{1}{\left(1-\left(\frac{n-1}{n}\right)^{n}\right)}$, and for
  large $n$ it converges to
  $\frac{1}{\left(1-\frac{1}{e}\right)}\approx 1.58$.
\end{proof}


\section{Subadditive valuations}
\label{sec:subadditive_LB}

Here we show a lower bound of $2$ on the mixed PoA when players 
have subadditive valuations. This lower bound matches 
the upper bound by Feldman et al.~\cite{FFGL13}.

\begin{theorem} \label{thm:lower_suba} The mixed PoA of
  simultaneous first price auctions with subadditive bidder
  valuations is at least $2$.
\end{theorem}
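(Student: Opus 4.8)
The plan is to mimic the structure of the submodular lower bound of the previous section, but with two bidders and identical items (a multi-unit instance), replacing the additive-within-a-slice structure by a genuinely subadditive valuation that saturates as soon as the bidder wins "enough" items. First I would take $m$ items, all identical, and two players with the same subadditive valuation $v$ depending only on the number of items received; the natural candidate is a valuation that is $1$ as soon as at least one item is obtained (a single-minded/unit-demand-type value) so that $v(S) = \min\{|S|,1\}$ is trivially subadditive, or more flexibly $v(S)=\min\{|S|,k\}$ for a suitable threshold $k$; the optimum social welfare is then the obvious partition giving each player half the items (welfare $2$ in the unit-demand normalization, or $2k$), while in the equilibrium most of the welfare is "wasted" because both players concentrate their bidding mass in overlapping ways.

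Next I would construct the symmetric mixed Nash equilibrium. Each player's strategy will be: pick a random subset (or a random single item, or a random "block" of items) and bid a random amount $x$ drawn from a carefully chosen CDF $G$ on that subset, and $0$ elsewhere — directly echoing the worst-case price distribution $\hat F$ from Lemma~\ref{lem:1.58upper1}, except now tuned so that the equilibrium condition forces the win-probability profile to make every bid in the support a best response, and the resulting expected social welfare is (close to) half the optimum. The key design constraint is: for a deviating player, winning against the opponent's bid $p$ on the relevant items yields value capped by subadditivity, so the marginal value of securing one more item collapses after the threshold; this is exactly what kills the incentive to "cover" many items and pins down $G$. I would verify the Nash conditions by the same two-part argument as in the submodular proof: (i) bidding positively on more items than needed is strictly dominated because payments add up but value saturates, and (ii) along a single "relevant" item the expected utility is constant over the support of $G$ and strictly worse above it.

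Finally I would compute both welfares: $SW(\O)$ from the balanced allocation, and $\E_{\b\sim\B}[SW(\b)]$ as $\sum_j \Pr[\text{some player gets a positive-value-generating item}]$, and show the ratio tends to $2$ as $m\to\infty$ (or as the threshold parameter grows). The main obstacle I anticipate is choosing the valuation and the bidding distribution so that the equilibrium is exact (not merely approximate) and genuinely subadditive while still forcing the factor-$2$ loss: a value that is too "rigid" (pure unit-demand) may make the equilibrium collapse to something efficient, whereas a value that is too close to additive won't give a loss worse than $e/(e-1)$. The delicate point is thus calibrating the saturation threshold against the support of $G$ so that each player, in equilibrium, effectively competes for a vanishing fraction of what the optimum extracts — I expect the right choice to be a threshold-$k$ valuation with $k$ growing slowly and the bid CDF analogous to $\hat F$ rescaled, and the verification of the Nash inequalities for deviations that spread positive bids across several blocks to be the most calculation-heavy step.
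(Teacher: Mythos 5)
The high-level mechanics you sketch (two bidders, one effectively bidding on a random single item, the other bidding a common value on everything, both from CDFs calibrated so that equilibrium utilities are constant on the support, and PoA $\to 2$ as $m\to\infty$) do match the paper's construction. But the crucial ingredient — the choice of valuation — is wrong, and the gap is fatal.

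Every valuation in your candidate family $v(S)=\min\{|S|,k\}$ is fractionally subadditive (XOS): take $f_\gamma$ to be the indicator of a $k$-subset $\gamma$; then $\max_\gamma f_\gamma(S)=\min\{|S|,k\}$. Since the BPoA of FPAs is at most $e/(e-1)<2$ for XOS bidders (this is precisely the upper bound re-derived in Section~\ref{sec:upper}), no equilibrium you construct with such valuations can exhibit a loss factor of $2$. Your own warning that ``too rigid'' or ``too additive'' won't work is exactly the issue, but there is no value of $k$ that escapes the XOS class, so calibrating a threshold cannot fix it. Likewise the symmetric, half-and-half optimum you posit is not what the paper uses: the optimum in the paper hands the \emph{entire} bundle to a single player.

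What the paper's construction actually exploits is a valuation that is subadditive but not XOS: one bidder is unit-demand with a small value $v$, the other has value $1$ for \emph{any} nonempty proper subset and value $2$ only for the full set of $m$ items. This ``jump'' at the grand bundle is exactly what separates subadditive from XOS (for $m>2$ no XOS cover can simultaneously give value $1$ on all proper subsets and $2$ on the whole), and it is what lets the optimum be $2$ (give everything to the second bidder) while the equilibrium welfare collapses to roughly $1+v$ because the unit-demand player wins one item with probability close to $1$. The construction is also deliberately asymmetric: the unit-demand player randomizes over which single item to bid on, the other bids a common amount on all items, and CDFs $G(x)=\frac{(m-1)x}{1-x}$, $F(y)=\frac{v-1/m}{v-y}$ on $[0,1/m]$ make both players' utilities constant on the support. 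Setting $v=1/\sqrt m$ and letting $m\to\infty$ gives PoA $\to 2$. To repair your proposal you must replace the threshold valuation by one with this grand-bundle jump (and drop the symmetry and the balanced optimum); the rest of your outline then carries through.
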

\begin{proof}
  Consider two players and $m$ items with the following valuations:
  player $1$ is a unit-demand player with valuation $v<1$ 
	(to be determined later) if she gets
  at least one item; player $2$ has valuation $1$ for getting at least one but less
  than $m$ items, and $2$ if she gets all the items.
  Inspired by \cite{HKMN11}, we use the following distribution
  functions:

$$G(x)=\frac{(m-1)x}{1-x} \qquad x\in [0,1/m]; \qquad\qquad F(y) = \frac{v-1/m}{v-y}\qquad y\in [0,1/m]. $$
Player $1$ picks one of the $m$ items uniformly at random, and bids
$x$ for this item and $0$ for all other items. Player $2$ bids $y$ for
each of the $m$ items. The bids $x$ and $y$ are drawn from distributions with CDF $F(x)$ and $G(y),$
respectively.  In the case of a tie, the item is always allocated to
player $2$.

Let $\B$ denote this mixed bidding profile.  We are going to prove
that $\B$ is a mixed Nash equilibrium for every $v>1/m.$

If player $1$ bids any  $x$ in the range $(0,1/m]$ for the one item,
she gets the item with probability $F(x)$, since a tie appears with
zero probability. Her expected utility for $x\in (0,1/m]$ is $F(x)(v-x)=v-1/m$. 
Thus if player $1$ picks $x$ randomly
according to $G(x)$,  her utility is  $v-1/m$ (note that according to $G(x)$ she bids $0$
with zero probability). Bidding something greater than $1/m$ results in
a utility less than $v-1/m$. Regarding player $1$, it remains to show
that her utility while bidding for only one item is at least her
utility while bidding for more items. Suppose player $1$ bids $x_i$
for item $i$, $1\leq i \leq m$. W.l.o.g., assume $x_i \geq x_{i+1}$,
for $1\leq i \leq m-1$. Player $1$ doesn't get any item if and only if
$y \geq x_1.$ So, with probability $F(x_1)$, she gets at least one
item and she pays at least $x_1$. Therefore, her expected utility is
at most $F(x_1)(v-x_1)=v-1/m$, (but it would be strictly less if she 
is charged nonzero payments for other items). This means
that bidding only $x_1$ for one item and zero for the rest of them
dominates the strategy we have assumed.

If player $2$ bids a common bid $y$ for all items, where
$y \in[0,1/m]$, she gets $m$ items with probability $G(y)$ and $m-1$
items with probability $1-G(y)$. Her expected utility is
$G(y)(2-my)+(1-G(y))(1-(m-1)y)=G(y)(1-y)+1-(m-1)y=1$. 
We show that player $2$ cannot get a utility
higher than $1$ by using any deviating bids. Suppose now that player
$2$ bids $y_i$ for item $i$, for $1\leq i \leq m$.  Player $1$ bids
for item $i$ (according to $G(x)$) with probability $1/m.$ We also use that since $G$ is a CDF, for $x>1/m$ holds that $G(x)=1<\frac{(m-1)x}{1-x}.$  So, the
expected utility of player $2$ is

\begin{eqnarray*}
  &&\frac{1}{m}\sum_{i=1}^m \left( G(y_i)\left(2-\sum_{j=1}^m y_j
    \right) +(1-G(y_i))\left(1-\sum_{\substack{j=1\\j\neq i}}^m y_j\right) \right) \\
  &=& \frac{1}{m}\sum_{i=1}^m \left( G(y_i)(1- y_i )+1-\sum_{\substack{j=1\\j\neq i}}^m y_j \right)\\
  &\leq& \frac{1}{m}\sum_{i=1}^m \left(\frac{(m-1)y_i}{1-y_i}(1- y_i )+1-\sum_{\substack{j=1\\j\neq i}}^m y_j \right)\\
  &=& \frac{1}{m}\sum_{i=1}^m \left( my_i +1-\sum_{j=1}^m y_j \right)\\
  &=& \frac{1}{m}\left( m\sum_{i=1}^m y_i +m-m\sum_{j=1}^m y_j \right) = 1.
\end{eqnarray*}
Overall, we proved that $\B$ is a mixed Nash equilibrium.  

It is easy
to see that the optimal allocation gives all items  to player 2, and has social welfare $2.$ In
the Nash equilibrium $\B,$ player $2$ bids $0$ with probability
$1-\frac{1}{mv}$, so, with at least this probability, player $1$ gets
one item.

\begin{eqnarray*}
  SW(\B)\leq \left(1-\frac{1}{mv} \right)(v+1)+\frac{1}{mv} 2 =1+v+\frac{1}{mv}-\frac{1}{m}
\end{eqnarray*}
If we set $v=1/\sqrt{m}$, then $SW(\B) \leq
1+\frac{2}{\sqrt{m}}-\frac{1}{m}$. So, PoA$ \geq
\frac{2}{1+\frac{2}{\sqrt{m}}-\frac{1}{m}}$ which, for large $m$,
converges to 2.
\end{proof}


\section{Additive valuations}
\label{sec:additive}

For additive valuations, we show that 
mixed Nash equilibria are {\em efficient}, whenever they exist. This implies an interesting 
separation between mixed equilibria with full
information and Bayesian
equilibria, that are known not to be efficient~\cite{Kri02}. For the sake of completeness, we present a 
lower bound of $1.06$ for the Bayesian PoA of  single-item auctions, in Appendix \ref{sec:single_LB_Bayes}.

\subsection{The PoA for single item auctions is $1$}
\label{sec:single_mixed}

We consider a first-price single-item auction, where the valuations
of the players for the item are given by $(v_1,v_2,\ldots,v_n).$ We
show that the PoA in mixed strategies is 1. For any mixed Nash
equilibrium of strategies $\B=(B_1,B_2,\ldots,B_n),$ let $B_i$ denote
the probability measure of the distribution of bid $b_i;$ in
particular, $B_i(I)=\P[b_i\in I]$ for any real interval $I.$ The
corresponding cumulative distribution function (CDF) of $b_i$ is
denoted by $G_i$ (i.e., $G_i(x)=B_i((-\infty,x])$). Recall that 
for a given $\B,$ for every bidder $i\in
  [n],$  $F_i(b_i)$ denotes the CDF of $max_{j\neq i} b_{j}$.
We also use $\varphi_i(b_i)$ to denote the probability that player $i$ gets the item
with bid $b_i.$ Note that $\varphi(b_i)\le F_i(b_i),$ due to a possible tie
at $b_i.$ 
Therefore, if he bids $b_i,$ then his expected utility is
$$\E[u_i(b_i)]=\E_{\b_{-i}\sim
  \B_{-i}}[u_i(b_i)]=\varphi_i(b_i)(v_i-b_i)\le F_i(b_i)(v_i-b_i).$$
Let $\E[u_i]=\E_{\b\sim \B}[u_i]$ denote his overall expected utility defined by the (Lebesgue) integral
$\E[u_i]=\int_{(-\infty,\infty)} \varphi_i(x)(v_i-x)dB_i.$
Furthermore, assuming $\P[b_i\in I]>0$ for some interval $I,$ let
$\E[u_i| b_i\in I]=\E_{\b\sim \B}[u_i| b_i\in I]$ be the expected
utility of $i,$ on condition that his bid is in $I.$ By
definition 

\[\E[u_i| b_i\in I]=\frac{\int_{I}
  \varphi_i(x)(v_i-x)dB_i}{\P[b_i\in I]}\le\frac{\int_{I}
  F_i(x)(v_i-x)dB_i}{\P[b_i\in I]}.\]
  
The next lemma follows from the definition of (mixed) Nash equilibria. It states that in equilibrium the excepted utility of any player $i$, conditioned on the event $b_i\in I,$ must be equal to his overall expected utility, given that he bids with positive probability in the interval $I$.

\begin{lemma}\label{lem:utility} In any mixed Nash equilibrium $\B,$
  for every player $i$ holds that if $\P[b_i\in I]>0$ , then $\E[u_i|
  b_i\in I]=\E[u_i].$
\end{lemma}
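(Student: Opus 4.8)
The plan is to prove Lemma~\ref{lem:utility} by a standard ``no profitable deviation'' argument, exploiting that a mixed Nash equilibrium puts positive probability only on best responses. First I would recall that the overall expected utility of player $i$ is an average of the conditional utilities over a partition of the real line into the support of $B_i$: writing $I$ for an interval with $\P[b_i\in I]>0$ and $I^c$ for its complement (restricted to the support), linearity of expectation gives
\[
\E[u_i]=\P[b_i\in I]\cdot\E[u_i\mid b_i\in I]+\P[b_i\in I^c]\cdot\E[u_i\mid b_i\in I^c].
\]
So it suffices to show the two inequalities $\E[u_i\mid b_i\in I]\le\E[u_i]$ and $\E[u_i\mid b_i\in I]\ge\E[u_i]$; together with the displayed identity they force equality (when $\P[b_i\in I]<1$; the case $\P[b_i\in I]=1$ is trivial since then the conditional utility \emph{is} $\E[u_i]$).

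For the direction $\E[u_i\mid b_i\in I]\le\E[u_i]$, I would argue by contradiction: if the conditional utility on $I$ strictly exceeded $\E[u_i]$, then because $\E[u_i\mid b_i\in I]$ is itself an average (with respect to $B_i$ conditioned on $I$) of the pointwise quantities $\E[u_i(x)]=\varphi_i(x)(v_i-x)$ for $x\in I$, there must be some bid $x^\star\in I$ (in the support) with $\E[u_i(x^\star)]\ge\E[u_i\mid b_i\in I]>\E[u_i]$. Deviating to the pure bid $x^\star$ then yields $\E_{\b_{-i}\sim\B_{-i}}[u_i(x^\star,\b_{-i})]=\E[u_i(x^\star)]>\E[u_i]$, contradicting the Nash condition. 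One technical subtlety here is that ``there exists a point in the support exceeding the average'' needs the support of $B_i$ restricted to $I$ to be nonempty and $\E[u_i(\cdot)]$ to be $B_i$-measurable, which it is; strictly one takes $x^\star$ so that $\E[u_i(x^\star)]$ is within $\varepsilon$ of the essential supremum and lets $\varepsilon\to 0$, but since we only need \emph{strictly} greater than $\E[u_i]$ and the average already is, a single such point exists.

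For the reverse direction $\E[u_i\mid b_i\in I]\ge\E[u_i]$, I would suppose instead that $\E[u_i\mid b_i\in I]<\E[u_i]$. Then the complementary conditional utility $\E[u_i\mid b_i\in I^c]$ must strictly exceed $\E[u_i]$ (again by the weighted-average identity, using $\P[b_i\in I]>0$ so the low term has positive weight). Now the same extraction argument applied to $I^c$ produces a pure bid $x^\star\in I^c$ (in the support) with $\E[u_i(x^\star)]>\E[u_i]$, again a profitable deviation contradicting equilibrium. Hence both inequalities hold and $\E[u_i\mid b_i\in I]=\E[u_i]$.

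The main obstacle is not conceptual but measure-theoretic bookkeeping: one must be careful that the pointwise expected-utility function $x\mapsto\varphi_i(x)(v_i-x)$ is integrable against $B_i$, that conditioning on an interval of positive measure is well defined, and that the ``some point beats the average'' step is justified (it is, because an integrable function cannot lie strictly below its own average almost everywhere). Everything else is the routine observation that in a mixed equilibrium every bid in the support is a best response, so all conditional (and hence pointwise-on-support) utilities must coincide with the overall expected utility; the lemma is essentially the indifference principle for mixed equilibria, and I would present it exactly in that spirit.
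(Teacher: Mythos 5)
Your proof is correct and establishes the same indifference principle the paper uses, via a slightly different deviation. The paper argues by contradiction with a \emph{mixed} deviation: if $\E[u_i\mid b_i\in I]>\E[u_i]$, player $i$ would be strictly better off bidding only in $I$, i.e.\ deviating to the conditional distribution $B_i'(I')=B_i(I')/B_i(I)$, and symmetrically restricting to the complement if $\E[u_i\mid b_i\in I]<\E[u_i]$. You instead extract a single pure bid $x^\star$ in the support whose pointwise expected utility strictly exceeds $\E[u_i]$ (justified correctly by the observation that a $B_i$-integrable function cannot lie strictly below its own average almost everywhere) and deviate to that. The two arguments are essentially the same in spirit; yours has the mild advantage of producing exactly the kind of pure deviating bid $b_i'$ that the paper's formal definition of mixed Nash equilibrium quantifies over, whereas the paper's mixed-deviation argument tacitly invokes the standard fact that absence of profitable pure deviations implies absence of profitable mixed deviations. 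Your explicit weighted-average decomposition and the $\P[b_i\in I]=1$ edge case are fine, just a bit more bookkeeping than the paper bothers to write out.
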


\begin{proof} Assume that $\E[u_i| b_i\in I]>\E[u_i].$ Then, the
  player would be better off by submitting bids only in the interval
  $I$ (according to the distribution $B_i'(I')= B_i(I')/B_i(I)$ for
  all $I'\subset I$). If $\E[u_i| b_i\in I]<\E[u_i],$ the proof is
  analogous: in this case the player would be better off by bidding
  outside the interval. Both cases would contradict $\B$ being a Nash
  equilibrium.
  \end{proof}

In the next lemma we show that for any two players with positive
utility, the infimum of their bids must be equal, and they both bid
higher than this value with probability $1.$ Intuitively, if a player
has non-zero utility, then his lowest possible bid cannot be lower than
any player's lowest bid.

\begin{lemma}\label{lem:lowestbid} Assume that in a mixed Nash
  equilibrium $\B$ there are bidders $i$ and $j,$ with positive
  utilities $\E[u_i]>0,$ and $\E[u_j]>0.$ Let $q_i=\inf_x\{G_i(x)>
  0\}$ and $q_j=\inf_x\{G_j(x)> 0\}.$ Then $q_i=q_j=q,$ and
  $G_i(q)=G_j(q)=0,$ consequently $F_i(q)=F_j(q)=0.$
\end{lemma}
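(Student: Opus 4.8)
The plan is to split the argument into two parts: first show that the infima coincide, $q_i=q_j=:q$, and then show that $q$ carries no atom for either player, from which $G_i(q)=G_j(q)=0$ and $F_i(q)=F_j(q)=0$ follow immediately.

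For the first part I would argue by contradiction, assuming without loss of generality $q_i>q_j$ and looking at the interval $I=[q_j,q_i)$. On the one hand $\P[b_j\in I]>0$: by definition of $q_j$ we have $\P[b_j<q_j]=0$, while $G_j(x)>0$ for every $x\in(q_j,q_i)$, so $\P[b_j<q_i]>0$. On the other hand, for every $x<q_i$ the definition of $q_i$ gives $G_i(x)=0$, hence $F_j(x)=\prod_{k\neq j}G_k(x)=0$ throughout $I$ (the factor $G_i$ vanishes there), so $\varphi_j(x)\le F_j(x)=0$ for $x\in I$ and therefore $\E[u_j\mid b_j\in I]=0$. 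By Lemma~\ref{lem:utility} this equals $\E[u_j]$, contradicting $\E[u_j]>0$. Thus $q_i=q_j=q$, and in particular $\P[b_i<q]=\P[b_j<q]=0$.

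For the second part, suppose toward a contradiction that $\P[b_i=q]=\alpha>0$, i.e. $G_i(q)=\alpha>0$. If player $j$ puts no mass at $q$, i.e. $G_j(q)=0$, then $\P[\max_{k\neq i}b_k\le q]=0$, so a bid of exactly $q$ by player $i$ wins with probability $0$; hence $\E[u_i\mid b_i=q]=0$ and Lemma~\ref{lem:utility} forces $\E[u_i]=0$, a contradiction. So player $j$ must also have an atom at $q$, say $\P[b_j=q]=\beta>0$. I would then compare, for player $i$, the pure bids $q$ and $q+\epsilon$: since $\P[b_j<q]=0$ we get $\varphi_i(q)\le F_i(q)$ with equality only if $i$ wins (a.s.) every tie occurring at level $q$, while a bid $q+\epsilon$ wins whenever $\max_{k\neq i}b_k<q+\epsilon$, so $\varphi_i(q+\epsilon)\ge F_i(q)$. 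Since $\E[u_i]>0$ forces $v_i>q$, letting $\epsilon\to0$ in the equilibrium inequality $\E[u_i(q+\epsilon)]\le\E[u_i]$ and using $\E[u_i]=\varphi_i(q)(v_i-q)$ (Lemma~\ref{lem:utility}) yields $F_i(q)(v_i-q)\le\varphi_i(q)(v_i-q)$, hence $\varphi_i(q)=F_i(q)$: player $i$ wins every level-$q$ tie almost surely. The symmetric argument gives $\varphi_j(q)=F_j(q)$, so player $j$ also wins every level-$q$ tie almost surely. But $F_i(q)=G_j(q)\cdot\P[\max_{k\neq i,j}b_k\le q]=\beta\cdot\P[\max_{k\neq i,j}b_k\le q]>0$, so the event $\{b_i=q=b_j\}\cap\{\max_{k\neq i,j}b_k\le q\}$ has positive probability, and on it $i$ and $j$ are simultaneously among the highest bidders — they cannot both receive the item with probability one. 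This contradiction shows $\P[b_i=q]=0$, i.e. $G_i(q)=0$; by symmetry $G_j(q)=0$; and then $F_i(q)=\prod_{k\neq i}G_k(q)\le G_j(q)=0$ and likewise $F_j(q)=0$.

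I expect the delicate point to be the case in the second part where both players carry an atom at $q$: one has to extract a \emph{strictly} profitable deviation from an upward shift $q\mapsto q+\epsilon$ without assuming anything about the (possibly randomized) tie-breaking rule. The way around this is not to argue about the tie-break directly, but to observe that the only way a small upward shift fails to be profitable for player $i$ is that $i$ already receives the \emph{entire} level-$q$ tie, and symmetrically for $j$; since the two atoms coincide on a positive-probability event, both of these cannot hold, which delivers the contradiction.
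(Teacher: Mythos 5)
Your proof is correct and follows the same three-step structure as the paper's: equality of the infima via Lemma~\ref{lem:utility}, a case split ruling out atoms at $q$, and the immediate conclusion for $F_i(q)$ and $F_j(q)$. The only noteworthy difference is in the double-atom case: the paper asserts directly that one of the two players fails to win the level-$q$ tie and is therefore better off bidding $q+\epsilon$, whereas you pass through the equilibrium inequality $\E[u_i(q+\epsilon)]\le\E[u_i]$, let $\epsilon\to 0$ to get $\varphi_i(q)=F_i(q)$ (and likewise for $j$), and then derive the contradiction on the positive-probability coincidence event --- a slightly more careful handling of the arbitrary tie-breaking rule, but essentially the contrapositive of the same deviation argument.
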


\begin{proof} Assume w.l.o.g. that $q_i>q_j.$ Note that, by the
  definition of $q_j,$ player $j$ bids with positive probability in
  the interval $I=[q_j,q_i).$

  On the other hand, $F_j(x)=0$ over interval $I,$ since (at least)
  player $i$ bids higher than $x$ with probability $1.$ This implies
  $\E[u_j|b_j\in I]=0.$ Using Lemma~\ref{lem:utility} we obtain
  $\E[u_j]=\E[u_j| b_j\in I]=0,$ contradicting our assumptions.  This
  proves $q_i=q_j=q.$

  Next we show $G_i(q)=G_j(q)=0.$ Observe first, that because of
  $\E[u_j]>0,$ $v_j>q$ and $v_i>q$ must hold, since $q$ is the smallest possible
  bid of $j$ and of $i.$  Assume now that $G_i(q)>0$ and
  $G_j(q)=0.$ Then, $\P[b_i=q]>0,$ but $\E[u_i| b_i=q]=0,$ since $j$
  bids higher. This contradicts again Lemma~\ref{lem:utility} for the
  interval $[q,q].$

  Second, assume that $G_i(q)>0$ and $G_j(q)>0.$ In case $b_i=b_j=q,$
  bidder $i$ or bidder $j$ receives the item with probability smaller
  than 1. W.l.o.g., we assume it is player $i$. In this case bidder
  $i$ is better off by bidding $q+\epsilon$ for a small enough
  $\epsilon$ instead of bidding $q,$ since in case of bids
  $b_i=q+\epsilon,$ and $b_j=q,$ he gets the item for
  sure. 
  This contradicts $\B$ being a Nash equilibrium, and altogether we
  conclude $G_i(q)=G_j(q)=0.$

  Finally, this immediately implies $F_i(q)=F_j(q)=0,$ since for both
  $i$ and $j,$ (at least) the other one bids higher than $q$ with
  probability 1.
\end{proof}

Finally, we prove that mixed equilibria are always efficient. We use the above lemma to show that all
players who have non-zero utility, must have maximum valuation. 

\begin{theorem}\label{thm:addmixed} In a single-item auction the PoA
  of mixed Nash equilibria is 1.
\end{theorem}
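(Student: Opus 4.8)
The plan is to show that every mixed Nash equilibrium $\B$ allocates the item, with probability $1$, to a bidder of maximal value $v:=\max_k v_k$; since an optimal allocation also has value $v$, this gives $\E_{\b\sim\B}[SW(\b)]=v=SW(\O)$ and hence $\mathrm{PoA}=1$ (the bound $\mathrm{PoA}\ge 1$ being trivial). The first step is to prove two structural facts: every bidder with strictly positive expected utility has value exactly $v$, and at most one bidder has strictly positive expected utility. For the first, suppose $\E[u_i]>0$ while $v_i<v$. Since $\E[u_i]>0$, bidder $i$ wins with positive probability at a bid strictly below $v_i$, so (using independence of the bid distributions across players) there is a positive-probability event on which every bid is below some value $s<v_i$; a maximal-value bidder $\ell$ can then deviate to a bid in $(s,v)$, win on that event, and obtain $\E[u_\ell]>0$. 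Now Lemma~\ref{lem:lowestbid} applied to $i$ and $\ell$ gives them a common bid-infimum $q$ with $F_i(q)=0$, and conditioning on $b_i\in[q,q+\delta]$ and invoking Lemma~\ref{lem:utility} and the right-continuity of $F_i$ yields $\E[u_i]\le F_i(q)(v_i-q)=0$, a contradiction. The same argument applied to any two positive-utility bidders shows at most one can exist.

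Next I would argue by contradiction that no bidder $j$ with $v_j<v$ wins with positive probability. If one did, the deviation argument above gives a maximal-value bidder $\ell$ with $\E[u_\ell]>0$, which by Step~1 is the unique positive-utility bidder, forcing $\E[u_j]=0$. From $\E[u_j]=0$ and Lemma~\ref{lem:utility}, $j$ can only win at the bid $v_j$ (bids in the open range between $j$'s smallest winnable bid and $v_j$ would give conditional utility above $\E[u_j]$, hence carry no mass), so $\P[b_j=v_j]>0$; and $j$'s deviation to $v_j-\epsilon$ forces $\P[\max_{k\ne j}b_k<v_j]=0$. Thus, whenever $j$ wins, $\max_{k\ne j}b_k=v_j$, i.e.\ some competitor is tied with $j$ at $v_j$. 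I would then rule out each possibility for this competitor: if it is a bidder of value exceeding $v_j$ it could deviate to $v_j+\epsilon$, win on that positive-probability event and earn positive utility, contradicting uniqueness of $\ell$; the sub-case where the tied competitor has value at most $v_j$ is excluded using that $\ell$'s equilibrium bids must a.s.\ maximize $F_\ell(a)(v-a)$ and win with probability exactly $F_\ell(b_\ell)$ --- which follows from Lemma~\ref{lem:tie}, from $\varphi_\ell(\cdot)\le F_\ell(\cdot)$ and from $\E[u_\ell]=\max_a F_\ell(a)(v-a)$ --- while staying below $v_j$ on the event that $j$ wins, which is incompatible with the mass that $j$ (and its tied competitor) place at $v_j$. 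If instead no bidder has positive utility, the deviation argument of Step~1 already excludes a value-$<v$ winner. In all cases the winner has value $v$, so $\B$ is efficient and $\mathrm{PoA}=1$; pure equilibria are a special case, and this contrasts with the Bayesian setting, where efficiency may fail (cf.\ Appendix~\ref{sec:single_LB_Bayes}).

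The main obstacle I anticipate is exactly this last step: making rigorous the informal claim that ``a bidder tied with $j$ would rather slightly outbid him'' requires careful bookkeeping of atoms and of the (arbitrary, possibly randomized) tie-breaking rule, and in particular a clean treatment of the degenerate configurations in which the tied competitor has value at most $v_j$; I expect most of the technical effort to lie there.
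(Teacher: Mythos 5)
Your overall structure is close to the paper's: you use Lemma~\ref{lem:utility}, Lemma~\ref{lem:lowestbid}, and the right-continuity of $F_i$ in exactly the way the paper does, and the deviations you invoke (a maximal-value bidder bidding just above the lower-value winner's winning bid) are the ones the paper also uses. Your Step~1 --- that at most one bidder has positive expected utility and that a positive-utility bidder must have maximal value --- is correct and is in fact implicit in the paper: the paper derives it by observing that if $\E[u_i]>0$ and $\E[u_j]>0$ for two players, then Lemma~\ref{lem:lowestbid} gives $F_i(q)=F_j(q)=0$ for their common bid-infimum $q$, while Lemma~\ref{lem:utility} applied to $I=[q,q+\epsilon)$ forces $\E[u_i]\le F_i(q+\epsilon)(v_i-q)\to 0$, a contradiction.

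Where you diverge --- and where the gap is --- is the case $\E[u_j]=0$ with $j$ winning with positive probability. The paper does \emph{not} perform a case analysis on the identity or value of the competitor tied with $j$. Instead it proves directly that $\E[u_j]>0$ must hold and then falls back on the Lemma~\ref{lem:lowestbid} contradiction. Concretely: if $\E[u_j]=0$ then $F_j(v_j-\delta')=0$ for every $\delta'>0$, so by independence some \emph{fixed} player $k$ satisfies $\inf_x\{G_k(x)>0\}\ge v_j$; this forces \emph{all} winning bids of the maximal-value bidder $i$ to be $\ge v_j$ (whether $i=k$ or not), and since $\varphi_j(v_j)>0$, player $i$ strictly improves by moving the mass $G_i(v_j+\epsilon)$ to the single bid $v_j+\epsilon$. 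This is a single deviation argument for $i$ with no case split. Your Step~2 instead conditions on ``who is tied with $j$'' and tries to rule out each possibility separately; your phrase ``contradicting uniqueness of $\ell$'' does not cover the sub-case where the tied competitor \emph{is} $\ell$, and the sub-case where the tied competitor has value $\le v_j$ is only gestured at (``incompatible with the mass that $j$ and its tied competitor place at $v_j$'') --- the real obstruction there is not the atoms at $v_j$ but the player $k$ with $\inf_x\{G_k(x)>0\}\ge v_j$, which forces $F_\ell(a)=0$ for $a<v_j$, and that observation you never make. You anticipate this gap yourself; the paper's route of proving $\E[u_j]>0$ and then reusing Lemma~\ref{lem:lowestbid} closes it cleanly and avoids the case split entirely.
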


\begin{proof} Let $v_i$ be the maximum valuation in the single item
  auction with full-information. Assume for the sake of contradiction
  that a mixed Nash equilibrium $\B$ has $\E_{\b\in \B}[SW(\b)]<
  SW(OPT)=v_i.$ Then, there is a nonempty set of bidders $J\subset
  [n]\setminus \{i\},$ who all get the item with \emph{positive}
  probability in $\B,$ moreover $v_k<v_i$ holds for all $k\in J.$ Let
  $j\in J$ denote the player with maximum valuation $v_j<v_i$ among
  players in $J.$
 
  We show that $\E[u_i]>0,$ and $\E[u_j]>0.$ Let us first consider the
  distribution $F_i(x)$ of the maximum bid in $\b_{-i}.$ If
  $F_i(v_i-\delta)=0$ for all $\delta>0,$ then the highest bid in
  $\b_{-i},$ and thus the payment of player $j$ is at least $v_i>v_j$
  whenever $j$ wins the item. Thus for his utility $\E[u_j]<0,$
  contradicting that $\B$ is a Nash equilibrium. Therefore, there
  exists a small $\delta,$ such that $F_i(v_i-\delta)>0.$ This implies
  $\E[u_i]>0,$ because by bidding $v_i-\delta/2$ only, player $i$
  would have higher than $0$ utility.

  Now assume for the sake of contradiction that $\E[u_j]=0$
  ($\E[u_j]<0$ is impossible in an equilibrium). Note that
  $F_j(v_j)>0,$ otherwise $j$ would get the item with positive
  probability, but always for a price higher than $v_j.$ On the other
  hand, if there were a small $\delta'$ such that
  $F_j(v_j-\delta')>0,$ then $j$ could improve his $0$ utility by
  bidding $v_j-\delta'/2$ only. The latter implies, that $j$ can get
  the item (with positive probability) only with bids $v_j$ or higher,
  so he never bids higher so as to avoid negative expected
  utility. We obtained that $\varphi_j(v_j)>0,$ where $\varphi_j(v_j)$ denotes
  the probability of $j$ winning the item with bid $v_j.$
  
  Moreover, $F_j(v_j-\delta')=0$ for all $\delta'>0$ implies
  that the minimum bid of at least one player $k$ is at least $v_j$
  ($\inf_x\{G_k(x)> 0\}\geq v_j$).  Therefore the
  \emph{winning} bids of player $i$ are also at least $v_j$ (both when
  $i=k,$ and when $i\neq k$). 
  But then $i$ could improve his utility by overbidding the (with positive probabilty) winning bid $v_j$ of $j,$ that is,
  by bidding exactly $v_j+\epsilon$ (instead of $\leq v_j+\epsilon$)
  with probability $G_i(v_j+\epsilon)$ for a small enough $\epsilon.$
  With this bid, the \emph{additional} utility of $i$ would  
  get arbitrarily close to (at least) $\varphi_j(v_j)(v_i-v_j)>0.$

  Thus, we established the existence of players $i$ and $j,$ with
  different valuations $v_j<v_i,$ and both with strictly positive
  expected utility in $\B.$ According to Lemma~\ref{lem:lowestbid},
  for the infimum of these two players' bids $q_i=q_j=q,$ and
  $F_i(q)=F_j(q)=0$ hold. Furthermore, $q<v_j<v_i,$ otherwise the
  utility of $j$ could not be positive. By the definition of $q_i=q,$
  for any $\epsilon>0$ it holds that $\P[q\leq b_i<q+\epsilon]>0.$
  Therefore, by Lemma~\ref{lem:utility}, and by the definition of
  conditional expectation, for the interval $I=[q, q+\epsilon)$ we
  have
  \begin{eqnarray*}
    \E[u_i]&=&\E[u_i|b_i\in I]\le\frac{\int_{I} F_i(x)(v_i-x)dB_i}{\P[b_i\in I]}\\
    &<& \frac{\int_{I} F_i(q+\epsilon)(v_i-q)dB_i}{\P[b_i\in I]}=F_i(q+\epsilon)(v_i-q).
  \end{eqnarray*}
  Rearranging terms, this yields $F_i(q+\epsilon)>\E[u_i]/(v_i-q)>0$
  for arbitrary $\epsilon>0.$ Since $F_i(x)$ as a cumulative distribution function is
  right-continuous in every point, this positive lower bound must hold for $F_i(q)$ as well, contradicting
  $F_i(q)=0.$
\end{proof}

\subsection{Upper bound for additive valuations}
\label{sec:additive_mixed}

We extend the above proof for additive valuations.

\begin{theorem}\label{thm:addmixedm} For simultaneous first-price
  auctions with additive valuations the PoA of mixed Nash equilibria
  is 1.
\end{theorem}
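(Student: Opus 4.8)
The plan is to reduce the claim to the single-item case already settled in Theorem~\ref{thm:addmixed}, by showing that a mixed Nash equilibrium of the simultaneous auction factorizes over the items into $m$ mixed Nash equilibria of independent single-item first-price auctions. So I would fix a mixed Nash equilibrium $\B=\times_i B_i$ with additive valuations $v_i$, fix an item $j\in[m]$, and let $B_{ij}$ be the marginal of $B_i$ on the $j$-th coordinate. Since $\B$ is a product distribution, so is $\B_j:=\times_i B_{ij}$, and I would regard $\B_j$ as a mixed bidding profile for the single-item first-price auction on item $j$ with valuations $(v_1(j),\dots,v_n(j))$ and the tie-breaking rule induced on item $j$.

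The first and main step is to prove that $\B_j$ is itself a mixed Nash equilibrium of this single-item auction. The point is that, because the $m$ auctions are run independently (so the allocation of item $j$ depends only on the bids for item $j$) and the valuations are additive, the expected per-item payoff $g_{ij}(x):=\E_{\b_{-i}\sim\B_{-i}}[(v_i(j)-x)\,\varphi_{ij}(x)]$ of player $i$ bidding $x$ on item $j$ depends only on $x$ and on $\times_{k\neq i}B_{kj}$, while the total utility splits as $u_i(\b)=\sum_{j'}(v_i(j')-b_{ij'})\mathbf{1}[i\text{ wins }j']$. In a mixed Nash equilibrium, $B_i$-almost every pure bid vector $b_i$ in the support of $B_i$ is a best response; hence, for such $b_i$ and any alternative bid $x$ on item $j$, one must have $g_{ij}(x)\le g_{ij}(b_{ij})$, since otherwise replacing only the $j$-th coordinate of $b_i$ by $x$ would strictly increase $u_i$ (all other coordinates being untouched). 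Averaging this inequality over $b_i\sim B_i$ gives $g_{ij}(x)\le \E_{b_{ij}\sim B_{ij}}[g_{ij}(b_{ij})]$ for every $x$, which is exactly the mixed Nash condition for player $i$ in the item-$j$ auction.

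Granting this, Theorem~\ref{thm:addmixed} applies to each $\B_j$ and shows that the single-item auction on item $j$ is efficient, i.e. the expected contribution of item $j$ to the social welfare equals $\max_i v_i(j)$. Summing over all items and using additivity of the realized welfare, $\E_{\b\sim\B}[SW(\b)]=\sum_{j\in[m]}\max_i v_i(j)$. On the other hand, with additive valuations an optimal allocation assigns each item to a bidder of maximum value for it, so $SW(\O)=\sum_{j\in[m]}\max_i v_i(j)$ as well; hence $\E_{\b\sim\B}[SW(\b)]=SW(\O)$ and the PoA of mixed Nash equilibria is $1$.

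The hard part will be making the first step fully rigorous: justifying that the item-$j$ marginal $\B_j$ is a genuine mixed Nash equilibrium requires the ``every pure strategy in the support is a best response'' characterization of mixed equilibria together with the coordinate-wise deviation argument, and it is precisely here that additivity and the independence of the $m$ auctions are used essentially (the analogous step would fail for, e.g., submodular valuations). Some measure-theoretic care is also needed, since the per-item best-response inequality only holds for $B_i$-almost every $b_i$. An alternative would be to avoid the reduction and mirror the proof of Theorem~\ref{thm:addmixed} directly, now carrying the per-item CDFs $F_{ij}$ and the decomposition $u_i(\b)=\sum_j u_{ij}(\b)$ through analogues of Lemmas~\ref{lem:utility} and~\ref{lem:lowestbid}; but the reduction route is shorter and reuses the single-item theorem verbatim.
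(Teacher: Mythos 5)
Your proof is correct and follows essentially the same route as the paper: decompose by item using additivity, observe that the item-$j$ marginal profile is a mixed Nash equilibrium of the single-item first-price auction, and invoke Theorem~\ref{thm:addmixed} for each item. The only cosmetic difference is that the paper first replaces each $B_i$ by the product of its per-item marginals as a ``WLOG'' step before decoupling, whereas you work with the marginals directly via the almost-sure best-response characterization; both versions rest on the same two facts (utilities are additive across items, and the outcome for item $j$ depends only on item-$j$ bids), so this is a presentational rather than substantive distinction.
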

\begin{proof}
  Let $\B$ be a mixed Nash equilibrium in the $m$ item case. We argue
  first that for any fixed bidder $i,$ it is without loss of
  generality to assume that in $B_i$ his bids for each item are drawn
  from independent distributions. If this were not the case, we could
  determine the distribution $B_i^j$ of $b_i(j)$ for any item to have
  the same CDF $G_i^j$ as the distribution of bids for this item in
  $B_i.$ Then we would replace $B_i$ by the product distribution for
  the items $B_i'=\times B_i^j.$ Since both the expected valuation and
  the expected payment for item $j$ would remain the same in this new
  strategy, and the valuation and utility of the player are the sum of
  valuations and utilities over the items, none of these amounts would
  be affected. Furthermore the same additivity holds for any other
  player $k,$ whose 'price function' $F_k^j(\cdot)$ for item $j$ would
  also not be influenced. Thus, with $B_i$ replaced by the strategy
  $B_i',$ the mixed profile $\B'=(B_i',B_{-i})$ would remain a mixed
  Nash with the same expected social welfare as $\B.$

  The remaining argument is similar. Now the distribution of bids
  $(B_i^j)_{i\in [n]}$ for any particular item $j$ corresponds to a mixed Nash
  equilibrium of the single item auction for this item. Otherwise a
  player could improve his utility for $j,$ and consequently the sum
  of his utilities for all items. In turn, by Theorem~\ref{thm:addmixed} this implies that the
  social welfare for each item $j$ is optimal, a player (or players)
  with maximum valuation receive the item, which concludes the proof.
\end{proof}


\section{Bid-Dependent Auctions}
\label{sec:general}

Here we generalize some of our results to simultaneous
bid-dependent auctions.  
Intuitively, a single item auction is \emph{bid-dependent} if the
winner is always the highest bidder, and a bidder's payment 
depends only on whether she gets the item or not, and on her \emph{own} bid. 
For instance, the first-price auction and the all-pay auction  
are bid-dependent but the second-price auction is not.

 For a given simultaneous bid-dependent auction, we will denote by $q_j^w(x)$ and
$q_j^l(x)$ a bidder's payment $p_{ij}(\b)$ for item $j$  when her bid for $j$ is
$x,$ depending on whether she is the winner or a loser, respectively. Note that we assume $q_j^w(x)$ (resp. $q_j^l(x)$)
to be the same for all bidders. Without this assumption the PoA is unbounded, as we show in Appendix
\ref{sec:anonymity}. To guarantee the existence of reasonable Nash
Equilibria, we also make the following natural assumptions about
$q^w_j(x)$ and $q^l_j(x)$:\footnote{Similar assumptions are also made in \cite{KR12}, \cite{LP00}
and \cite{Bre08}.}

\noindent -- $q^w_j(x)$ and $q^l_j(x)$ are
 non-decreasing, continuous functions of $x$ 
and normalized, 
such that $q^l_j(0)=q^w_j(0)=0;$ 

\noindent -- $q^w_j(x)\ge q^l_j(x)$ for all $x\ge 0;$ 

\noindent -- $q^w_k(x)>0$ for some $x$ (to avoid the case of all 
payments being zero, for that no Nash equilibria exist).

\subsection{Fractionally Subadditive valuations}

\subsubsection{Upper Bounds}
\label{sec:one_item_upperGen}


In this section we discuss the general upper bound for simultaneous
bid-dependent auctions. 

We define $\theta$ as  
$\theta=\max_{j\in[m]}\sup_{\{x:q^w_j(x)\neq 0\}} \{q_j^l(x)/q^w_j(x)\}$. 

 Observe that $\theta \in [0,1],$  due to the assumption $q_j^l(x)\le q^w_j(x).$ 
We will prove that (for $\theta\neq 1$) the coarse-correlated and the
  Bayesian PoA of simultaneous bid-dependent auctions with fractionally subadditive bidders is at
  most $\frac{(\theta-1)^2}{\theta^2-\theta+1-e^{\theta-1}}.$ When we set $\theta = 0$ or $\theta \rightarrow 1,$ we get back the upper bounds of $e/(e-1)$ for 
	first-price auctions, and $2$ for all-pay auctions, respectively.



We start by  proving a lemma for a single item, analogous to Lemma \ref{lem:1.58upper1}.

\begin{lemma}
\label{lem:submodularGen} Consider a single item bid-dependent auction with payment functions $q^w(x)$ and $q^l(x).$
Let $\B$ be an arbitrary randomized bidding profile, and  $F_{i}$ denote the CDF of the random variable $\max_{k\neq i}b_{k}$, for this $\B.$ 
Then for every bidder $i$, and
non-negative value $v$, there exists a pure bidding strategy $a=a(v,\B_{-i})$ such that, 
  $$F_{i}(a)\left(v-q^w(a)+q^l(a)\right)-q^l(a)+\sum_{k\in[n]}p_{k}(\B)\ge
  \frac{\theta^2-\theta+1-e^{\theta-1}}{(\theta-1)^2}\cdot v,$$
where $p_{k}(\B)=\E_{\b\sim \B} [p_k(\b)]$ is the expected payment from player $k.$
  
\end{lemma}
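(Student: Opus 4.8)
The plan is to mimic the proof of Lemma~\ref{lem:1.58upper1}, adapting the single-item argument from the first-price case to the general bid-dependent payment functions $q^w,q^l$. First I would fix the bidder $i$ and the value $v$, and consider, for each candidate pure bid $a$, the deviation in which $i$ bids $a$. Since $\B$ is not assumed to be an equilibrium here (it is ``an arbitrary randomized bidding profile''), the statement must hold for \emph{some} choice of $a$ regardless — so really we are proving an \emph{existential} inequality about the function $a\mapsto F_i(a)(v-q^w(a)+q^l(a)) - q^l(a)$, namely that its supremum is at least $\frac{\theta^2-\theta+1-e^{\theta-1}}{(\theta-1)^2}v$ minus the payment term $\sum_k p_k(\B)$. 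The payment term $\sum_k p_k(\B)$ only helps, so the crux is a lower bound on $\sup_a\big[F_i(a)(v-q^w(a)+q^l(a))-q^l(a)\big] + \E[p_i(\B)]$, using that $\E[p_i(\B)]\ge \E[\text{winner's payment on the item}] = \E[q^w(\text{highest bid})]$, and (weakly) that the highest bid stochastically dominates $\max_{k\ne i}b_k$, so $\E[p_i(\B)]\ge\int_0^\infty q^w(x)\,dF_i(x)$ or at least $\ge \E_{F_i}[q^w]$ up to the contribution of bidder $i$'s own bid; I will need to be a little careful about exactly which quantity the ``sum of payments'' dominates.

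The core analytic step parallels the computation in Lemma~\ref{lem:1.58upper1}. I would let $A=\sup_a F_i(a)(v-q^w(a)+q^l(a))-q^l(a)$ (a fixed bid $a^*$ essentially achieving it, up to $\epsilon$, exists by right-continuity of $F_i$ and continuity of $q^w,q^l$). For bids $a$ below the threshold where $F_i(a)(v-q^w(a)+q^l(a))-q^l(a)=A$, the definition of $A$ as a maximum forces $F_i$ to be small — concretely $F_i(a)\le \frac{A+q^l(a)}{v-q^w(a)+q^l(a)}$ — and this is the substitute for the bound $F(x)\le\frac{A}{v-x}$ used before. Using $q^l(a)\le\theta\, q^w(a)$ pointwise, and writing $w=q^w(a)$ as the integration variable, the inequality $F_i\le\frac{A+\theta w}{v-(1-\theta)w}$ holds. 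Then I would compute, via $\E[p_i]\ge\int q^w\,dF_i$ and $\int(1-F_i)$-type manipulations (equivalently, integrate $A = $ constant along the tight portion), the combined quantity, reducing it to a one-variable optimization over $A$ (the minimizing value of $A$ being $A=\frac{1-e^{\theta-1}}{1-\theta}v$ or similar), which yields the claimed constant $\frac{\theta^2-\theta+1-e^{\theta-1}}{(\theta-1)^2}$. Checking that $\theta=0$ recovers $1-1/e$ and $\theta\to1$ recovers $1/2$ is a useful sanity check I would include.

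The main obstacle I anticipate is the bookkeeping of the payment terms: in the first-price case $q^w(a)=a$ and $q^l\equiv0$, so the identity ``$\sum_k u_k = SW - b_{\max}$'' and the substitution ``$\E[b_{\max}]\ge\E[p_i]$'' were clean; here losers also pay, the winner's payment is $q^w$ of the top bid rather than the top bid itself, and the deviating bidder's own payment $q^l(a)$ or $q^w(a)$ must be subtracted explicitly (which is why the statement carries the ``$-q^w(a)+q^l(a)$'' and ``$-q^l(a)$'' corrections). Getting these to line up so that $\sum_k p_k(\B)$ exactly cancels the right pieces — in particular handling the gap $\varphi_{ij}\le F_{ij}$ from ties and the fact that $\max$ over all $k$ versus $\max$ over $k\ne i$ differ — will require care, but it is all first-moment algebra with no genuinely new idea beyond Lemma~\ref{lem:1.58upper1}. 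The one substantive analytic point is re-running the $A\ln(A/v)$-style minimization with the affine-in-$w$ denominator $v-(1-\theta)w$ and the extra $\theta w$ in the numerator; I expect this to come out to the stated rational-exponential expression after a substitution like $w\mapsto v-(1-\theta)w$.
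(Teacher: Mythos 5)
Your plan follows the paper's proof essentially exactly: set $A=\max_a\bigl[F_i(a)(v-q^w(a)+q^l(a))-q^l(a)\bigr]$, use maximality of $A$ to bound $F_i(x)\le(A+q^l(x))/(v-q^w(x)+q^l(x))$, pass to $q^l\le\theta q^w$ and substitute $y=q^w(x)$, then optimize over $A$ to land on the stated constant. The one thing to tidy is the payment bookkeeping: the inequality you need is $\sum_k p_k(\B)\ge\E[q^w(\max_k b_k)]\ge\E[q^w(\max_{k\ne i}b_k)]=\int_0^\infty(1-F_i)\,dq^w$ (using monotonicity of $q^w$ and $q^w(0)=0$), not a lower bound on $\E[p_i(\B)]$ alone, which need not dominate the winner's payment.
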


\begin{proof}
  Let $a = \arg\max_x\left\{F_{i}(x)\left(v-q^w(x)+q^l(x)\right)-q^l(x)\right\}$ and $A =
  F_{i}(a)\left(v-q^w(a)+q^l(a)\right)-q^l(a)$. 
  %
  %
  In the following we use that  $F_i$ is the CDF of $\max_{k\neq i}b_{k}$, and since $q^w(\cdot)$ is continuous,  $\E[q^w(\max_{k\neq i}b_{k})]=\int_0^\infty(1-F_i(x)) dq^w(x)$ holds. 

  \begin{eqnarray*}
    A + \sum_k p_k(\B) &\ge& A+\E_{\b}[q^w(\max_{k\neq i}b_{k})]\notag\\
    &=& A+\int_0^{\infty}(1-F_{i}(x))dq^w(x)\notag\\
    &\geq& A+\int_0^{\infty}\left(1-\frac{A+q^l(x)}{v-q^w(x)+q^l(x)}\right)dq^w(x)\\
    &\geq& A+\int_0^{\infty}\left(\frac{v-A-q^w(x)}{v+(\theta - 1)q^w(x)}\right)dq^w(x)\\
    &\geq& A+\int_0^{v-A}\left(\frac{v-A-y}{v+(\theta - 1)y}\right)dy \notag
  \end{eqnarray*}
The second inequality follows from the definition of $A$ and $a$ and
  the third one is due to the fact that $q^l_j(x) \leq \theta \cdot
  q^w_j(x)$ for any $x$. For the last one, $q^w(0) = 0$ and we further need to 
	show that for $x_0 = \infty$, $q^w(x_0) \geq v-A$: by definition 
	$A \geq F_{i}(x_0)\left(v-q^w(x_0)+q^l(x_0)\right)-q^l(x_0) = v-q^w(x_0)$ since $F(x_0)=1$, 
	meaning that $q^w(x_0) \geq v-A$. For completeness we also show that $v-A \geq 0$, by showing 
	that $v\geq A\geq 0$: observe that 
	$A =  F_{i}(a)v-F_{i}(a)q^w(a)-(1-F_{i}(a))q^l(a) \leq v$, since $F_i$ is a CDF; moreover 
	$A \geq F_{i}(0)\left(v-q^w(0)+q^l(0)\right)-q^l(0)=F_i(0)v\geq 0.$



  In case $\theta < 1$, $A + \sum_k p_{kj}(\B) \ge A+
  \frac{(A+\theta(v-A))(\ln (A+\theta(v-A))-\ln (v)) - (\theta -
    1)(v-A)}{(\theta-1)^2}$, which is minimized for $A =
  \frac{v(\theta \cdot e^{1-\theta}-1)}{(\theta
    -1)e^{1-\theta}}$. The lemma follows by replacing $A$ with this value.

  In case $\theta = 1$, $A + \sum_k p_{kj}(\B) \geq
  A+\frac{(v-A)^2}{2v} \geq \frac 12 v$. The
  limit of $\frac{\theta^2-\theta+1-e^{\theta-1}}{(\theta-1)^2}$ when
  $\theta \rightarrow 1$ is $\frac 12.$  
\end{proof}

In the following, let $f^S_i(\cdot)$ be a maximizing additive
  function of set $S$ for player $i$ with fractionally subadditive valuation function
  $v_i$. By the definition of fractionally subadditive valuations, we have that $v_i(T) \geq f^S_i(T)$,
  for every $T \subseteq S$ and $f^S_i(S) = v_i(S)$.

\begin{lemma}\label{lem:XOSpropertyGen}
  For any set $S$ of items, and any strategy profile $\b$, where $b_{ij} = 0$ for $j\notin S$,
  $$u_i(\b) \geq \sum_{j \in S} \Big(\P[j\in X_i(\b)]\left(f^S_i(j)-q_j^w(b_{ij})+q_j^l(b_{ij})\right) - q_j^l(b_{ij})\Big).$$
\end{lemma}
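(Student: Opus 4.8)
The plan is to mirror the proof of Lemma~\ref{lem:XOSproperty}, now tracking the split of payments between the winner and the losers. Fix a strategy profile $\b$ with $b_{ij}=0$ for every $j\notin S$. Even for this fixed $\b$ the allocation $X_i(\b)$ may be random because of the tie-breaking rule, so exactly as in Lemma~\ref{lem:XOSproperty} all quantities below are understood in expectation over the tie-breaking coins: $\P[j\in X_i(\b)]$ is the resulting probability that $i$ wins $j$, and $u_i(\b)$ denotes the corresponding expected utility.

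First I would expand the utility as $u_i(\b)=v_i(X_i(\b))-\sum_{j\in X_i(\b)}q^w_j(b_{ij})-\sum_{j\notin X_i(\b)}q^l_j(b_{ij})$. Since $q^w_j(0)=q^l_j(0)=0$ and $b_{ij}=0$ for $j\notin S$, every summand with $j\notin S$ vanishes, so the two payment sums effectively range over $j\in X_i(\b)\cap S$ and $j\in S\setminus X_i(\b)$ only. For the valuation term I would use monotonicity of $v_i$ together with the defining property of fractionally subadditive valuations and the additivity of $f^S_i$: $v_i(X_i(\b))\ge v_i(X_i(\b)\cap S)\ge f^S_i(X_i(\b)\cap S)=\sum_{j\in X_i(\b)\cap S}f^S_i(j)$. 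Combining the two estimates gives $u_i(\b)\ge\sum_{j\in X_i(\b)\cap S}(f^S_i(j)-q^w_j(b_{ij}))-\sum_{j\in S\setminus X_i(\b)}q^l_j(b_{ij})$.

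The last step is a purely algebraic regrouping into a single sum over $j\in S$. Writing $\chi_j=\mathbf{1}[j\in X_i(\b)]$, the loser term equals $-\sum_{j\in S}(1-\chi_j)q^l_j(b_{ij})=\sum_{j\in S}(\chi_j q^l_j(b_{ij})-q^l_j(b_{ij}))$, and adding this to $\sum_{j\in S}\chi_j(f^S_i(j)-q^w_j(b_{ij}))$ produces exactly $\sum_{j\in S}(\chi_j(f^S_i(j)-q^w_j(b_{ij})+q^l_j(b_{ij}))-q^l_j(b_{ij}))$. Taking expectations over the tie-breaking replaces each $\chi_j$ by $\P[j\in X_i(\b)]$ and yields the claimed inequality.

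There is no genuine difficulty in this argument beyond careful bookkeeping; the two points to get right are (i) discarding the items outside $S$, which uses both monotonicity of $v_i$ and the normalization $q^w_j(0)=q^l_j(0)=0$, and (ii) the identity that rewrites ``pay $q^w$ when you win, pay $q^l$ when you lose'' as ``pay $q^l$ unconditionally, with a $q^w-q^l$ credit on winning''. It is precisely this last form that makes the bound combine cleanly with Lemma~\ref{lem:submodularGen} in the subsequent upper-bound proof.
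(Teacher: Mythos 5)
Your proof is correct and takes essentially the same approach as the paper: both lower-bound $v_i$ by the additive function $f^S_i$ restricted to $X_i(\b)\cap S$, split the payments into winner and loser terms, and regroup into a single sum over $j\in S$. You phrase the regrouping with indicator variables $\chi_j$ while the paper sums over sets $T\subseteq S$ weighted by $\P[X_i(\b)=T]$ and then swaps the order of summation; these are identical, and if anything you spell out two small points the paper leaves implicit (dropping items outside $S$ via $q^w_j(0)=q^l_j(0)=0$, and the monotonicity step $v_i(X_i(\b))\geq v_i(X_i(\b)\cap S)$).
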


\begin{proof}
  
  \begin{eqnarray*}
    u_i(\b) &\geq& \sum_{T \subseteq S}  \P[X_i(\b)=T]\left(f^S_i(T)-
      \sum_{j\in T}q_j^w(b_{ij})-\sum_{j\in S\smallsetminus
        T}q_j^l(b_{ij})\right)  \\
    &=& \sum_{T \subseteq S}\sum_{j\in
      T}\P[X_i(\b)=T]\left(f^S_i(j)-q_j^w(b_{ij})\right)-\sum_{T
      \subseteq S}\sum_{j\in S\smallsetminus
      T}\P[X_i(\b)=T]q_j^l(b_{ij})\\
    &=& \sum_{j\in S}\sum_{\substack{T \subseteq S:j \in T}} \P[X_i(\b)=T]\left(f^S_i(j)-q_j^w(b_{ij})\right)
		-\sum_{j\in S}\sum_{\substack{T \subseteq S:j \notin T}}\P[X_i(\b)=T]q_j^l(b_{ij})\\
    &=& \sum_{j\in S} \P[j\in X_i(\b)]\left(f^S_i(j)-q_j^w(b_{ij})\right) - \sum_{j\in S} \P[j\notin X_i(\b)]q_j^l(b_{ij})\\
		&=& \sum_{j \in S} \Big(\P[j\in X_i(\b)](f^S_i(j)-q_j^w(b_{ij})+q_j^l(b_{ij})) - q_j^l(b_{ij})\Big).
  \end{eqnarray*}
\end{proof}

\begin{lemma}\label{lem:tie_XOSGen}
Let $\B$ be a coarse correlated equilibrium of a simultaneous bid-dependent auction.  For any set of items $S$ and any pure strategy $b'_i$ of player $i$, where $b'_{ij}=0$ for $j\notin S$,
  \[u_i(\B)=\E_{\b\sim\B}[u_i(\b)]\geq \sum_{j\in
    S}\Big(F_{ij}(b'_{ij})\left(f^S_i(j)-q^w_j(b'_{ij})+q^l_j(b'_{ij})\right) - q^l_j(b'_{ij}) \Big).\]
\end{lemma}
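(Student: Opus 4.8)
The plan is to mirror the proof of Lemma~\ref{lem:tie} (and of its XOS analogue), using Lemma~\ref{lem:XOSpropertyGen} in place of the XOS value bound, and the \emph{continuity} of the payment functions $q^w_j,q^l_j$ together with the right-continuity of the CDF $F_{ij}$ to handle ties. Fix the set $S$, the corresponding maximizing additive function $f^S_i$, and the deviating bid $b'_i$ with $b'_{ij}=0$ for $j\notin S$. For each $\epsilon>0$ I would consider the perturbed deviation $b''_i$ with $b''_{ij}=b'_{ij}+\epsilon$ for $j\in S$ and $b''_{ij}=0$ otherwise; this is a legal pure strategy, still supported on $S$. Let $\varphi_{ij}(x)$ denote the probability that $i$ wins item $j$ when bidding $x$ against $\B_{-i}$, so that, after averaging over $\b_{-i}$ and over the tie-breaking, $\E_{\b_{-i}}[\P[j\in X_i(b''_i,\b_{-i})]]=\varphi_{ij}(b''_{ij})=\varphi_{ij}(b'_{ij}+\epsilon)$ (independence of the single-item auctions makes this depend only on $i$'s bid on $j$).

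First I would apply Lemma~\ref{lem:XOSpropertyGen} to the profile $(b''_i,\b_{-i})$ for each realization $\b_{-i}$ and take expectations over $\b_{-i}\sim\B_{-i}$, obtaining
\[
\E_{\b_{-i}\sim\B_{-i}}[u_i(b''_i,\b_{-i})]\ \geq\ \sum_{j\in S}\Big(\varphi_{ij}(b'_{ij}+\epsilon)\big(f^S_i(j)-q^w_j(b'_{ij}+\epsilon)+q^l_j(b'_{ij}+\epsilon)\big)-q^l_j(b'_{ij}+\epsilon)\Big).
\]
Since $\B$ is a coarse correlated equilibrium, $u_i(\B)=\E_{\b\sim\B}[u_i(\b)]\geq\E_{\b\sim\B}[u_i(b''_i,\b_{-i})]=\E_{\b_{-i}\sim\B_{-i}}[u_i(b''_i,\b_{-i})]$, so the displayed lower bound holds for $u_i(\B)$ itself, for every $\epsilon>0$.

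It then remains to let $\epsilon\to 0^+$. The key observation is the sandwich $F_{ij}(b'_{ij})\leq\varphi_{ij}(b'_{ij}+\epsilon)\leq F_{ij}(b'_{ij}+\epsilon)$: the left inequality holds because a bid of $b'_{ij}+\epsilon$ wins item $j$ whenever every competing bid is at most $b'_{ij}$, and the right one is the standard $\varphi_{ij}\le F_{ij}$. Right-continuity of $F_{ij}$ then forces $\varphi_{ij}(b'_{ij}+\epsilon)\to F_{ij}(b'_{ij})$, while continuity of $q^w_j,q^l_j$ gives $q^w_j(b'_{ij}+\epsilon)\to q^w_j(b'_{ij})$ and $q^l_j(b'_{ij}+\epsilon)\to q^l_j(b'_{ij})$. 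As the right-hand side is a finite sum, the limit passes through term by term and yields exactly the claimed inequality.

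\textbf{Main obstacle.} The only non-routine point is this limiting argument for ties: one cannot substitute $b'_{ij}$ directly, because $\varphi_{ij}(b'_{ij})$ may be strictly below $F_{ij}(b'_{ij})$ while the coefficient $f^S_i(j)-q^w_j(b'_{ij})+q^l_j(b'_{ij})$ can have either sign (recall only $q^w_j\ge q^l_j$ is assumed), so a direct replacement would fail on some items. Perturbing the deviating bid upward by $\epsilon$ and appealing to right-continuity of $F_{ij}$ and continuity of the payment rules resolves this uniformly across all items.
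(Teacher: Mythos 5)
Your proof is correct and follows the same route as the paper, which simply states that the argument is analogous to Lemma~\ref{lem:tie} (equilibrium deviation inequality, then Lemma~\ref{lem:XOSpropertyGen} plus the $\epsilon$-perturbation and right-continuity trick for handling ties). You fill in the tie-handling details that the paper leaves implicit, correctly invoking the assumed continuity of $q^w_j,q^l_j$ and the sandwich $F_{ij}(b'_{ij})\leq\varphi_{ij}(b'_{ij}+\epsilon)\leq F_{ij}(b'_{ij}+\epsilon)$.
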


The proof of the lemma is analogous to that of Lemma \ref{lem:tie}: $\E_{\b\sim \B}[u_i(\b)]\geq \E_{\b\sim \B}[u_i((b_i',\b_{-i}))]$ holds since $\B$ is an equilibrium; then Lemma \ref{lem:XOSpropertyGen} is applied to $(b_i',\b_{-i}),$ and expectation is taken over $\b\sim\B.$

\begin{theorem}
  \label{thm:generalpay_mixedGen} For  bidders with
  fractionally subadditive valuations, the coarse correlated PoA of any
  bid-dependent auction is at most
  $\frac{(\theta-1)^2}{\theta^2-\theta+1-e^{\theta-1}}.$ 
\end{theorem}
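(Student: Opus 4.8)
The plan is to derive the bound by combining the single-item estimate of Lemma~\ref{lem:submodularGen} with the XOS deviation bound of Lemma~\ref{lem:tie_XOSGen}, summed over all bidders, and then to account for the total payment using a first-price-style inequality. Concretely, fix a coarse correlated equilibrium $\B$ and an optimal allocation $\O=(O_1,\dots,O_n)$. For each player $i$ let $f_i:=f^{O_i}_i$ be a maximizing additive function on $O_i$, so that $f_i(O_i)=v_i(O_i)$. For each item $j\in O_i$, apply Lemma~\ref{lem:submodularGen} with $v=f_i(j)$ and the CDF $F_{ij}$ of $\max_{k\neq i}b_{kj}$: this produces a pure bid $a_{ij}=a(f_i(j),\B_{-i})$ for which
\[
F_{ij}(a_{ij})\bigl(f_i(j)-q^w_j(a_{ij})+q^l_j(a_{ij})\bigr)-q^l_j(a_{ij})+\sum_{k\in[n]}p_{kj}(\B)\ \ge\ c\cdot f_i(j),
\]
where $c:=\frac{\theta^2-\theta+1-e^{\theta-1}}{(\theta-1)^2}$ and $p_{kj}(\B)=\E_\b[p_{kj}(\b)]$ is player $k$'s expected payment for item $j$.

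Next I would assemble the deviating strategy $a_i$ for player $i$ that bids $a_{ij}$ on each $j\in O_i$ and $0$ elsewhere, and plug it into Lemma~\ref{lem:tie_XOSGen} with $S=O_i$. Summing the resulting lower bound on $u_i(\B)$ over all $i$, and then adding $\sum_i\E_\b[\sum_j p_{ij}(\b)]$ to both sides, the left-hand side becomes $\E_\b[SW(\b)]$ because social welfare equals total utility plus total payments; more precisely one uses that for each item $j$ the winner's valuation contribution dominates via XOS and the item goes to exactly one bidder. The subtlety is matching the payment bookkeeping: in Lemma~\ref{lem:submodularGen} the payment term $\sum_k p_{kj}(\B)$ is the full expected revenue from item $j$, whereas $u_i$ only ``pays back'' player $i$'s own payments. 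This is where a counting inequality analogous to \eqref{prices} is needed — here it is immediate since each item is sold once, so summing $\sum_{j\in O_i}\sum_k p_{kj}(\B)$ over $i$ equals the total expected revenue $\E_\b[\sum_{i,j}p_{ij}(\b)]$ (each item $j$ lies in exactly one $O_i$). Combining the three ingredients yields
\[
\E_\b[SW(\b)]\ \ge\ c\sum_{i}\sum_{j\in O_i} f_i(j)\ =\ c\sum_i v_i(O_i)\ =\ c\cdot SW(\O),
\]
and since $c\neq 1$ implies $1/c=\frac{(\theta-1)^2}{\theta^2-\theta+1-e^{\theta-1}}$, the PoA bound follows.

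I expect the main obstacle to be precisely the payment-accounting step: making sure that the sum of the ``surplus'' payment terms $\sum_k p_{kj}(\B)$ appearing in each application of Lemma~\ref{lem:submodularGen} (one per item $j\in O_i$, over all $i$) telescopes correctly against the total revenue $\E_\b[\sum_i p_i(X_i(\b))]$ that converts total utility into social welfare, without double counting. Because $\{O_i\}$ partitions the items, each item $j$ contributes its full expected revenue $\sum_k p_{kj}(\B)$ exactly once on the right side, which matches the revenue term on the left, so the cancellation is clean; but one must be careful that this works at the level of expectations for coarse correlated equilibria (where the deviation guarantee is only in expectation over $\b\sim\B$, which is exactly the form in Lemma~\ref{lem:tie_XOSGen}). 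The Bayesian case is handled by the same argument carried through the Bayesian versions of the lemmas, exactly as in the XOS upper bound proofs sketched earlier in the paper.
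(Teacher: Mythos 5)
Your proposal is correct and follows essentially the same route as the paper's proof: for each player $i$ fix the maximizing additive function $f^{O_i}_i$ on $O_i$, apply Lemma~\ref{lem:submodularGen} item by item with $v=f^{O_i}_i(j)$ and CDF $F_{ij}$, feed the resulting deviation vector into Lemma~\ref{lem:tie_XOSGen} with $S=O_i$, sum over players, and finally add total payments to convert utilities into $SW(\B)$. Your observation that the payment bookkeeping is immediate because $\{O_i\}$ partitions $[m]$ (so each item's full expected revenue $\sum_k p_{kj}(\B)$ appears exactly once) is precisely the implicit cancellation the paper uses when it passes from $\sum_i\sum_{j\in O_i}\sum_k p_{kj}(\B)$ to $\sum_{j\in[m]}\sum_k p_{kj}(\B)=\sum_k p_k(\B)$; here no analogue of the $\max_k$ vs. $\max_{k\neq i}$ inequality is needed, since Lemma~\ref{lem:submodularGen} already carries the \emph{full} revenue term.
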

\begin{proof} Let $\B$ be a coarse correlated equilibrium, and 
  let $\lambda(\theta) =
  \frac{\theta^2-\theta+1-e^{\theta-1}}{(\theta-1)^2}.$ For every player $i,$ consider the 
  maximizing additive valuation, $f^{O_i}_i$ for his optimal set $O_i$. By Lemma
  \ref{lem:submodularGen},  for every fixed player $i$ and item $j$ there exists a bid $a_{ij}$ such that
  $$F_{ij}(a_{ij})\left(f^{O_i}_i(j)-q^w_j(a_{ij})+q^l_j(a_{ij})\right)-q^l_j(a_{ij})\ge
  \lambda(\theta)f^{O_i}_i(j) - \sum_kp_{kj}(\B)$$ 
  For player $i$, we consider the deviation that her bid is $a_{ij}$ for every item in $O_i$ (and $0$ for all other items), and apply Lemma \ref{lem:tie_XOSGen}. Combined with the above inequality (for all items in $O_i$), we obtain  
  $$u_i(\B) \geq \lambda(\theta)  \sum_{j \in O_i}f^{O_i}_i(j) - \sum_{j \in O_i}\sum_kp_{kj}(\B)\\
    = \lambda(\theta)v_i(O_i^{\v}) - \sum_{j \in O_i}\sum_kp_{kj}(\B).$$
  By summing over all players, we get
  $$\sum_iu_i(\B) \geq \lambda(\theta)\sum_iv_i(O_i^{\v}) - \sum_{j\in [m]}\sum_kp_{kj}(\B)=\lambda(\theta)SW(\O)- \sum_kp_k(\B)$$
    The theorem follows from $SW(\B)=\sum_iu_i(\B)+ \sum_ip_i(\B).$
\end{proof}

Similarly to Lemmas \ref{lem:tie} and \ref{lem:tie_XOSGen}, we can prove
the following.

\begin{lemma}\label{lem:tie_BayesGen} Assume that $\B$ be is a  Bayesian Nash
  equilibrium, and let $S$ be an arbitrary set of items. For  player $i$ with valuation $v_i,$ let $b'_i$
  be a pure strategy such that $b'_{ij} = 0$ for $j\notin S$. Then,
  \[\E_{\substack{\v_{-i}\\\b\sim \B(\v)}}[u_i^{v_i}(\b)]\ge
  \sum_{j\in S}\left(F^{v_i}_{ij}(b'_{ij})(f^S_{v_i}(j)-q^w_j(b'_{ij})+q^l_j(b'_{ij}))-q^l_j(b'_{ij}) \right).\]
\end{lemma}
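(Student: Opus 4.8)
The plan is to mimic the proof of Lemma~\ref{lem:tie_XOSGen} almost verbatim, replacing the coarse-correlated-equilibrium inequality by the Bayesian-equilibrium inequality and carrying the conditioning on $v_i$ through. First I would fix the Bayesian Nash equilibrium $\B$, a player $i$, a realized valuation $v_i$, and an arbitrary set of items $S$ together with the pure strategy $b'_i$ with $b'_{ij}=0$ for $j\notin S$. Since $\B$ is a Bayesian Nash equilibrium, the expected utility of $i$ (over $\v_{-i}$ and $\b\sim\B(\v)$) is at least the expected utility from deviating to the fixed bid $b'_i$, i.e. $\E_{\v_{-i},\b\sim\B(\v)}[u_i^{v_i}(\b)]\ge \E_{\v_{-i},\b_{-i}\sim\B_{-i}(\v_{-i})}[u_i^{v_i}(b'_i,\b_{-i})]$.

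Next I would apply Lemma~\ref{lem:XOSpropertyGen} to the profile $(b'_i,\b_{-i})$ with the set $S$ (note $b'_{ij}=0$ outside $S$, so the hypothesis of that lemma is met), obtaining a lower bound on $u_i^{v_i}(b'_i,\b_{-i})$ of the form $\sum_{j\in S}\big(\mathbf{1}[j\in X_i(b'_i,\b_{-i})](f^S_{v_i}(j)-q^w_j(b'_{ij})+q^l_j(b'_{ij})) - q^l_j(b'_{ij})\big)$, where $f^S_{v_i}$ is the maximizing additive function for $v_i$ on $S$. Here I must make sure the monotonicity assumption $q^w_j\ge q^l_j\ge 0$ is in force so that the coefficient $f^S_{v_i}(j)-q^w_j+q^l_j$ multiplying the winning indicator is handled correctly; that is exactly the combination appearing in Lemma~\ref{lem:XOSpropertyGen}, so no sign subtlety beyond what is already there. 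Then I would take expectations over $\v_{-i}$ and $\b_{-i}\sim\B_{-i}(\v_{-i})$ and use linearity, pulling the (deterministic, since $b'_i$ is fixed and $v_i$ is fixed) quantities $q^w_j(b'_{ij}),q^l_j(b'_{ij}),f^S_{v_i}(j)$ out. The only randomized quantity left is $\E[\mathbf{1}[j\in X_i(b'_i,\b_{-i})]]$, which equals the probability that bidding $b'_{ij}$ wins item $j$ against $\b_{-i}$; since ties in a bid-dependent auction go possibly against $i$, this probability is at most $\P[\max_{k\ne i}b_{kj}\le b'_{ij}]=F^{v_i}_{ij}(b'_{ij})$, but because the coefficient $f^S_{v_i}(j)-q^w_j(b'_{ij})+q^l_j(b'_{ij})$ may be negative we need the win probability to be at least $F^{v_i}_{ij}$ up to the tie; as in Lemma~\ref{lem:tie} one handles this by a limiting argument ($b'_{ij}+\epsilon$, $\epsilon\to 0$) using right-continuity of $F^{v_i}_{ij}$ and continuity of $q^w_j,q^l_j$, so that in the limit the win probability equals $F^{v_i}_{ij}(b'_{ij})$ and the desired inequality holds.

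The main obstacle, and the place needing a little care, is precisely this tie-breaking/limiting step combined with the possibly-negative coefficient: one cannot simply substitute $\varphi_{ij}\le F^{v_i}_{ij}$ when the multiplier can be negative. The clean way is to prove the inequality first for every bid $b'_{ij}+\epsilon$ at which $F^{v_i}_{ij}$ is continuous (so there is no tie and the win probability is exactly $F^{v_i}_{ij}(b'_{ij}+\epsilon)$), using that the equilibrium inequality holds for every deviation, and then let $\epsilon\to 0^+$ using right-continuity of the CDF and continuity of $q^w_j$ and $q^l_j$; the excerpt already invokes exactly this device in Lemma~\ref{lem:tie} and notes the analogy, so I would simply write ``the proof is analogous to that of Lemma~\ref{lem:tie}, applying Lemma~\ref{lem:XOSpropertyGen} to $(b'_i,\b_{-i})$ and taking expectation over $\v_{-i}$ and $\b_{-i}\sim\B_{-i}(\v_{-i})$,'' and flag the limiting argument as the one nontrivial point. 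Summing the per-item bounds over $j\in S$ then yields the claimed inequality, completing the proof.
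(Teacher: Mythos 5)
Your proposal is correct and takes exactly the approach the paper intends (the paper itself only writes ``Similarly to Lemmas~\ref{lem:tie} and~\ref{lem:tie_XOSGen}, we can prove the following,'' and your argument---Bayesian Nash inequality applied to the fixed deviation $b_i'$, then Lemma~\ref{lem:XOSpropertyGen} on $(b_i',\b_{-i})$, then expectation over $\v_{-i}$ and $\b_{-i}\sim\B_{-i}(\v_{-i})$, with the $\epsilon$-limit to handle ties---is precisely the filled-in version of that remark). One small note on the exposition: the direction in which $\varphi_{ij}\le F^{v_i}_{ij}$ is insufficient is when the coefficient $f^S_{v_i}(j)-q^w_j(b'_{ij})+q^l_j(b'_{ij})$ is \emph{nonnegative} (not negative, as you state), but this does not affect correctness since the limiting argument forces $\varphi_{ij}=F^{v_i}_{ij}$ in the limit and thus handles either sign.
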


\begin{theorem}
  \label{thm:generalpay_bayesGen} The Bayesian PoA of any bid-dependent
  auction, when the bidders have fractionally subadditive and
  independently distributed valuations, is at most
  $\frac{(\theta-1)^2}{\theta^2-\theta+1-e^{\theta-1}}$.
\end{theorem}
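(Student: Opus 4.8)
The plan is to mirror the proof of Theorem~\ref{thm:generalpay_mixedGen}, replacing the coarse-correlated equilibrium argument with its Bayesian analogue, exactly as the XOS Bayesian upper bound in the (currently commented-out) Section~\ref{sec:bayes_XOS_upper} extends the coarse-correlated XOS bound. The key new ingredient is Lemma~\ref{lem:tie_BayesGen}, which plays the role that Lemma~\ref{lem:tie_XOSGen} played in the full-information case. So first I would fix a Bayesian Nash equilibrium $\B$ and set $\lambda(\theta)=\frac{\theta^2-\theta+1-e^{\theta-1}}{(\theta-1)^2}$. For each player $i$ with realized valuation $v_i$, I let $f^{O_i}_{v_i}$ be a maximizing additive function for $i$'s optimal set $O_i=O_i(\v)$, where the optimum is computed on the realized profile $\v$.

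The main subtlety—and the step I expect to be the real obstacle—is the decoupling of the random optimal allocation from the deviation, which is why the commented-out Bayesian XOS proof introduces a fresh independent copy $\w$ of the valuation profile. Concretely, for player $i$ I would consider the deviating bid $b'_{ij}=a_{ij}$ for $j\in O_i(v_i,\w_{-i})$ (and $0$ elsewhere), where $a_{ij}$ is the bid supplied by Lemma~\ref{lem:submodularGen} applied to the price CDF $F^{v_i}_{ij}$ and value $f^{O_i(v_i,\w_{-i})}_{v_i}(j)$. Applying Lemma~\ref{lem:tie_BayesGen} with $S=O_i(v_i,\w_{-i})$, then using that $\v_{-i}$ and $\w_{-i}$ are i.i.d.\ to swap $\w_{-i}$ for $\v_{-i}$ in the allocation (while keeping $f$ evaluated at $v_i$ inside), and finally invoking the inequality from Lemma~\ref{lem:submodularGen} item-by-item, I obtain
\[
\sum_{i=1}^n \E_{\substack{\v\\\b\sim\B(\v)}}[u_i^{v_i}(\b)]\;\ge\; \sum_{i=1}^n \E_{\w}\Big[\sum_{j\in O_i(\w)}\Big(\lambda(\theta)\,f^{O_i(\w)}_{w_i}(j)-\E_{\b_{-i}\sim\B_{-i}(\v_{-i})}[p_{ij}(w_i)]\Big)\Big].
\]
Relabelling $\w$ as $\v$ and using $\sum_i\sum_{j\in O_i(\v)}f^{O_i(\v)}_{v_i}(j)=\sum_i v_i(O_i(\v))=\E_{\v}[SW(\O(\v))]$ gives
\[
\sum_{i=1}^n \E_{\substack{\v\\\b\sim\B(\v)}}\Big[u_i(\b)+\sum_i\sum_{j\in O_i(\v)}p_{ij}(v_i)\Big]\;\ge\;\lambda(\theta)\cdot\E_{\v}[SW(\O(\v))].
\]

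To finish, I would bound the payment term: $\sum_i\sum_{j\in O_i(\v)}p_{ij}(\b)\le \sum_j p_{w(j)}(\b)=\sum_i p_i(X_i(\b))$, the analogue of inequality~\eqref{prices}, which holds because in a bid-dependent auction each item's winner's payment dominates any single other bidder's payment on that item (by $q^w_j\ge q^l_j$ and monotonicity of $q^w_j,q^l_j$, together with $p_{ij}$ depending only on $b_{ij}$). Hence
\[
\E_{\substack{\v\\\b\sim\B(\v)}}[SW(\b)]=\sum_{i=1}^n\E[u_i(\b)]+\E\Big[\sum_i p_i(X_i(\b))\Big]\;\ge\;\lambda(\theta)\cdot\E_{\v}[SW(\O(\v))],
\]
so the Bayesian PoA is at most $1/\lambda(\theta)=\frac{(\theta-1)^2}{\theta^2-\theta+1-e^{\theta-1}}$. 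The only place where care is genuinely needed is verifying that Lemma~\ref{lem:tie_BayesGen} is correctly applied with a state-dependent set $S=O_i(v_i,\w_{-i})$ and that the i.i.d.\ swap of $\w_{-i}$ and $\v_{-i}$ is legitimate (the bids $\b_{-i}$ and the price CDF $F^{v_i}_{ij}$ depend on $\v_{-i}$, not on $\w_{-i}$, so the expectation over $\w_{-i}$ only touches which set is being optimized over); everything else is a routine transcription of the full-information proof.
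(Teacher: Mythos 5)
Your high-level strategy—introducing a fresh i.i.d.\ copy $\w_{-i}$ of the opponents' valuations so that the set $O_i(v_i,\w_{-i})$ fed into Lemma~\ref{lem:tie_BayesGen} decouples from the realized profile, then swapping $\w_{-i}$ and $\v_{-i}$ by exchangeability—is exactly the paper's approach (the paper formalizes the swap by defining the marginal bid distribution $\C_{-i}$, which absorbs the randomness in $\v_{-i}$, but this is equivalent to your ``$\C_{-i}$ does not see $\w_{-i}$'' observation). However, you have transcribed the first-price Bayesian argument too literally, and this introduces two genuine errors for the bid-dependent setting.

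First, Lemma~\ref{lem:submodularGen} does not deliver a bound involving the expected \emph{maximum competing bid} $\E[p_{ij}]$ as in the commented-out FPA proof; its payment term is $\sum_{k\in[n]} p_{kj}(\B)$, the \emph{total} expected payment of all bidders (winners \emph{and} losers) on item $j$. Your displayed inequality subtracts $\E_{\b_{-i}\sim\B_{-i}(\v_{-i})}[p_{ij}(w_i)]$, which is the FPA-style quantity and is not what the lemma asserts; the ``winner's payment dominates the max-of-others'' argument that would bridge the two is already used \emph{inside} the proof of Lemma~\ref{lem:submodularGen} and should not be re-derived. Second, and more consequentially, your closing identity $\E[SW(\b)]=\sum_i\E[u_i(\b)]+\E[\sum_i p_i(X_i(\b))]$ is \emph{false} for general bid-dependent auctions: losers also pay, so quasi-linearity gives $SW(\b)=\sum_i u_i(\b)+\sum_i p_i(\b)$ where $p_i(\b)$ is $i$'s \emph{total} payment including on lost items, and $\sum_i p_i(X_i(\b))$ undercounts. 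What actually saves the day is that the payment term $\sum_{j\in[m]}\sum_k p_{kj}(\B)$ produced by applying Lemma~\ref{lem:submodularGen} over $j\in O_i$ and summing over $i$ equals $\sum_i p_i(\B)$ exactly, so no analogue of inequality~\eqref{prices} is needed at all; the theorem follows from the identity $SW(\B)=\sum_i u_i(\B)+\sum_i p_i(\B)$. In short: use the lemma's $\sum_k p_{kj}(\B)$ as stated, drop the winner-dominates-price step, and replace your final equality with the correct quasi-linear identity.
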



\begin{proof}
 Let $\lambda(\theta) =
  \frac{\theta^2-\theta+1-e^{\theta-1}}{(\theta-1)^2}$.
Suppose that $\B$ is a Bayesian Nash Equilibrium and the valuation of
  each player $i$ is drawn according to $v_{i}\sim D_{i},$ where the $D_i$ are independently
  distributed. We use the notation $\C=(C_1,C_2,\ldots,C_n)$ to denote
  the bidding distribution in $\B$ which involves the randomness of the valuations $\v,$ and of the
  bidding strategy $\B(v),$  that is $b_i(v_i)\sim
  C_i$. Then the utility of player $i$ with valuation $v_i$ can be
  expressed by $u_i(\B_i(v_i), \C_{-i})=\E_{b_i\sim\B_i(v_i),\b_{-i}\sim \C_{-i}}[u_i(\b)]$. It should be noted that
  $\C_{-i}$ does not depend on a particular $v_{-i}$ (just on the distribution $\D$). Also notice that the following equality holds:
  $\E_{\v_{-i}}[u^{v_i}_i(\B_i(v_i),\B_{-i}(\v_{-i}))] =
  u^{v_i}_i(\B_i(v_i),\C_{-i})$ \footnote{$\E_{\v_{-i}}[u_i^{v_i}(\B(\v))]=\E_{b_i \sim B_i(v_i)} \E_{\substack{\v_{-i} \\ \b_{-i} \sim \B_{-i}(\v_{-i})}}[u_i^{v_i}(b_i,b_{-i})]=\E_{b_i \sim B_i(v_i)} \E_{\b_{-i} \sim \C_{-i}}[u_i^{v_i}(b_i,b_{-i})]=u^{v_i}_i(\B_i(v_i),\C_{-i})$}.

  
  For any player $i$ and any fractionally subadditive valuation $v_i\in V_i$, consider the
  following deviation: consider some $\v'_{-i}\sim D_{-i}$ and then for every
  $j\in O(v_i,\v'_{-i})$ bid $a_j(v_i,\C_{-i})$ as defined in Lemma
  \ref{lem:submodularGen}.  By applying Lemma \ref{lem:tie_BayesGen} for
  $S=O_i(v_i,\v'_{-i}),$ taking expectation over $v_i$ and $\v'_{-i}$ 
  and summing over all players, we have that
  
  \begin{align*}
   &\sum_i \E_{\v}[u^{v_i}_i(\B(\v))]\\&=\sum_i \E_{\v}[u^{v_i}_i(\B_i(v_i),\C_{-i})]\\
	&\geq\sum_{i} \E_{\substack{v_i,\v'_{-i}}} \left[ \sum_{j\in
        O_i(v_i,\v'_{-i})}\left(F_{ij}^{v_i}(a_{ij})\left(
        f_{v_i}^{O_i(v_i,\v'_{-i})}(j) - q^w_j(a_{ij})+q^l_j(a_{ij})\right) -q^l_j(a_{ij})\right)\right]\\
 &= \sum_{i} \E_{\v'} \left[ \sum_{j\in
       O_i(\v')}\left(F_{ij}^{v'_i}(a_{ij})\left(
        f_{v'_i}^{O_i(\v')}(j) - q^w_j(a_{ij})+q^l_j(a_{ij})\right) -q^l_j(a_{ij})\right)\right]\\
 &\geq \sum_{i} \E_{\v'} \left[ \sum_{j\in
       O_i(\v')}\left(\lambda(\theta)\cdot f_{v'_i}^{O_i(\v')}(j)-\sum_k p_{kj}(\B_i(v_i),\C_{-i})\right)\right] \\
    &= \lambda(\theta) \cdot\sum_i\E_\v[v_i(O_i^{\v})]-\sum_i\sum_jp_{kj}(\C)
  \end{align*}
  The last inequality follows by Lemma
 \ref{lem:submodularGen}. \\
 So, $\E_{\v}[SW(\B(\v))]= \sum_i\E_{\v}[u_i(\B)]+ \sum_i\sum_jp_{kj}(\C)\ge
  \lambda(\theta)\cdot \E_{\v}[SW(\O^{\v})]$.
\end{proof}


\subsubsection{Lower Bound}
\label{sec:xos-LB}

Here we present a lower bound of ${\frac{e}{e-1}}$ for the PoA 
of simultaneous bid-dependent auctions with OXS valuations and
for mixed equilibria. This implies a lower bound for submodular and
fractionally subadditive valuations, as well as for more general classes of
equilibria.

\begin{theorem}
\label{1.58LBGen} The PoA of simultaneous bid-dependent
  auctions with full information and OXS valuations is at least
  ${\frac{e}{e-1}}\approx 1.58.$
\end{theorem}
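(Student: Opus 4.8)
The plan is to mirror the $n$-dimensional grid construction used for the first-price OXS lower bound in Section~\ref{sec:submodular_LB}, adjusting only the bid distribution so that it becomes a Nash equilibrium for an \emph{arbitrary} bid-dependent auction with payment functions $q^w_j(\cdot),q^l_j(\cdot)$. I would take $n+1$ players and the $n^n$ items $[n]^n$, the dummy player bidding $0$ everywhere and winning only all-zero items, and the same OXS ``projection'' valuations for the real players, scaled by a small constant $v>0$ to be fixed below. Each real player $i$ picks a slice index $\ell\in[n]$ uniformly and bids a common value $x$, drawn from a CDF $G$, on every item with $i$-th coordinate $\ell$, and $0$ on the rest. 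The distribution $G$ should be the one that makes the bound of Lemma~\ref{lem:submodularGen} tight, exactly as $\hat F$ does for Lemma~\ref{lem:1.58upper1} in the first-price case.

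Concretely, I would first fix $G$ by an indifference requirement. As in the first-price proof, within one column $C$ of player $i$ every other player bids the same amount on all items of $C$, so the highest competing bid is common to the column; hence bidding $x>0$ on one item of $C$ and $0$ on the others yields expected column-utility $F(x)\big(v-q^w(x)+q^l(x)\big)-q^l(x)$, where $F(x)=\big(\tfrac{n-1}{n}+\tfrac{G(x)}{n}\big)^{n-1}$ is the winning probability. I would then \emph{define} $G$ so that this expression equals the constant $c:=v\big(\tfrac{n-1}{n}\big)^{n-1}=F(0)\,v$ on $[0,x_{\max}]$, which forces
\[
  G(x)=n\left[\Big(\tfrac{c+q^l(x)}{\,v-q^w(x)+q^l(x)\,}\Big)^{1/(n-1)}-\tfrac{n-1}{n}\right],
\]
and $G(x)=1$ for $x\ge x_{\max}$, with $x_{\max}$ chosen so that $q^w(x_{\max})=v-c$. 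Such an $x_{\max}$ exists because $q^w$ is continuous, non-decreasing and $q^w(0)=0$, as long as $v$ is small enough that $v-c<\sup_x q^w(x)$ — and $\sup_x q^w(x)>0$ by the assumptions on bid-dependent auctions. I would then check $G$ is a legitimate CDF: $G(0)=0$ since $(c/v)^{1/(n-1)}=\tfrac{n-1}{n}$, $G(x_{\max})=1$ since the fraction equals $1$ there, and $G$ is non-decreasing because on $[0,x_{\max}]$ we have $q^l\le q^w\le v-c$, so both partial derivatives of $\tfrac{c+b}{v-a+b}$ in $a=q^w(x)$ and $b=q^l(x)$ are non-negative while $a,b$ are non-decreasing; continuity of $q^w,q^l$ makes $G$ continuous, so ties occur with probability $0$ except for all-zero bids.

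Next I would verify that this profile is a mixed Nash equilibrium. A deviating player $i$'s utility decomposes over his $n^{n-1}$ columns, since his (scaled) valuation counts the distinct columns from which he wins an item, and he never wins an item he bids $0$ on. A short exchange argument — now also tracking the loser payments $q^l$ — shows that placing positive bids on more than one item of a column is weakly dominated by bidding only the largest of them on a single item, since each extra item only adds a payment ($q^w$ or $q^l$) and no value. It then remains to see that, per column, $x\mapsto F(x)\big(v-q^w(x)+q^l(x)\big)-q^l(x)$ is maximized on $[0,x_{\max}]$: it equals $c$ on the support of $G$ by construction, equals $F(0)\,v=c$ at $x=0$, equals $v-q^w(x)\le v-q^w(x_{\max})=c$ for $x>x_{\max}$, and is non-increasing across any flat piece of $G$ (where $F$ is constant and $q^w,q^l$ are non-decreasing). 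Hence the prescribed strategy attains the per-player optimum $n^{n-1}c$ and $\B$ is a Nash equilibrium.

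Finally I would compare welfares. The optimum uses the same modular-arithmetic allocation as in Section~\ref{sec:submodular_LB}, giving each player exactly one item from each column, so $SW(\O)=v\,n^n$ (and this is optimal since no player's value exceeds $v\,n^{n-1}$). In $\B$, because $G(0)=0$, each real player bids positively on a fixed item with probability exactly $1/n$ independently, so a real player wins it with probability $1-(\tfrac{n-1}{n})^n$, giving expected welfare $v\,n^n\big(1-(\tfrac{n-1}{n})^n\big)$; the ratio $\big(1-(1-1/n)^n\big)^{-1}\to e/(e-1)$. I expect the main obstacle to be the Nash-equilibrium verification under general loser payments — ruling out spreading positive bids over a column, and ruling out profitable bids outside the support of $G$ (in a flat region or above $x_{\max}$) — together with checking that $G$ is well-defined and monotone for every admissible pair $(q^w,q^l)$, which is precisely where the structural assumptions $q^l\le q^w$ and $q^w(0)=0$, $\sup_x q^w(x)>0$ are needed.
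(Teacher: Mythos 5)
Your plan reproduces the paper's construction closely: same $(n+1)$-player, $n^n$-item grid with a dummy player, same OXS projection valuations scaled by a constant, the same CDF
$G_j(x)=n\bigl(\tfrac{V(\frac{n-1}{n})^{n-1}+q^l_j(x)}{V-q^w_j(x)+q^l_j(x)}\bigr)^{1/(n-1)}-n+1$
and the same threshold $T_j$ with $q^w_j(T_j)=V\bigl(1-(\tfrac{n-1}{n})^{n-1}\bigr)$, and the same welfare computation giving $\bigl(1-(1-1/n)^n\bigr)^{-1}\to e/(e-1)$. The per-column indifference argument, the CDF-validity check via monotonicity of $\tfrac{c+b}{v-a+b}$ in $(a,b)$, and the role of the structural assumptions on $q^w_j,q^l_j$ all agree with what appears in the paper.

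There is, however, one genuine gap. Bid-dependent auctions allow the payment rules $q^w_j(\cdot),q^l_j(\cdot)$ to differ across items, and the paper's proof is for an \emph{arbitrary} such auction. You open by acknowledging item-dependent payments but then silently drop the index $j$: your construction has each real player draw a single $x\sim G$ and submit the \emph{same} bid $x$ on every item of her slice. When the $q^w_j,q^l_j$ genuinely depend on $j$, the CDFs $G_j$ differ, so (i) a common $x$ cannot simultaneously have the right marginal law $G_j$ on each item, and (ii) the claim that ``within one column every other player bids the same amount on all items, so the highest competing bid is common to the column'' is false. That claim is exactly what your exchange argument (``lose one item $\Rightarrow$ losing payments only accumulate'') relies on; without it the deviation analysis for multi-item bids in a column breaks down. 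The paper repairs this with a rank-correlation device: each real player draws one $\rho\sim U[0,1]$ and bids $x_j=G_j^{-1}(\rho)$ on each item $j$ of her slice. The bids are then coupled so that their \emph{CDF values} $G_j(x_j)$ are equal across a column, which restores the property that if a deviator loses the column-item with the highest $G_j$-value of her bid, she loses every item of that column; the rest of the dominance argument then goes through as you describe. That coupling is the missing ingredient in your write-up. (As a side remark: your restricted construction would still yield the class-wide lower bound of $e/(e-1)$, since item-independent auctions are a special case, but the statement the paper actually proves is the stronger per-auction lower bound, which needs the correlated bids.)
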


\begin{proof}
The proof is very similar to the one for simultaneous first price auctions (Section \ref{sec:submodular_LB}). 
Therefore, here we only point out the differences. 
 The same construction applies here; the only difference appears in the Nash strategy
  profile and in a scaling of the valuations.

  We  choose an appropriate value $V$, such that
  $V\left(1-\left(\frac{n-1}{n}\right)^{n-1}\right)$ is in the range
  of $q^w_j(\cdot)$ for all $j$ (notice that due to our assumptions
  on $q^w_j$, there exists such a $V$).
  We consider the same set of players and items as in Section~\ref{sec:submodular_LB};
  the valuation functions of the players are the same as in Section~\ref{sec:submodular_LB}, 
	except that each valuation is
    multiplied by $V.$ Also, the same tie breaking rule applies.
  
  As for the mixed Nash equilibrium $\B$, the dummy player still bids $0$ 
	for every item and every real player still picks an $n-1$ dimensional slice 
	in the same random way. However the bid $x_j$ that she bids for every 
  item $j$ of that slice is drawn according to a distribution with the following 
	{\em item-specific} CDF 
	(we will show below that $G_j$ is a valid CDF):
  $$G_j(x)=n\left(\frac{V\left(\frac{n-1}{n}\right)^{n-1} +
      q^l_j(x)}{V-q^w_j(x)+q^l_j(x)}\right)^{\frac{1}{n-1}}-n+1, \qquad
  x\in\left[0,T_j\right]$$ 
  where $T_j$ is the bid such that
  $q^w_j(T_j)=V\left(1-\left(\frac{n-1}{n}\right)^{n-1}\right)$. 
  Notice that we can no longer require that 
	the bids of a player on different items are equal, since the CDFs $G_j$ are different.  
	Instead, we require that for 
	every real player the bids $x_j$ for different 
	items in her slice are correlated in the 
	following way: she chooses $\rho$
  uniformly at random from the interval $[0, 1],$ and then sets $x_j=
  G_j^{-1}(\rho)$, for every $j$ in her slice. Note that for
  any two items $j_1$, $j_2$ of the slice, it holds that
  $G_{j_1}(x_1) = G_{j_2}(x_2)=\rho$ and $x_{j_1}$ is not necessarily equal to 
	$x_{j_2}$. However, for each item $j$ in the slice, the way
    that $x_j$ is chosen is equivalent to sampling it according to the CDF
    $G_j(x_j)$ (but in a correlated way to the other bids). The fact that  
		each player's bids are such that the CDF values become equal, will be sufficient 
		for proving that $\B$ is a mixed Nash equilibrium. 
  
  The probability $F_j(x)$ that a player gets item $j$ if she bids $x$ for it is:
  $$F_j(x) = \left( \frac{G_j(x)}{n} +
  \frac{n-1}{n}\right)^{n-1}=\frac{V\left(\frac{n-1}{n}\right)^{n-1} +
  q^l_j(x)}{V-q^w_j(x)+q^l_j(x)}, \quad x\in\left[0,T_j\right]$$

  Recall that the valuation of player $i$ is additive, restricted to the slice
  of items  that she bids for in a particular $b_i.$
  Therefore the expected utility of $i$ when he bids $x$ in $b_i$ for item $j$
  is $F_j(x)(V-q^w_j(x)) - (1-F_j(x))q^l_j(x) =
  F_j(x)(V-q^w_j(x)+q^l_j(x)) - q^l_j(x) =
  V\left(\frac{n-1}{n}\right)^{n-1}$.  
  By comprising all items, $\E[u_i(b_i)] = Vn^{n-1} \left(\frac{n-1}{n}\right)^{n-1}$.


\begin{claim}
$\B$ is a Nash equilibrium.
\end{claim}

\begin{proof}
First, we fix an arbitrary $w_{-i}\in [n]^{n-1},$ and focus on the set of items
  $C:=\{(\ell,w_{-i})\,|\, \ell \in [n]\},$ which we call a
  \emph{column} for player $i.$ Recall that $i$ is interested in
  getting only one item within $C,$ on the other hand his valuation is
  additive over items from different columns. Moreover, in a fixed
  $\b_{-i}$, every other player $k$ submits bids $x_j$ resulting in 
	equal values of $G_j(x_j)$ for all items
  in $C,$ because either the whole $C$ is in the current slice of $k,$
  and he bids correlated bids on them, or no item from the column is in the
  slice and he bids $0.$ 
  
  Consider first a deviating bid, in which $i$
  bids a positive value for more than one items in $C,$ say (at least)
  the values $x_1, x_2>0$ for items $j_1$, $j_2$, respectively, 
	and w.l.o.g. assume that $G_{j_1}(x_1)$ is maximum over items in $C.$  We prove that if she 
	loses item $j_1$ she should lose item $j_2$ as well: if she loses $j_1,$ then there must be a bidder $k$ with bid $x_1'>x_1$
  for item $j_1.$ Since $G_{j_1}(x)$ is increasing, this implies $G_{j_1}(x_1')>G_{j_1}(x_1).$ However, since  the bids of player $k$ are correlated (and $j_2$  is in his slice as well), for his bid $x_2'$ on $j_2$ it holds that $G_{j_2}(x_2')=G_{j_1}(x_1')>G_{j_1}(x_1)\geq G_{j_2}(x_2).$ Therefore, $x_2'>x_2,$ so player $i$ cannot win item $j_2$ either, so 
bidding for item $j_2$ cannot contribute to the valuation, whereas the 
	bidder might pay for more items than $j_1$. 
	Consequently, bidding 
  for only one item in $C$ and $0$ for the rest of $C$ is more
  profitable.
	
  Second, observe that restricted to a fixed column, submitting any
  bid $x\in\left[0,T_j\right]$ for one
  arbitrary item $j$ results in the constant expected utility of $
  V\left(\frac{n-1}{n}\right)^{n-1}$, whereas by bidding higher than
  $T_j$  the utility would be at most $V-q^w_j(T_j)=
  V\left(\frac{n-1}{n}\right)^{n-1}$ for this column. In summary,
  bidding for exactly one item $j$ from each column, an arbitrary bid
  $x\in\left[0,T_j\right]$ is a best
  response for $i$ yielding the above expected utility, which
  concludes the proof that $\B$ is a Nash equilibrium.
\end{proof}

  The rest of the argument is exactly the same as in the proof for first price auctions, 
  with both $SW(\B)$ and $SW(\O)$ scaled by $V,$ that cancels out in the price of anarchy.

  It remains to prove that the $G_j(\cdot)$ are valid
  cumulative distribution functions for every $j$. To this end it is sufficient
  to show that $G_j(T_j)=1$ and that $G_j(x)$ is non-decreasing in
  $[0,T_j]$. For simplicity we skip index $j$.

  $$G(T)=n\left(\frac{V\left(\frac{n-1}{n}\right)^{n-1} +
      q^l(T)}{V-q^w(T)+q^l(T)}\right)^{\frac{1}{n-1}}-n+1 =$$
  $$=n\left(\frac{V\left(\frac{n-1}{n}\right)^{n-1} +
      q^l(T)}{V-V\left(1-\left(\frac{n-1}{n}\right)^{n-1}\right)+q^l(T)}\right)^{\frac{1}{n-1}}-n+1=1$$
  Now let $x_1,x_2 \in [0, T_j],$ and $x_1>x_2.$ In order to prove $G(x_1)\geq G(x_2)$, it is sufficient to prove that
  $\frac{V\left(\frac{n-1}{n}\right)^{n-1} +
    q^l(x_1)}{V-q^w(x_1)+q^l(x_1)}\ge
  \frac{V\left(\frac{n-1}{n}\right)^{n-1} +
    q^l(x_2)}{V-q^w(x_2)+q^l(x_2)}$.

  \begin{eqnarray*}
    &&\frac{V\left(\frac{n-1}{n}\right)^{n-1} +
      q^l(x_1)}{V-q^w(x_1)+q^l(x_1)}-
    \frac{V\left(\frac{n-1}{n}\right)^{n-1} +
      q^l(x_2)}{V-q^w(x_2)+q^l(x_2)} \\ 
    &=&\frac{q^l(x_1)(V-V\left(\frac{n-1}{n}\right)^{n-1}-q^w(x_2))-
      q^l(x_2)(V-V\left(\frac{n-1}{n}\right)^{n-1}-q^w(x_1))}{(V-q^w(x_1)+q^l(x_1))(V-q^w(x_2)+q^l(x_2))}\\
    &\ge&\frac{(q^l(x_1)-
      q^l(x_2))(V-V\left(\frac{n-1}{n}\right)^{n-1}-q^w(x_1))}{(V-q^w(x_1)+q^l(x_1))(V-q^w(x_2)+q^l(x_2))}\ge 0
  \end{eqnarray*}
The last two inequalities follow from the monotonicity of $q^l$ and
  $q^w,$ and from the fact that
  $V\left(1-\left(\frac{n-1}{n}\right)^{n-1}\right)\geq q^w(x_1)$ holds by
  the definition of $T_j$.
\end{proof}


\subsection{Subadditive valuations}
\label{sec:subGen}

We prove tight bounds for the PoA in simultaneous bid-dependent auctions
with subadditive bidder valuations. We show that the coarse-correlated and the
  Bayesian PoA is exactly $2.$ Our results hold even 
for a class of auctions more general than bid-dependent auctions: 
 we allow the payment rule to depend on the \emph{rank} of the bid, where the
$r^{th}$ highest bid (with an arbitrary tie-breaking rule) has rank $r$. 
We use $q_j(x,r)$ to denote the payment that the bidder should pay for
item $j$ when her bid is $x$ and gets rank $r$. In particular, given a
mixed bidding strategy $\B$, bidder $i$'s expected payment $p_i(\B)$ is 
equal to $\sum_{j\in[m]}\E_{\b}[q_j(b_{ij},r_i(\b^j)]$ where
$r_i(\b^j)$ denotes the rank of $b_{ij}$ among
$\{b_{1j},\ldots,b_{nj}\}$. That is, $p_{ij}(\B)=\E_{\b}[q_j(b_{ij},
r_i(\b^j)]$. Note that $q^w_j(x)$ from the previous subsection is
$q_j(x,1)$ here, and $q^l_j(x)$ can be different for different
ranks. 
Analogous assumptions to the ones made on $q^w_j(x)$ and $q^l_j(x)$ can be
made on $q_j(x,r)$ as well. For the following upper bound we only assume that the $q_j(.,r)$ are normalized and increasing, and
$q_j(x,1)\ge q_j(x,r).$ 

\subsubsection{Upper Bounds}
\label{sec:subadd-UBGen}


\begin{lemma}
\label{lem:subadditivegeneralpay}
For any simultaneous bid-dependent auction, subadditive valuation profile $\v$ and
randomized bidding profile $\B$, there exists a randomized bid vector
$A_i(\v,\B_{-i})$ for each player $i$, such that for the total  expected utility and expected payments of the bidders
  $$\sum_iu_i(A_i(\v,\B_{-i}), \B_{-i})\ge \frac 12 \sum_iv_i(O_i^{\v})-\sum_i\sum_{j}p_{ij}(\B)$$
  holds, where $O_i^{\v}$ is the optimal set of player $i.$
\end{lemma}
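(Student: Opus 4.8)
The plan is to adapt the Feldman et al.~\cite{FFGL13} deviation argument for FPAs to this more general setting where losers also pay. The core idea is to let each player $i$, when she has optimal bundle $O_i^{\v}$, consider a \emph{randomized} deviation in which she targets exactly the items of $O_i^{\v}$ and, for each such item $j$, draws a random bid $t_j$ whose distribution is tailored to the equilibrium price distribution $F_{ij}$ on that item. The standard trick (going back to~\cite{FFGL13}) is to sample $t_j$ so that the probability of winning item $j$ with this bid is uniform in $[0,1)$: concretely, pick a threshold $u$ uniformly at random and set the bid to $F_{ij}^{-1}(u)$ (or the appropriate generalized inverse), so that $\Pr[\text{win } j] = u$ and we can integrate over $u$. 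Because valuations are merely subadditive, we cannot decompose $v_i(O_i^{\v})$ additively across items the way the XOS proof does; instead we use the subadditivity to bound, for any subset $W \subseteq O_i^{\v}$ of items actually won, $v_i(W) \ge v_i(O_i^{\v}) - v_i(O_i^{\v}\setminus W) \ge v_i(O_i^{\v}) - \sum_{j \in O_i^{\v}\setminus W} v_i(j)$, and then argue that the "wins at least as much as a high bid guarantees" event contributes the right amount in expectation.

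First I would fix $\B$ and a player $i$ with optimal set $O_i^{\v}$, and define the randomized deviation $A_i(\v,\B_{-i})$: draw a single uniform $u \in [0,1)$, and for each $j \in O_i^{\v}$ bid $t_j$ equal to the smallest value with $F_{ij}(t_j) \ge u$ (bidding $0$ on all items outside $O_i^{\v}$). The key consequence is that there is a single "win event" parametrized by $u$: either $i$ wins \emph{all} of $O_i^{\v}$ (roughly, when $u$ is small enough that every $t_j$ beats the competing bids) — but more carefully, each item $j$ is won with probability exactly $u$ under the draw of $\b_{-i}$, conditioned on the value of $u$. Using the equilibrium inequality $u_i(\B) \ge \E[u_i(A_i(\v,\B_{-i}),\B_{-i})]$, I would lower-bound the right-hand side by splitting utility into value won minus payments. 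For the value term, condition on $u$ and on the won set $W(u) \subseteq O_i^{\v}$: by subadditivity $v_i(W(u)) \ge v_i(O_i^{\v}) - \sum_{j\in O_i^{\v}\setminus W(u)} v_i(j)$, and since $\Pr[j \notin W(u)] = 1-u$ for each $j$ (over $\b_{-i}$, given $u$), taking expectations gives a value contribution of at least $u\cdot v_i(O_i^{\v})$ minus correction terms — integrating over $u\in[0,1)$ yields $\tfrac12 v_i(O_i^{\v})$ up to the payment-related terms. For the payment term made by the deviating bid, I would bound $q_j^w(t_j)$ (and $q_j^l(t_j)$ for losses) by the competing highest bid mass, i.e. $\E_u[q_j^w(t_j)] \le \E_{\b}[q_j^w(\max_{k\ne i} b_{kj})] \le \sum_k p_{kj}(\B)$, because $t_j = F_{ij}^{-1}(u)$ is stochastically dominated by $\max_{k\ne i}b_{kj}$.

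The key calculation to carry out is the integration: $\int_0^1 u\,du = \tfrac12$ gives the factor $\tfrac12$, and the "leftover value" terms $\sum_{j\in O_i^{\v}}(1-u) v_i(j)$ must be absorbed. The standard resolution is that $v_i(j)$ for $j \in O_i^{\v}$ is itself bounded by the utility $i$ could get bidding just on item $j$ alone; more precisely one shows $\E_u[(1-u)v_i(j)]$ is dominated by a combination of $i$'s realized utility on $j$ and the price $\E[p_{ij}]$ on item $j$ — so these terms either cancel against part of $u_i(\B)$ or get folded into $\sum_j p_{ij}(\B)$. Summing the resulting per-player inequality over all $i$, the value terms aggregate to $\tfrac12\sum_i v_i(O_i^{\v}) = \tfrac12 SW(\O)$, and all the payment terms aggregate into $\sum_i\sum_j p_{ij}(\B)$ (using that $\sum_k p_{kj}(\B)$ summed over items $j$ in the various $O_i$ is at most $\sum_i \sum_j p_{ij}(\B)$, since each item appears in exactly one optimal bundle), which gives the claimed bound.

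The main obstacle I anticipate is handling the loser payments $q_j^l$ cleanly: in the FPA case losers pay $0$, so a deviating bid that loses costs nothing, but here even a failed deviation is charged, and one must verify that the expected loser-payment incurred by the deviation ($\sum_{j\in O_i^{\v}}(1-\text{win prob})\,q_j^l(t_j)$) is also controlled by the equilibrium payments $\sum_k p_{kj}(\B)$ — this works because $q_j^l(t_j) \le q_j^w(t_j)$ and the same stochastic-dominance bound applies, but the bookkeeping of which $q_j^{w/l}$ terms appear where is the delicate part. A secondary subtlety is the correlated nature of the deviation (a single $u$ drives all items' bids), which is exactly what makes the "won set" analysis tractable via subadditivity, but requires care to state the conditional probabilities $\Pr[j\in W(u)\mid u] = u$ correctly given the tie-breaking and the generalized inverse.
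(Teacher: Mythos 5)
Your choice of deviation is not the one that makes this argument close. You propose the comonotone (quantile) coupling: draw a single uniform $u$ and bid $F_{ij}^{-1}(u)$ on every $j\in O_i^{\v}$. This has the right \emph{marginals} but the wrong \emph{joint} distribution. The paper's deviation (following Feldman et al.) is to sample $a_i$ from the true joint distribution $T_i$ of the competing-price vector $(t_{ij})_{j\in O_i^{\v}}$, so that $a_i$ and $t_i$ are i.i.d.\ from $T_i$. That exchangeability is the entire engine of the proof: writing $W=\{j:a_{ij}\ge t_{ij}\}$ and its complement $\bar W$ inside $O_i^{\v}$, subadditivity gives $v_i(W)+v_i(\bar W)\ge v_i(O_i^{\v})$, and the swap $(a_i,t_i)\mapsto(t_i,a_i)$ shows $\E[v_i(W)]=\E[v_i(\bar W)]$ (up to a measure-zero tie), hence $\E[v_i(W)]\ge \tfrac12 v_i(O_i^{\v})$ \emph{directly}. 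Under the quantile coupling $(a_i,t_i)$ is no longer exchangeable, so that identity fails and the $\tfrac12$ is not recoverable this way.

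Your attempt to compensate with the bound $v_i(W)\ge v_i(O_i^{\v})-\sum_{j\in O_i^{\v}\setminus W}v_i(j)$ is the genuine gap: for subadditive (even unit-demand) valuations $\sum_{j\in O_i^{\v}}v_i(j)$ can exceed $v_i(O_i^{\v})$ by an unbounded factor, so after integrating over $u$ the ``leftover'' term $\tfrac12\bigl(\sum_j v_i(j)-v_i(O_i^{\v})\bigr)$ can dwarf everything else, and there is no mechanism by which it ``cancels against part of $u_i(\B)$ or gets folded into $\sum_j p_{ij}(\B)$''---those quantities have nothing to do with $\sum_j v_i(j)$. By contrast, the paper never decomposes $v_i$ over singletons; the pairing $W,\bar W$ together with exchangeability is what replaces per-item accounting. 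The payment side of your argument is essentially fine and matches the paper (the paper bounds every payment, winning or losing, by $q_j(\cdot,1)$ and then by the expected winner payment, which handles the loser-payment concern you flagged), but the value side needs to be replaced by the $a_i\sim T_i$ deviation and the swap argument for the proof to go through.
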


\begin{proof}
  Under the profile $\v$, $O_i^{\v}$ is the set of items allocated to
  player $i$ in the optimum. We denote by $h_j(\b) = \arg \max_i
  b_{ij}$ the bidder with the highest bid for item $j$, regarding the
  pure bidding $\b$. Let $t_{ij}$ be the maximum of bids for item $j$
  among players other than $i,$ and $t_i$ be the vector such that its
  $j^{th}$ coordinate equals  $t_{ij}$ if $j\in O_i^{\v},$ and $0$
  otherwise. Note that $t_i\sim T_i$ is an induced random variable
  of $\B_{-i}$. We define the randomized bid
  $A_i(\v,\B_{-i})$ to follow the same distribution $T_i$ 
  (inspired by \cite{FFGL13}).
  
  We use the notation $v_i(b_i,t_i)$ and
  $W_i(b_i,t_i)$ to denote the player $i$'s valuation and winning
  set when she bids $b_i$ and the prices are $t_i$, i.e., $v_i(S)$
  and $W_i(S)$ where $S=\{j|b_{ij}\ge t_{ij}\}$. 
  \begin{eqnarray*}
    &&u_i(A_i(\v,\B_{-i}), \B_{-i})\\
                    &=&\E_{a_i\sim A_i}\E_{\b_{-i}\sim \B_{-i}}[u_i(a_i,\b_{-i} )]\\
		&\geq &\E_{a_i\sim T_i}\E_{t_i\sim
      T_i}[v_i(a_i,t_i)]-\sum_{j\in O_i^{\v}}\E_{a_i\sim
      T_i}[q_j(a_{ij},1)] \qquad\mbox{(since $q_j(x,1)\ge q_j(x,r)$)}\\
    &=&\E_{t_i\sim T_i}\E_{a_i\sim
      T_i}[v_i(t_i,a_i)]-\sum_{j\in O_i^{\v}}\E_{t_i\sim
      T_i}[q_j(t_{ij},1)] \qquad\mbox{(swap $t_i$ and $a_i$)}\\
    &=& \frac{1}{2} \E_{t_i\sim T_i}\E_{a_i\sim
      T_i}[v_i(t_i,a_i)+v_i(a_i,t_i)]-\sum_{j\in O_i^{\v}}\E_{t_i\sim T_i}[q_j(t_{ij},1)]\\
    &\ge& \frac{1}{2}v_i(O_i^{\v})-\sum_{j\in O_i^{\v}} \E_{\b\sim \B}[q_j(b_{h_j(\b)}(j),1)]\\
		&\ge& \frac{1}{2}v_i(O_i^{\v})-\sum_{j\in O_i^{\v}} \E_{\b\sim
      \B}[q_j(b_{h_j(\b)}(j),1)]\\
      &-&\sum_{j\in O_i^{\v}} \E_{\b\sim
      \B}\left[ \sum_k q_j(b_k(j),r_k(\b^j))-q_j(b_{h_j(\b)}(j),r_{h_j(\b)}(\b^j))\right]\\
		&=& \frac{1}{2}v_i(O_i^{\v})-\sum_{j\in O_i^{\v}}\sum_{k}
    \E_{\b\sim
      \B}[q_j(b_k,r(k,\b))]=\frac{1}{2}v_i(O_i^{\v})-\sum_{j\in
      O_i^{\v}}\sum_{k} p_{kj}(\B)
  \end{eqnarray*}
  In the second inequality $v_i(t_i,a_i)+v_i(a_i,t_i)\geq v_i(O_i^{\v})$ is due to the subadditivity of $v_i;$ for $q_j(t_{ij},1)\leq q_j(b_{h_j(\b)}(j),1)$ we use that $q_j(.,1)$ is non-decreasing, and
  the fact that $t_{ij}\leq b_{h_j(\b)}(j)$, since in
  $b_{h_j(\b)}(j)$, for computing the maximum we also consider player
  $i$. For the last inequality, notice that $\sum_k
  q_j(b_k(j),r_k(\b^j))-q_j(b_{h_j(\b)}(j),r_{h_j(\b)}(\b^j)) \geq 0$,  since from the sum of all payments for item $j$ we subtracted the payment of the winner. 
  The lemma follows by
  summing over all players.
\end{proof}

\begin{theorem}
  \label{thm:subadditivegeneral_mixed} For bidders with subadditive valuations, the coarse correlated PoA of
  any bid-dependent auction is at most $2$.
\end{theorem}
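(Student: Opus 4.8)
The plan is to combine Lemma~\ref{lem:subadditivegeneralpay} with the coarse correlated equilibrium condition and then cancel the payment terms. Fix a coarse correlated equilibrium $\B$, let $\O=(O_1,\ldots,O_n)$ be an optimal allocation for the valuation profile $\v$, and let $A_i(\v,\B_{-i})$ be the randomized deviation produced by Lemma~\ref{lem:subadditivegeneralpay}. The key point is that a randomized deviation is admissible: writing $A_i$ as a mixture over pure bid vectors $b_i'$, the coarse correlated equilibrium inequality $u_i(\B)\ge \E_{\b\sim\B}[u_i(b_i',\b_{-i})]$ holds for each pure $b_i'$, and averaging over $b_i'\sim A_i$ (using linearity of $u_i$ in the randomness of $A_i$, and the fact that the deviation faces exactly the marginal $\B_{-i}$) gives $u_i(\B)\ge u_i(A_i(\v,\B_{-i}),\B_{-i})$.

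Next I would sum this over all players $i$ and plug in the bound of Lemma~\ref{lem:subadditivegeneralpay}, obtaining
\[
\sum_i u_i(\B)\;\ge\;\sum_i u_i(A_i(\v,\B_{-i}),\B_{-i})\;\ge\;\frac12\sum_i v_i(O_i^{\v})-\sum_i\sum_j p_{ij}(\B).
\]
Since $p_i(\B)=\sum_j p_{ij}(\B)$ is exactly player $i$'s expected total payment, and since the utility is valuation minus payment, we have $SW(\B)=\sum_i v_i(\B)=\sum_i\big(u_i(\B)+p_i(\B)\big)$. Substituting the displayed inequality and noting that the $\sum_i p_i(\B)$ terms cancel,
\[
SW(\B)\;\ge\;\frac12\sum_i v_i(O_i^{\v})-\sum_i p_i(\B)+\sum_i p_i(\B)\;=\;\frac12\,SW(\O),
\]
so the price of anarchy over coarse correlated equilibria is at most $2$.

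There is no substantial obstacle here — all the work has been front-loaded into Lemma~\ref{lem:subadditivegeneralpay}. The only points that need a careful word are (i) that the randomized deviation $A_i(\v,\B_{-i})$ is a legitimate object to compare against in a \emph{coarse correlated} equilibrium (it is, being a convex combination of pure deviations, and it is evaluated against the correct marginal $\B_{-i}$), and (ii) the bookkeeping that the total expected payments appear with opposite signs in the utility-versus-welfare identity and in Lemma~\ref{lem:subadditivegeneralpay}, so they cancel exactly. Since the matching lower bound of $2$ for subadditive bidders is already established (Theorem~\ref{thm:lower_suba}, and its extension to bid-dependent auctions), this yields that the bound is tight.
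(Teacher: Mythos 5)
Your proof is correct and is essentially identical to the paper's: both apply Lemma~\ref{lem:subadditivegeneralpay} to the deviation $A_i(\v,\B_{-i})$, invoke the coarse correlated equilibrium condition, sum over players, and cancel the payment terms via $SW(\B)=\sum_i u_i(\B)+\sum_i p_i(\B)$. The extra remarks you include (randomized deviations reducing to pure ones, and the sign bookkeeping of the payment terms) are sound and simply make explicit what the paper leaves implicit.
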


\begin{proof}
  Suppose $\B$ is a coarse correlated equilibrium (notice that $\v$ is
  fixed). By Lemma \ref{lem:subadditivegeneralpay} and the definition
  of coarse correlated equilibrium, we have
  $$\sum_iu_i(\B)\ge \sum_iu_i(A_i(\v,\B_{-i}), \B_{-i})\ge
  \frac 12 \sum_iv_i(O_i^{\v})-\sum_i\sum_{j}p_{ij}(\B).$$
  By rearranging the terms,
  $SW(B)=\sum_iu_i(\B)+\sum_i\sum_{j}p_{ij}(\B)\ge 1/2\cdot SW(\O)$.
\end{proof}

\begin{theorem}
  \label{thm:subadditivegeneral_bayes} For bidders with subadditive and
  independent valuations, the Bayesian PoA of any
  bid-dependent auction is at most $2$.
\end{theorem}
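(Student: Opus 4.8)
The plan is to establish a Bayesian analogue of Lemma~\ref{lem:subadditivegeneralpay} and then feed it into the Bayesian Nash condition in the same way that the proof of Theorem~\ref{thm:subadditivegeneral_mixed} fed Lemma~\ref{lem:subadditivegeneralpay} into the coarse correlated condition. Following the proof of Theorem~\ref{thm:generalpay_bayesGen}, I would write $\C=(C_1,\dots,C_n)$ for the bidding distribution that incorporates both the draw of the valuations $\v\sim\D$ and the bidding strategies $\B(\v)$, so that $b_i(v_i)\sim C_i$; since the $D_i$ are independent, $\C_{-i}$ does not depend on the realized $v_i$, and $\E_{\v_{-i}}[u_i^{v_i}(\B_i(v_i),\B_{-i}(\v_{-i}))]=u_i^{v_i}(\B_i(v_i),\C_{-i})$. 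Fix a Bayesian Nash equilibrium $\B$.

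For a player $i$ with valuation $v_i$, I would use the following deviation: internally draw a fresh sample $\w_{-i}\sim\D_{-i}$, compute the optimal bundle $O_i(v_i,\w_{-i})$ under that profile, draw an independent $\b_{-i}\sim\C_{-i}$, and bid $\max_{k\neq i}b_{kj}$ on each $j\in O_i(v_i,\w_{-i})$ and $0$ on every other item; call this randomized bid $A_i(v_i,\C_{-i})$. The crucial observation — and the place where independence of the $D_i$ is indispensable — is that, conditioned on $\w_{-i}$, the deviating bid restricted to $O_i(v_i,\w_{-i})$ and the vector $t_i$ of genuine competing maxima on the same set of items (formed from the real $\b_{-i}\sim\C_{-i}$) are i.i.d., so the ``swap $t_i$ and $a_i$'' symmetrization of Lemma~\ref{lem:subadditivegeneralpay} remains valid. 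Carrying out that symmetrization — using subadditivity of $v_i$ so that the valuations of the two complementary winning sets sum to at least $v_i(O_i(v_i,\w_{-i}))$; using $q_j(x,1)\ge q_j(x,r)$ and the monotonicity of $q_j(\cdot,1)$ to control $i$'s payments; bounding $\max_{k\neq i}b_{kj}\le\max_k b_{kj}=b_{h_j(\b)}(j)$ and then ``winner's payment on $j$ $\le$ total payment on $j$'' for the payment term — and afterwards invoking independence once more to relabel $\w_{-i}$ as $\v_{-i}$ (so that $(v_i,\w_{-i})$ has the law of $\v$ and $\{O_i^{\v}\}_i$ partitions $[m]$), I expect to obtain
\[
\sum_i\E_{v_i}\big[u_i^{v_i}(A_i(v_i,\C_{-i}),\C_{-i})\big]\ \ge\ \tfrac12\,\E_{\v}[SW(\O^{\v})]-\sum_i\sum_j p_{ij}(\C).
\]

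To finish, I would apply the Bayesian Nash inequality in the form $\E_{\v}[u_i^{v_i}(\B(\v))]=\E_{v_i}[u_i^{v_i}(\B_i(v_i),\C_{-i})]\ge\E_{v_i}[u_i^{v_i}(A_i(v_i,\C_{-i}),\C_{-i})]$, sum it over $i$, and combine with the identity $\E_{\v}[SW(\B(\v))]=\sum_i\E_{\v}[u_i^{v_i}(\B(\v))]+\sum_i\sum_j p_{ij}(\C)$ to conclude $\E_{\v}[SW(\B(\v))]\ge\tfrac12\,\E_{\v}[SW(\O^{\v})]$, i.e.\ the Bayesian PoA is at most $2$. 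I expect the main obstacle to be the bookkeeping of the three separate independent randomness sources — the internal sample $\w_{-i}$ used solely to define the deviation, the real valuations $\v_{-i}$, and the real bids $\b_{-i}$ — and in particular verifying, item by item, that conditioned on the chosen optimal bundle the deviating-bid vector and the competing-maxima vector are identically distributed; this symmetrization step is the only substantively new ingredient beyond Theorem~\ref{thm:subadditivegeneral_mixed}, and it is precisely where the hypothesis of independent valuations cannot be dropped.
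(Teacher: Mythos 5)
Your proposal follows the same route as the paper's own proof: introduce the aggregated bidding distribution $\C$ (which, by independence of the $D_i$, is well-defined independently of $v_i$), use the ``sample $\v'_{-i}\sim\D_{-i}$ and deviate to $A_i((v_i,\v'_{-i}),\C_{-i})$'' deviation, relabel $\v'_{-i}$ as $\v_{-i}$, and close with the utility-plus-payments identity. The only cosmetic difference is that you unroll the symmetrization argument inline, whereas the paper simply observes that Lemma~\ref{lem:subadditivegeneralpay} is stated for an arbitrary randomized bidding profile and therefore applies verbatim with $\C_{-i}$ in place of $\B_{-i}$, so no separate ``Bayesian analogue'' of the lemma needs to be proved.
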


\begin{proof}
Suppose $\B$ is a Bayesian Nash Equilibrium and the valuation of
  each player $i$ is $v_{i}\sim D_{i}$, where the $D_i$ are independently
  distributed. We use the notation $\C=(C_1,C_2,\ldots,C_n)$ to denote
  the bidding distribution in $\B$ which includes the randomness of the valuations $\v,$ and of
  the bidding strategy $\b,$ that is $b_i(v_i)\sim
  C_i$ (like in the proof of Theorem \ref{thm:generalpay_bayesGen}).  Then
  the utility of player $i$ with valuation $v_i$ can be expressed by
  $u_i(B_i(v_i), \C_{-i})$. For any player $i$ and any subadditive
  valuation $v_i\in V_i$, consider the following deviation: sampling
  $\v'_{-i}\sim D_{-i}$ and bidding $A_i((v_i,\v'_{-i}),\C_{-i})$ as
  defined in Lemma \ref{lem:subadditivegeneralpay}. By the definition
  of Nash equilibrium, we have
  $\E_{\v_{-i}}[u^{v_i}_i(B_i(v_i),\B_{-i}(\v_{-i}))]\ge \E_{\v'_{-i}}[u^{v_i}_i(A_i((v_i,\v'_{-i}),\C_{-i}), \C_{-i})]$.
  By taking expectation over $v_i$ and summing over all players,

  \begin{align*}
    \sum_i \E_{\v}[u_i(\B(\v))]&\geq\sum_i\E_{v_i,\v'_{-i}}[u^{v_i}_i(A_i((v_i,\v'_{-i}),\C_{-i}),\C_{-i})]\\
    &=\E_{\v}\left[\sum_iu_i^{v_i}(A_i(\v,\C_{-i}),
      \C_{-i})\right]\mbox{(by relabeling $\v'_{-i}$ by $\v_{-i}$)}\\
    &\ge \frac 12\cdot\sum_i\E_\v[v_i(O_i^{\v})]-\sum_i\sum_j\E_\v[p_{kj}(\B(\v))]
  \end{align*}
  where the inequality follows by Lemma~\ref{lem:subadditivegeneralpay}.
  We obtained  $\E_{\v}[SW(\B(\v))]= \sum_i\E_{\v}[u_i(\B)]+ \sum_i\sum_j\E_{\v}[p_{kj}(\B(\v))]\ge
  1/2\cdot \E_{\v}[SW(\O^{\v})].$
\end{proof}



\subsubsection{Lower Bound}
\label{sec:subadd-LBGen}

\begin{theorem}
  \label{thm:generalpayLBSub} For bidders with subbaditive valuations, the mixed PoA of simultaneous bid-dependent auctions
   is at least $2$.
\end{theorem}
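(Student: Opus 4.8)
The plan is to lift the two--player, $m$--item construction in the proof of Theorem~\ref{thm:lower_suba} to an arbitrary bid--dependent payment rule, exactly in the way the proof of Theorem~\ref{1.58LBGen} lifts the OXS construction of Section~\ref{sec:submodular_LB}: the combinatorial skeleton is unchanged, but the bid distributions become item--specific and are assembled from $q^w_j$ and $q^l_j$, the valuations are rescaled by a factor $V$, and player~$2$'s bids across the items are correlated through a single uniform random variable. Concretely, player~$1$ is unit--demand with value $\alpha V$ for every nonempty bundle, and player~$2$ (whose valuation is easily checked to be subadditive) has value $V$ for every bundle of size $1,\dots,m-1$ and value $2V$ for the grand bundle; $V$ is chosen so that $V/m$ lies in the range of every $q^w_j$ (possible by the assumptions on $q^w_j$, as in the proof of Theorem~\ref{1.58LBGen}), $\alpha\in(1/m,1)$ will be fixed at the end, and ties go to player~$2$.

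In the proposed equilibrium $\B$, player~$1$ picks an item $j$ uniformly at random and bids on it according to a CDF $P_j$ (and $0$ on all other items), while player~$2$ draws $\rho\sim U[0,1]$ and bids $y_j=H_j^{-1}(\rho)$ on each item~$j$. I would define $H_j$ so that player~$1$'s expected utility for contesting item~$j$ with any bid in the support equals the constant $c_1:=\alpha V-V/m$, which forces
\[
H_j(x)=\frac{c_1+q^l_j(x)}{\alpha V-q^w_j(x)+q^l_j(x)},\qquad x\in[0,T_j],\quad q^w_j(T_j)=V/m,
\]
and define $P_j$ so that player~$2$'s utility is the same for every realization of $\rho$; solving that indifference condition gives
\[
P_j(x)=\frac{(m-1)q^w_j(x)+q^l_j(x)}{V-q^w_j(x)+q^l_j(x)},\qquad x\in[0,T_j],
\]
which also reaches $1$ exactly at $T_j$, so the two supports on item~$j$ coincide. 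A short computation using only the monotonicity of $q^w_j,q^l_j$ and $q^l_j\le q^w_j$ --- mirroring the validity check at the end of the proof of Theorem~\ref{1.58LBGen} --- shows that $H_j$ and $P_j$ are nondecreasing, hence valid CDFs; note also that $H_j(0)=c_1/(\alpha V)$ is the same for all $j$.

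To see that $\B$ is a mixed Nash equilibrium: for player~$1$, since player~$2$'s bids are perfectly correlated through $\rho$, the events ``player~$1$ wins item~$j$'' $=\{\rho<H_j(x_j)\}$ are nested, so bidding on a set of items (at positive bids, sorted by decreasing $H_j(x_j)$) gives her expected value $\alpha V$ times the largest $H_j(x_j)$ and expected payment additive over those items; substituting the defining relation of $H_j$ (as an inequality $\ge$ for bids above $T_j$) and using $H_j(x_j)\ge H_j(0)=c_1/(\alpha V)$, this telescopes to at most $c_1$, with equality for any single--item bid in $[0,T_j]$, so her strategy is a best response. For player~$2$, against any fixed deviating bid vector $z$ player~$1$ contests and outbids her on item~$j$ with probability $\tfrac1m(1-P_j(z_j))$ and these events are disjoint, so she loses at most one item; working out her value and total payment yields $u_2(z)=2V+\tfrac1m\sum_j\bigl(-m\,q^w_j(z_j)+(1-P_j(z_j))(q^w_j(z_j)-q^l_j(z_j)-V)\bigr)$, whose $j$--th summand is exactly $-V$ for $z_j\in[0,T_j]$ and $\le-V$ for $z_j>T_j$ (where $P_j(z_j)=1$ and $q^w_j(z_j)\ge V/m$), hence $u_2(z)\le V$, with equality on the equilibrium path. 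Finally, the optimum gives all items to player~$2$ for welfare $2V$, whereas in $\B$ the welfare is $\alpha V+V$ whenever player~$1$ wins one of her contested items and $2V$ otherwise, and player~$1$ wins at least whenever player~$2$ bids $0$ on every item, which happens with probability $c_1/(\alpha V)$; thus $SW(\B)\le 2V-\tfrac{c_1}{\alpha V}\cdot V(1-\alpha)=V\bigl(1+\alpha+\tfrac{1}{m\alpha}-\tfrac1m\bigr)$, and with $\alpha=1/\sqrt m$ the PoA is at least $\tfrac{2}{1+2/\sqrt m-1/m}$, which tends to $2$ as $m\to\infty$ (the scaling $V$ cancels).

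I expect the main obstacle to be the simultaneous design of the two item--specific CDFs: one must recognize that $P_j$ has to be chosen so as to flatten player~$2$'s utility along the one--parameter family $\{(y_j(\rho))_j\}_\rho$ rather than item by item (which is what pins down its precise form), and then carry out the bookkeeping showing that spreading bids never helps player~$1$ --- a step that genuinely exploits the correlation of player~$2$'s bids, since with independent bids player~$1$ could profitably place many tiny bids. Checking that $H_j,P_j$ are genuine CDFs and handling ties when some $q^w_j$ is locally constant are routine and parallel the corresponding parts of Theorem~\ref{1.58LBGen}.
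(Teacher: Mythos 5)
Your proposal matches the paper's proof essentially verbatim: you use the same two-player, $m$-item instance (player 1 unit-demand with value $v=\alpha V$, player 2 valuing any proper nonempty subset at $V$ and the grand bundle at $2V$), the same item-specific CDFs (your $P_j$ and $H_j$ coincide with the paper's $G_j$ and $F_j$ after the substitution $v=\alpha V$), the same correlation of player 2's bids through a single uniform $\rho$, the same tie-breaking and the same parameter choice $v=V/\sqrt m$, and your equilibrium verification and welfare computation follow the same steps. This is the same approach as the paper's, just with slightly more compact bookkeeping for player 2's deviation utility.
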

\begin{proof}
  We consider $2$ players and $m$ items. Let $v$ and $V$ (with $v<V$) be positive reals
  to be defined later. Player
  $1$ has value $v$ for every non-empty subset of items; 
  player $2$ values with $V$ any non-empty strict subset of the items and
  with $2V$ the whole set of items. 
	Consider now the mixed strategy profile $\B$, where player $1$ picks item
  $l$ uniformly at random and bids $x_l$ for it and $0$ for the rest of the items,
  whereas, player $2$ bids $y_j$ for every item $j$. For $1\leq j \leq
  m$, $x_j$ and $y_j$ are drawn from distributions with the following CDFs
  $G_j(x)$ and $F_j(y)$, respectively:
  $$G_j(x)=\frac{(m-1)q^w_j(x)+q^l_j(x)}{V-q^w_j(x)+q^l_j(x)},
  \;\;x\in [0,T_j]; \quad\quad F_j(y) =
  \frac{v-V/m+q^l_{j}(y)}{v-q^w_j(y)+q^l_{j}(y)},\;\;y\in [0,T_j], $$
  where $T_j$ is the bid such that $q^w_j(T_j)=V/m$. We choose $V$
  such that $V/m$ is in the range of $q^w_j(\cdot)$ for all $j$
  (notice that due to the assumptions on $q^w_j$, there always exists
  such a value $V$). Furthermore, in $\B,$  the $y_j$'s are correlated in
  the following way: player $2$ chooses $\rho$ uniformly at random
  from the interval $[0, 1]$ and if $\rho \in \left[0,\frac{v-V/m}{v}\right)$ then 
	$y_j = 0$, otherwise $y_j= F_j^{-1}(\rho)$, for every
  $1\leq j \leq m$.\footnote{For each item $j$,  the way
    player $2$ chooses $y_j$ is equivalent to picking it according to the CDF
    $F_j(y)$.}  Note that for every two items $j_1$, $j_2$, it holds 
that $F_{j_1}(y_{j_1}) = F_{j_2}(y_{j_2})$.
  In case of a tie, player $2$  gets the item. Due to the
  continuity of $q^w_j$ and $q^l_j$, $G_j(x)$ and $F_j(x)$ are continuous
  and therefore none of the CDF have a mass point in any $x\neq 0$.
  
 We show below that $\B$ is a Nash equilibrium, and each of the $F_j$ and $G_j$ are valid cumulative distributions. 
The PoA can then be derived as follows.
 Player $2$ bids
  $0$ with probability $1- \frac V{mv}$ so, 
  $\E[SW(\B)] \leq \left(1- \frac V{mv} \right) (V+v) + \frac V{mv} 2V = V+v +\frac {V^2}{mv} - \frac Vm. $
  For $v=V/\sqrt{m}$, PoA $\geq
  \frac{2V}{V+\frac{2V}{\sqrt{m}}-\frac{V}{m}} =
  \frac{2}{1+\frac{2}{\sqrt{m}}-\frac{1}{m}}$ which, for large $m$,
  converges to $2$.

\begin{claim}
	\label{lem:SubLBNash}
  $\B$ is a Nash equilibrium. 
	\end{claim}

\begin{proof} If player
  $1$ bids any $x$ in the range of $(0,T_j]$ for a single item $j$ and
  zero for the rest, her utility is
  $F_j(x)(v-q^w_j(x))+(1-F_j(x))(-q^l_{j}(x))=F_j(x)(v-q^w_j(x)+q^l_{j}(x))-q^l_{j}(x)=v-V/m$. Since
  $G(0)=0$,  her utility is also $v-V/m$ if
  she bids according to $G(\cdot)$. Suppose player $1$ bids
  $x=(x_1,\ldots ,x_m)$, $(x_j \in [0,T_j])$ 
	with at least two positive bids. W.l.o.g., assume $F_1(x_1) =\max_i
  F_i(x_{i})$. If $y_1 \geq x_1$, player $1$ doesn't get any item,
  since for every $j$, $F_j(y_j) = F_1(y_1) \geq F_1(x_1) \geq
  F_j(x_j)$ and so $y_j \geq x_j$ (recall that in any tie player $2$
  gets the item). If $y_1 < x_1$, player $1$ gets at least the first
  item and has valuation $v$, but she cannot pay less than
  $q^w_1(x_1)$. So, this strategy is dominated by the strategy of
  bidding $x_1$ for the first item and zero for the rest. Bidding
  $x_j>T_j$ for any item guarantees the item but results in a
  utility less than $v-q^w_j(x_j)\leq v-q^w_j(T_j)=v-V/m$, so it is dominated by the strategy of
  bidding exactly $T_j$ for this item.
	
	If player $2$ bids $(y_1,\ldots, y_m)$ for every item $j$ so that $y_j\in [0,T_j]$, then (since player 1 bids positive for any particular item $j$ with probablility $1/m$)  her expected utility is  

  \begin{eqnarray*}
    &&\frac{1}{m}\sum_{j=1}^m \left(
      G_j(y_j)(2V-\sum_{\substack{k=1}}^m
      q^w_{k}(y_k))+(1-G_j(y_j))(V-\sum_{\substack{k=1\\k\neq j}}^m
      q^w_k(y_k)-q^l_{j}(y_j)) \right) \\
    &=&\frac{1}{m}\sum_{j=1}^m \left(V+ G_j(y_j)
      (V-q^w_{j}(y_j)+q^l_{j}(y_j)) -\sum_{\substack{k=1\\k\neq j}}^m
      q^w_k(y_k)-q^l_{j}(y_j)\right) \\
    &=&\frac{1}{m}\sum_{j=1}^m \left(V+ (m-1)q_j^w(y_j)+q_j^l(y_j) -\sum_{\substack{k=1\\k\neq j}}^m
      q^w_k(y_k)-q^l_{j}(y_j)\right) \\      
    &=&\frac{1}{m}\left( mV+m\sum_{j=1}^m q^w_j(y_j) -m\sum_{k=1}^m q^w_k(y_k)
    \right) = V.
  \end{eqnarray*}
  Bidding greater than $T_j$ for any item is dominated by the strategy
  of bidding exactly $T_j$ for this item.
  Overall, $\B$ is Nash equilibrium.\end{proof}

 \begin{claim}
	\label{lem:SubLB_CDF}
  $G_j(\cdot)$ and $F_j(\cdot)$ are valid cumulative distributions. 
	\end{claim}

 \begin{proof}   
	It is sufficient to show that for every $j,$ $G_j(T_j)=F_j(T_j)=1$ and 
  $G_j(x)$ and $F_j(x)$ are non-decreasing in $[0,T_j].$ In the following  we skip index $j$.

  $$G(T)=\frac{(m-1)q^w(T)+q^l(T)}{V-q^w(T)+q^l(T)} = \frac{(m-1)\frac Vm+q^l(T)}{V-\frac Vm+q^l(T)}=1,$$ $$F(T) = \frac{v-V/m+q^l(T)}{v-q^w(T)+q^l(T)}=\frac{v-V/m+q^l(T)}{v-V/m+q^l(T)}=1,$$
  Now let $x_1 >x_2$, $x_1,x_2 \in [0, T_j].$
  
  \begin{eqnarray*}
    &&G(x_1)-G(x_2)\\
    &=&\frac{(m-1)q^w(x_1)+q^l(x_1)}{V-q^w(x_1)+q^l(x_1)}-\frac{(m-1)q^w(x_2)+q^l(x_2)}{V-q^w(x_2)+q^l(x_2)}\\
    &=&
    \frac{V(m-1)(q^w(x_1)-q^w(x_2))+m(q^w(x_1)q^l(x_2)-q^l(x_1)q^w(x_2))+
      V(q^l(x_1)-q^l(x_2))}{(V-q^w(x_1)+q^l(x_1))(V-q^w(x_2)+q^l(x_2))}\\
    &=& \frac{(V(m-1)+m\cdot
      q^l(x_2))(q^w(x_1)-q^w(x_2))+m(\frac{V}{m}-q^w(x_2))(q^l(x_1)-q^l(x_2))}{(V-q^w(x_1)+q^l(x_1))(V-q^w(x_2)+q^l(x_2))}
    \geq 0\\
  \end{eqnarray*}

  \begin{eqnarray*}
    &&F_j(x_1)-F_j(x_2)\\
    &=& \frac{v-V/m+q^l(x_1)}{v-q^w(x_1)+q^l(x_1)}-\frac{v-V/m+q^l(x_2)}{v-q^w(x_2)+q^l(x_2)}\\
    &=& \frac{(v-\frac Vm )(q^w(x_1)-q^w(x_2)) +\frac Vm
      (q^l(x_1)-q^l(x_2))+q^w(x_1)q^l(x_2)-q^l(x_1)q^w(x_2)}{(v-q^w(x_1)+q^l(x_1))(v-q^w(x_2)+q^l(x_2))}\\
    &=& \frac{(v-\frac Vm +q^l(x_2))(q^w(x_1)-q^w(x_2)) +(\frac Vm -
      q^w(x_2))(q^l(x_1)-q^l(x_2))}{(v-q^w(x_1)+q^l(x_1))(v-q^w(x_2)+q^l(x_2))}
    \geq 0\\
  \end{eqnarray*}
  For both inequalities we use the monotonicity of
  $q,$ moreover that $q^w_j(x)\leq V/m$ for $x\in[0,T_j],$ and
  $v=V/\sqrt{m}$ hold.
  \end{proof}\end{proof}


\section{Discriminatory auctions}
\label{sec:discriminatory}



Discriminatory auctions are
\emph{multi-unit} auctions, i.e. $m$ units of the same item are sold
to $n$ bidders. We denote the valuation of player $i$ for $j$ units of
the item by $v_i(j).$ The valuation $v_i$ is \emph{submodular}, if the
items have decreasing marginal values, that is, $v_i(s+1)-v_i(s)\geq
v_i(t+1)-v_i(t)$ holds if $s\leq t.$ It is called \emph{subadditive},
if $v_i(s+t)\leq v_i(s)+v_i(t).$

We assume a \emph{standard} multi-unit auction in which each player
submits a vector $b_i$ of $m$ decreasing bids $b_i(1)\geq
b_i(2)\geq\ldots\geq b_i(m)\geq 0.$ The bidding profile of all players
is then $\b=(b_1,b_2,\ldots,b_n).$ In the allocation
$\xi(\b)=(\xi_1,\xi_2,\ldots,\xi_n),$ bidder $i$ gets $\xi_i$ units of
the item, if $\xi_i$ of his bids were among the $m$ highest bids of
the players. In the case of \emph{discriminatory pricing,}
every bidder $i$ pays the sum of his \emph{winning} bids, i.e. his
$\xi_i$ highest bids. 

In this section, we complement the
results by de Keijzer et
al.~\cite{KMST13} for the case of subadditive valuations, by providing a matching
lower bound of $2$ for the standard bidding format. For the case of submodular valuations, we 
provide a lower bound of $1.109$. 
We could reprove their upper bound
of $e/(e - 1)$ for submodular bids, using our non-smooth approach. 
Due to the different nature of this auction, the proof is not identical 
with the one for the first-price auction. Therefore, we present the complete
proof of this upper bound.

\subsection{Preliminaries}

 The \emph{social welfare} of the allocation $\xi(\b)$ is
$SW(\b)=SW(\xi(\b))=\sum_{i=1}^n v_i(\xi_i).$ The players have
quasi-linear utility
functions: $$u_i(\b)=v_i(\xi_i)-\sum_{j=1}^{\xi_i} b_i(j)$$

Similarly to item bidding auctions, having a \emph{mixed strategy}
$B_i,$ means that $b_i$ is drawn from the set of all possible
decreasing bid vectors according to the distribution $B_i,$ which we
denote by $b_i\sim B_i.$
Given a valuation profile $\v$ of the players, an \emph{optimal allocation}
$\o(\v) = \o^{\v} = (\o_1^{\v},\ldots \o_n^{\v})$ is one that maximizes $\sum_{i=1}^n v_i(o_i^{\v}).$

Consider a discriminatory auction with submodular valuations, with $n$
players and $m$ items. Recall that $v_i(j)$ denotes the valuation of
player $i$ for $j$ copies of the item. For any player $i$, we define
$v_{ij}=\frac{v_i(j)}{j}$. It is easy to see that for submodular
functions, $v_{ij} \geq v_{i(j+1)}$ for all $j \in [m-1]$. Let
$\beta_j(\b)$ be the $j^{th}$ \emph{lowest} bid among the \emph{winning} bids under
the strategy profile $\b.$  
Consider any randomized bidding profile
$\B = (B_1,...,B_n)$. For this $\B,$
$\beta_j(\b)$ is a random variable depending on $\b\sim\B$. We define the following
functions:
\begin{eqnarray*}
F_{ij}(x) &=& \P[\beta_j(\b_{-i}) \leq x] \qquad\qquad\qquad\qquad\qquad\qquad\qquad\qquad\qquad\;\text{for }1\leq j \leq m,\\
G_{ij}(x) &=& \P[\beta_j(\b_{-i}) \leq x < \beta_{j+1}(\b_{-i})]=F_{ij}(x) - F_{i(j+1)}(x) \qquad\quad \text{for }1\leq j \leq m-1.
\end{eqnarray*}
We define separately $G_{im}(x) = \P[\beta_m(\b_{-i}) \leq
x]=F_{im}(x).$ Notice that $F_{ij}(x)$ is the CDF of $\beta_j(\b_{-i});$ moreover for the $G_{ij}$ holds that 

\begin{eqnarray}
F_{ij}(x) &=& \sum_{k=j}^m G_{ik}(x), \notag\\
\sum_{j=1}^{m'} F_{ij}(x) &=& \sum_{j=1}^{m'} jG_{ij}(x) + \sum_{j=m'+1}^m m'G_{ij}(x). \label{sumFG}
\end{eqnarray}

We further define $F_i^{\text{av}}(x) = \frac{1}{o_i^{\v}}\sum_{\substack{j=1}}^{o_i^{\v}}F_{ij}(x),$ 
and let $\beta_i^{\text{av}}$ be a random variable with $F_i^{\text{av}}(x)$ as CDF.
$F_i^{\text{av}}(x)$ is a cumulative distribution function defined on
$\mathbb{R}^+$, since $F_i^{\text{av}}(0)=0$, $\lim_{x \to +\infty}
(F_i^{\text{av}}(x))=1$ and $F_i^{\text{av}}(x)$ is the average of
non-decreasing functions, so it is itself a non-decreasing
function. Moreover, 
\begin{eqnarray*}
\E[\beta_i^{\text{av}}] &=& \int_0^\infty
(1-F_i^{\text{av}}(x)) dx = \int_0^\infty
(1-\frac{1}{o_i^{\v}}\sum_{\substack{j=1}}^{o_i^{\v}}F_{ij}(x)) dx\\
& =& \frac{1}{o_i^{\v}}\sum_{\substack{j=1}}^{o_i^{\v}}\int_0^\infty \left(1-F_{ij}(x)\right) dx
= \frac{1}{o_i^{\v}}\sum_{j=1}^{o_i^{\v}} \beta_j(\B_{-i}),
\end{eqnarray*}
where $\beta_j(\B_{-i})=\E_{\b_{-i}\sim\B_{-i}} [\beta_j(\b_{-i})].$
Note that the above functions depend on some randomized bidding profile 
$\B_{-i}$ and on $\v$. These will be clear from context when we use 
these functions below.

\subsection{Upper bound for submodular valuations}
\label{sec:discriminatory_upper_coarse}

\begin{lemma}
\label{lem:submodularUpper1.58Discr}
For any submodular valuation profile $\v$ and any randomized bidding profile 
$\B$, there exists a pure bidding strategy $\a_i(\v,\B_{-i})$ for each player $i$, 
such that:
$$\sum_{i=1}^n u_i(\a_i(\v,\B_{-i}),\B_{-i})\geq \left(1-\frac{1}{e}\right)\sum_{i=1}^n v_i(o_i^{\v})-
\sum_{j=1}^m \beta_j(\B),$$ where $\beta_j(\B)=\E_{\b\sim \B}[\beta_j(\b)].$
\end{lemma}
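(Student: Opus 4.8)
The plan is to mimic the single-item argument of Lemma~\ref{lem:1.58upper1}/Theorem~1.2, but now for each player~$i$ I must supply a single decreasing bid vector $\a_i$ that, when played against $\B_{-i}$, captures a constant fraction of $v_i(o_i^{\v})$ minus the prices. First I would reduce to choosing, for each unit position $j\le o_i^{\v}$, a threshold bid and then sorting these thresholds into a decreasing vector. The natural quantity to control is the ``average winning-bid CDF'' $F_i^{\mathrm{av}}$ introduced in the preliminaries, together with $\E[\beta_i^{\mathrm{av}}]=\frac{1}{o_i^{\v}}\sum_{j=1}^{o_i^{\v}}\beta_j(\B_{-i})$. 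The key structural fact I would isolate first is a ``tie''-type lemma: if player~$i$ bids a common value $a$ on enough coordinates, then the number of units she wins stochastically dominates what $F_i^{\mathrm{av}}$ predicts, and since $v_i$ is submodular, $v_{ij}=v_i(j)/j$ is non-increasing, so bidding $a$ on her top $o_i^{\v}$ coordinates yields expected valuation at least (roughly) $o_i^{\v}\cdot F_i^{\mathrm{av}}(a)\cdot v_{i o_i^{\v}}$ while paying at most $o_i^{\v}\cdot F_i^{\mathrm{av}}(a)\cdot a$ per won unit — this is the multi-unit analogue of $F(a)(v-a)$.

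Next I would apply Lemma~\ref{lem:1.58upper1} with $v=v_{i o_i^{\v}}$ (or, more carefully, summed telescopically over the marginal values $v_i(j)-v_i(j-1)$ for $j\le o_i^{\v}$, picking for each $j$ a separate threshold $a_{ij}$ and then rearranging the $a_{ij}$ into a decreasing vector $\a_i$) to get, for a maximizing choice $a^*$,
\[
F_i^{\mathrm{av}}(a^*)\bigl(v_{i o_i^{\v}}-a^*\bigr)+\E[\beta_i^{\mathrm{av}}]\ \ge\ \Bigl(1-\tfrac1e\Bigr)v_{i o_i^{\v}}.
\]
Multiplying by $o_i^{\v}$ turns the left side into a lower bound on $u_i(\a_i,\B_{-i})$ plus $\sum_{j\le o_i^{\v}}\beta_j(\B_{-i})$, and the right side into $(1-1/e)\,o_i^{\v}v_{i o_i^{\v}}\ge(1-1/e)v_i(o_i^{\v})$ — wait, here I must be careful: $o_i^{\v}v_{i o_i^{\v}}=v_i(o_i^{\v})$ exactly, so that step is clean. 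Summing over $i$ gives
\[
\sum_i u_i(\a_i(\v,\B_{-i}),\B_{-i})\ \ge\ \Bigl(1-\tfrac1e\Bigr)\sum_i v_i(o_i^{\v})-\sum_i\sum_{j\le o_i^{\v}}\beta_j(\B_{-i}).
\]
The final step is to bound $\sum_i\sum_{j\le o_i^{\v}}\beta_j(\B_{-i})$ by $\sum_{j=1}^m\beta_j(\B)$: since $\sum_i o_i^{\v}=m$, the multiset of price-terms $\{\beta_j(\b_{-i}):i,\ j\le o_i^{\v}\}$ can be matched to the $m$ winning bids $\beta_1(\b),\dots,\beta_m(\b)$ of the actual profile, using that removing one bidder only decreases the order statistics ($\beta_j(\b_{-i})\le\beta_j(\b)$) and a counting/rearrangement argument; taking expectations gives exactly $\sum_j\beta_j(\B)$.

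The main obstacle I expect is the bookkeeping in two places. First, converting the per-position optimal thresholds $a_{ij}$ (one per unit $j\le o_i^{\v}$, each tuned to the marginal value $v_i(j)-v_i(j-1)$) into a single \emph{decreasing} bid vector $\a_i$ without losing any utility — this needs the fact that the $\beta_j(\b_{-i})$ are themselves decreasing in... increasing in $j$, so higher units face weakly higher competing thresholds, which should let a rearrangement-inequality argument push the larger $a_{ij}$ onto the coordinates that win the lower-competition units; doing this cleanly while the valuations are merely submodular (not additive) is the delicate part. Second, the price-comparison $\sum_i\sum_{j\le o_i^{\v}}\beta_j(\b_{-i})\le\sum_{j=1}^m\beta_j(\b)$ pointwise in $\b$ must be justified carefully — it is the multi-unit analogue of inequality~\eqref{prices} and relies on $o_i^{\v}$ summing to $m$ and on the submodular-specific identity $F_{ij}=\sum_{k\ge j}G_{ik}$ together with \eqref{sumFG}; I would prove it by a direct combinatorial matching of winning bids, possibly by induction on $n$.
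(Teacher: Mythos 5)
Your primary plan — bid a single value $a_i$ on the first $o_i^{\v}$ coordinates and zero elsewhere, reduce to the scalar quantity $o_i^{\v}\,F_i^{\mathrm{av}}(a_i)\,(v_{io_i^{\v}}-a_i)$ via submodularity and $\E[\xi_i]=\sum_{j\le o_i^{\v}}F_{ij}(a_i)$, apply Lemma~\ref{lem:1.58upper1} with $v=v_{io_i^{\v}}$, and close with the multi-unit analogue of~\eqref{prices} — is exactly the paper's proof. The per-position threshold alternative and its rearrangement bookkeeping are unnecessary (the constant-bid vector is already non-increasing, so no sorting issue arises), and the price comparison uses only that $\sum_i o_i^{\v}=m$, that $\beta_j(\b)$ is non-decreasing in $j$, and that $\beta_j(\b_{-i})\le\beta_j(\b)$, not the CDF identities you cite.
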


\begin{proof} Recall that $v_{ij}=\frac{v_i(j)}{j}.$ Let $a_i$ be the 
value that maximizes $(v_{io_i^{\v}}-a_i)F_i^{\text{av}}(a_i)$.
Let $\a_i(\v,\B_{-i})=(\underbrace{a_i,
    \ldots ,a_i}_{o_i^{\v}},\underbrace{0,\ldots,0}_{m-o_i^{\v}})$ be the selected strategy
  profile for player $i$. Observe that by the definition of $G_{ij}(),$
	$G_{ij}(a_i)$ is the probability of $a_i$ being the $j^{th}$ 
lowest bid among winning bids under $\B_{-i}$. Therefore, if player $i$ bids 
according to $\a_i(\v,\B_{-i})$, $G_{ij}(a_i)$ is the 
probability of player $i$ getting exactly $j$ items, if $j \leq o_i^{\v}$, and 
$o_i^{\v}$ items, if $ j > o_i^{\v}$, under 
the bidding profile $(\a_i(\v,\B_{-i}),\B_{-i})$.
 Similarly to Lemma \ref{lem:tie}, we get

\begin{eqnarray*}
 u_i(\a_i(\v,\B_{-i}),\B_{-i}) 
&\geq& \sum_{\substack{j=1}}^{o_i^{\v}} G_{ij}(a_i)(v_i(j)-ja_i) + \sum_{j=o_i^{\v}+1}^m G_{ij}(a_i) (v_i(o_i^{\v})-o_i^{\v}a_i)\\
&=& \sum_{\substack{j=1}}^{o_i^{\v}} jG_{ij}(a_i)(v_{ij}-a_i) + \sum_{j=o_i^{\v}+1}^m o_i^{\v} G_{ij}(a_i) (v_{io_i^{\v}}-a_i)\\
&\geq& (v_{io_i^{\v}}-a_i)\left( \sum_{\substack{j=1}}^{o_i^{\v}} jG_{ij}(a_i) + \sum_{j=o_i^{\v}+1}^m o_i^{\v} G_{ij}(a_i)\right)\\
	&=&  (v_{io_i^{\v}}-a_i)\sum_{\substack{j=1}}^{o_i^{\v}}F_{ij}(a_i)
	= o_i^{\v} (v_{io_i^{\v}}-a_i)F_i^{\text{av}}(a_i) \\
	&\geq&  \left(1-\frac{1}{e}\right)o_i^{\v} v_{io_i^{\v}} - o_i^{\v} \E \left[\beta_i^{\text{av}}\right] 
	= \left(1-\frac{1}{e}\right)  v_i(o_i^{\v}) - o_i^{\v} \E \left[\beta_i^{\text{av}}\right]\\
	&=& \left(1-\frac{1}{e}\right)  v_i(o_i^{\v}) - \sum_{j=1}^{o_i^{\v}} \beta_j(\B_{-i})
\end{eqnarray*}
For the second inequality, $v_{ij} \geq v_{io_i}$ for submodular valuations and for the following equality, 
we used \eqref{sumFG} where $m'$ is set to $o_i^{\v}$. 
For the last inequality we apply Lemma \ref{lem:1.58upper1}, since $a_i$ maximizes the expression $(v_{io_i^{\v}}-a_i)F_i^{\text{av}}(a_i).$ 

For any pure strategy profile $\b$ 
and any valuation profile $\v$ it holds that 
$$  \sum_{j=1}^m \beta_j(\b) \geq \sum_{i=1}^n \sum_{j=1}^{o_i^{\v}} \beta_j(\b) 
	\geq \sum_{i=1}^n \sum_{j=1}^{o_i^{\v}} \beta_j(\b_{-i}). \label{pricesDis}$$
By summing up over all players and using this inequality the lemma follows.
\end{proof}


\begin{theorem}
  The coarse correlated PoA for the discriminatory auction is at most $\frac{e}{e-1}$, 
	when the players' valuations are submodular.
\end{theorem}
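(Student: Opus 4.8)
The plan is to deduce the bound directly from Lemma~\ref{lem:submodularUpper1.58Discr}, mirroring the argument already used for Theorem~\ref{thm:subadditivegeneral_mixed}. First I would fix a coarse correlated equilibrium $\B$ (with the valuation profile $\v$ fixed). Since $\B$ is a coarse correlated equilibrium and, for each player $i$, the vector $\a_i(\v,\B_{-i})$ provided by the lemma is an admissible unilateral deviation (it depends only on $\v$ and $\B_{-i}$, not on player $i$'s own randomization), we have $u_i(\B)\ge u_i(\a_i(\v,\B_{-i}),\B_{-i})$ for every $i$. Summing over all players and applying Lemma~\ref{lem:submodularUpper1.58Discr} gives
$$\sum_{i=1}^n u_i(\B)\ \ge\ \sum_{i=1}^n u_i(\a_i(\v,\B_{-i}),\B_{-i})\ \ge\ \left(1-\frac{1}{e}\right)\sum_{i=1}^n v_i(o_i^{\v})-\sum_{j=1}^m \beta_j(\B).$$

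Next I would relate the left-hand side to the expected social welfare via the payment identity for discriminatory pricing. For any pure profile $\b$, the total amount paid by all bidders equals the sum of all winning bids, and $\beta_1(\b),\dots,\beta_m(\b)$ is exactly an enumeration of those $m$ winning bids; hence $\sum_{i=1}^n\sum_{j=1}^{\xi_i}b_i(j)=\sum_{j=1}^m\beta_j(\b)$. Taking expectations over $\b\sim\B$ and using quasilinearity of utilities, $SW(\B)=\sum_{i=1}^n u_i(\B)+\sum_{j=1}^m \beta_j(\B)$. Substituting the inequality above, the price terms $\sum_{j=1}^m\beta_j(\B)$ cancel, yielding
$$SW(\B)\ \ge\ \left(1-\frac{1}{e}\right)\sum_{i=1}^n v_i(o_i^{\v})\ =\ \left(1-\frac{1}{e}\right)SW(\O),$$
so the coarse correlated PoA is at most $\frac{e}{e-1}$.

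At the level of this theorem there is essentially nothing hard left: the real work — that the deviation $\a_i(\v,\B_{-i})$ is a valid decreasing bid vector, that bidding $a_i$ on the top $o_i^{\v}$ coordinates makes $G_{ij}(a_i)$ the probability of winning exactly $j\le o_i^{\v}$ units, and that the per-player price terms telescope via $\sum_{j=1}^m\beta_j(\b)\ge\sum_i\sum_{j=1}^{o_i^{\v}}\beta_j(\b_{-i})$ — is all absorbed into Lemma~\ref{lem:submodularUpper1.58Discr}. The only point I would be careful to state explicitly is the admissibility of $\a_i(\v,\B_{-i})$ as a deviation in the coarse correlated equilibrium inequality, which holds by construction since it is a deterministic function of $\v$ and $\B_{-i}$.
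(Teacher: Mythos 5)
Your proposal is correct and matches the paper's proof almost verbatim: both invoke the coarse correlated equilibrium inequality against the deviation $\a_i(\v,\B_{-i})$ from Lemma~\ref{lem:submodularUpper1.58Discr}, sum over players, and then use the identity $SW(\B)=\sum_i u_i(\B)+\sum_j\beta_j(\B)$ to cancel the price terms. The only difference is that you spell out the payment identity and the admissibility of the deviation a bit more explicitly, which the paper leaves implicit.
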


\begin{proof}
Suppose $\B$ is a coarse correlated equilibrium (in this case $\v$ is fixed). 
By Lemma \ref{lem:submodularUpper1.58Discr} and the definition of coarse 
correlated equilibrium, we have that
\begin{eqnarray*}
\sum_{i=1}^n u_i(\B) &\geq& \sum_{i=1}^n u_i(\a_i(\v,\B_{-i}),\B_{-i})\\
&\geq& \left(1-\frac{1}{e}\right)\sum_{i=1}^n v_i(o_i^{\v}) - 
\sum_{j=1}^m \beta_j(\B)
\end{eqnarray*}

After rearranging the terms
$SW(\B) = \sum_i u_i(\B) + \sum_j \beta_j(\B) \geq \left(1-\frac{1}{e}\right) SW(\o).$
\end{proof}

\begin{theorem}
  The BPoA of the discriminatory auction is at most $\frac{e}{e-1}$, 
	when the players' valuations are submodular.
\end{theorem}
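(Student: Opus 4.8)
The plan is to follow the template of the Bayesian arguments in Theorems~\ref{thm:generalpay_bayesGen} and~\ref{thm:subadditivegeneral_bayes}, using Lemma~\ref{lem:submodularUpper1.58Discr} in place of the single-shot deviation bounds used there. Let $\B$ be a Bayesian Nash equilibrium in which each $v_i\sim D_i$ independently, and let $\C=(C_1,\dots,C_n)$ be the bidding distribution that absorbs the randomness of the valuations, i.e.\ $b_i(v_i)\sim C_i$; exactly as in the earlier proofs, $\C_{-i}$ depends only on $\D_{-i}$ and not on a particular realization of $v_{-i}$, and $\E_{\v_{-i}}[u_i^{v_i}(\B(\v))]=u_i^{v_i}(\B_i(v_i),\C_{-i})$. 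For player $i$ holding valuation $v_i$, the deviation I would analyze is: sample a fresh $\v'_{-i}\sim D_{-i}$, let $o_i^{(v_i,\v'_{-i})}$ be the number of units $i$ gets in an optimal allocation for $(v_i,\v'_{-i})$, and play $\a_i((v_i,\v'_{-i}),\C_{-i})$, the pure decreasing bid vector furnished by Lemma~\ref{lem:submodularUpper1.58Discr} for the profile $(v_i,\v'_{-i})$, with the CDFs there (in particular $F_i^{\text{av}}$) computed with respect to $\C_{-i}$. This is a legitimate randomized deviation for $i$, since its construction consults only $v_i$, a sample from $D_{-i}$, and the fixed profile $\C_{-i}$.

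Plugging this deviation into the Bayesian Nash inequality, then taking $\E_{v_i\sim D_i}$, summing over all players, and relabeling the dummy profile $(v_i,\v'_{-i})$ as $\v\sim\D$ (valid because the $D_i$ are independent), I would obtain
\[
\sum_i\E_{\v}\bigl[u_i^{v_i}(\B(\v))\bigr]\;\ge\;\E_{\v}\Bigl[\sum_i u_i^{v_i}\bigl(\a_i(\v,\C_{-i}),\C_{-i}\bigr)\Bigr].
\]
Applying Lemma~\ref{lem:submodularUpper1.58Discr} with the randomized bidding profile $\C$ for each realized $\v$ gives $\sum_i u_i^{v_i}(\a_i(\v,\C_{-i}),\C_{-i})\ge(1-1/e)\sum_i v_i(o_i^{\v})-\sum_{j=1}^m\beta_j(\C)$, where $\beta_j(\C)=\E_{\b\sim\C}[\beta_j(\b)]$; taking expectation over $\v$ yields $\sum_i\E_{\v}[u_i^{v_i}(\B(\v))]\ge(1-1/e)\,\E_{\v}[SW(\o^{\v})]-\sum_j\beta_j(\C)$. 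Finally, because the total payment of all bidders under any pure $\b$ equals the sum $\sum_{j=1}^m\beta_j(\b)$ of the winning bids, we have $\E_{\v}[SW(\B(\v))]=\sum_i\E_{\v}[u_i^{v_i}(\B(\v))]+\sum_j\beta_j(\C)$, and combining the two displays gives $\E_{\v}[SW(\B(\v))]\ge(1-1/e)\,\E_{\v}[SW(\o^{\v})]$, i.e.\ the $\mathrm{BPoA}$ is at most $e/(e-1)$.

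The main obstacle is the bookkeeping around the valuation randomness rather than any genuinely new estimate. One has to check: (i) that the proposed deviation is implementable by $i$ knowing only $v_i$, which is precisely why it is phrased through a sampled $\v'_{-i}$ and the valuation-free profile $\C_{-i}$; (ii) that the relabeling $(v_i,\v'_{-i})\mapsto\v$ is distribution preserving, which uses the independence of the $D_i$; and (iii) that Lemma~\ref{lem:submodularUpper1.58Discr}, although stated for a fixed valuation profile, may be invoked for each realized $\v$ with $\C_{-i}$ in the role of ``$\B_{-i}$'' --- in particular that the price inequality $\sum_{j=1}^m\beta_j(\b)\ge\sum_i\sum_{j=1}^{o_i^{\v}}\beta_j(\b_{-i})$ appearing in that lemma's proof holds for the realized $\v$ and every fixed $\b$. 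Once these points are settled, the argument is a direct transcription of the coarse-correlated proof with the Nash step replaced by its Bayesian analogue.
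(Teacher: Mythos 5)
Your proposal is correct and follows the paper's own argument essentially verbatim: fix a Bayesian Nash equilibrium $\B$, introduce the valuation-absorbing profile $\C$, deviate to the Lemma~\ref{lem:submodularUpper1.58Discr} bid built from a freshly sampled $\v'_{-i}$ together with $\C_{-i}$, relabel $(v_i,\v'_{-i})$ as $\v$ by independence, apply the lemma pointwise, and close with $\E_\v[SW(\B(\v))]=\sum_i\E_\v[u_i(\B(\v))]+\sum_j\beta_j(\C)$. The bookkeeping caveats you flag (implementability of the deviation, distribution-preserving relabeling, invoking the lemma with $\C$ in the role of $\B$) are exactly the points the paper handles, and your treatment of them is sound.
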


\begin{proof}
Suppose $\B$ is a Bayesian Nash Equilibrium and the valuation of
  each player $i$ is $v_{i}\sim D_{i}$, where the $D_i$'s are
  independent distributions. We denote by $\C=(C_1,C_2,\ldots,C_n)$
  the bidding distribution in $\B$ which includes the randomness of
  both the bidding strategy $\b$ and of the valuations $\v$, that is $b_i(v_i)\sim
  C_i$. Then the utility of agent $i$ with valuation $v_i$ can be
  expressed by $u_i(\B_i(v_i), \C_{-i})$. It should be noted that
  $\C_{-i}$ depends on $D_{-i}$ but {\em not} on the $\v_{-i}$. For any
  agent $i$ and any submodular valuation $v_i\in V_i$, consider the
  following deviation: sample $\v'_{-i}\sim \D_{-i}$ and bid
  $a_i((v_i,\v'_{-i}),\C_{-i})$ as defined in Lemma
  \ref{lem:submodularUpper1.58Discr}. By the definition of the Bayesian Nash equilibrium, we have

  \[\E_{\v_{-i}}[u^{v_i}_i(\B_i(v_i),\B_{-i}(\v_{-i}))]\ge \E_{\v'_{-i}}[u^{v_i}_i(a_i((v_i,\v'_{-i}),\C_{-i}), \C_{-i})]\]
  By taking expectation over $v_i$ and summing up over all agents, 
  \begin{align*}
    &\sum_{i=1}^n \E_{\v}[u_i(\B(\v))]\\
    \geq&\sum_{i=1}^n\E_{v_i,\v'_{-i}}[u^{v_i}_i(a_i((v_i,\v'_{-i}),\C_{-i}),\C_{-i})]\\
    =&\E_{\v}\left[\sum_{i=1}^nu_i^{v_i}(a_i(\v,\C_{-i}),
      \C_{-i})\right]\mbox{(by relabeling $\v'_{-i}$ by $\v_{-i}$)}\\
    \ge& \left(1-\frac{1}{e}\right)\sum_{i=1}^n v_i(o_i^{\v})-
\sum_{j=1}^m \beta_j(\C) \\
=& \left(1-\frac{1}{e}\right)\sum_{i=1}^n v_i(o_i^{\v})-
\sum_{j=1}^m \E_{\v}[\beta_j(\B(\v))]
  \end{align*}
  So, $\E_{\v}[SW(\B(\v))]= \sum_i\E_{\v}[u_i(\B(\v))]+ \sum_j \E_{\v}[\beta_j(\B(\v))]\ge
  \left(1-\frac{1}{e}\right) \E_{\v}[SW(\o^{\v})].$
\end{proof}

\subsection{Lower bounds}
\label{sec:discriminatory_lower}


\subsubsection{Submodular valuations}
\label{sec:discriminatory_lower_submod}

\begin{theorem} The price of anarchy for submodular discriminatory
  auctions is at least $1.099.$
\end{theorem}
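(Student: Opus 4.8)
The plan is to construct, for a suitable number $m$ of units, a discriminatory multi-unit auction with submodular bidder valuations together with a mixed Nash equilibrium $\B$ whose expected social welfare is a factor at least $1.099$ below the optimum $SW(\o)$. The construction follows the recipe of the earlier lower bounds in this paper (Sections~\ref{sec:submodular_LB} and~\ref{sec:subadditive_LB}): start from the worst-case price distribution $\hat F$ exhibited after Lemma~\ref{lem:1.58upper1}, which has the property that $\hat F(x)(v-x)$ is constant on an interval, and build a Nash equilibrium in which the distribution of the relevant competing bid is (a distortion of) $\hat F$ while the ``top'' bidder never outbids it. Concretely I would use two bidders: a ``small'' bidder that is essentially unit-demand with value $v$ (so $v_1(k)=v$ for every $k\ge 1$), who in $\B$ places a random positive bid on a single unit and $0$ on the rest, and a ``large'' bidder with a submodular, nearly additive valuation (for instance $v_2(k)=k$, capped so as to remain submodular), who in $\B$ bids a common random value on all $m$ units. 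The optimum gives all units to the large bidder, whereas in $\B$ the small bidder wins one unit with constant probability, destroying a $\Theta(v)$ fraction of the welfare on that unit.

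For the equilibrium I would choose the two CDFs $G$ (small bidder) and $F$ (large bidder) so that, exactly as in the proof of Theorem~\ref{thm:lower_suba}, the small bidder's expected utility is the same constant for every bid in the support of $G$ and the large bidder's expected utility is the same constant over the support of $F$; the shape $\hat F(x)=\frac{v}{e(v-x)}$ is used precisely because it forces the first indifference. One then checks that the remaining deviations are not profitable: for the small bidder, bidding on two or more units is dominated because the discriminatory (pay-your-winning-bids) rule makes him risk paying for the extra units whenever he wins them, and bidding above the top of the support only lowers his utility; the large bidder's incentive constraint is verified by the same telescoping-payments cancellation as in the subadditive proof. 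Since a single player's bids on different units must be non-increasing and the payment couples them, the realization of $\hat F$ has to be spread over the $m$ units in a \emph{correlated} fashion --- draw one uniform $\rho\in[0,1]$ and let the bid on each unit be the corresponding quantile --- exactly as in Section~\ref{sec:xos-LB}. Finally I would compute $SW(\o)$ and $\E_{\b\sim\B}[SW(\b)]$ in closed form, optimize over the free parameters $v$ and $m$ (letting $m$ grow), and read off that the ratio is at least $1.099$.

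The main obstacle is making both indifference conditions hold exactly while keeping $G$ and $F$ valid cumulative distribution functions on the admissible range of non-increasing bids; this is the same kind of bookkeeping as in the proof of Claim~\ref{lem:SubLB_CDF}, but it is constrained further by the decreasing-bid requirement, so the distribution can only approximate $\hat F$ rather than match it, which is presumably why the bound is $\approx 1.1$ rather than $e/(e-1)$. A secondary, routine-but-fiddly obstacle is the final optimization over $v$ and $m$ needed to extract the numerical constant $1.099$; I would carry that out last, once the equilibrium has been verified symbolically.
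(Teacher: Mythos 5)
Your two-bidder skeleton (a unit-demand ``small'' bidder of value $v$ bidding on one unit, an additive ``large'' bidder bidding a common value on all units, CDFs of the Hassidim et al.\ shape $c/(v-x)$, indifference on both sides) is exactly what the paper uses, so the qualitative idea is sound. But the final step of your plan — ``optimize over $v$ and $m$ (letting $m$ grow) and read off $1.099$'' — goes in the wrong direction and would \emph{not} give the result. If the large bidder's valuation is additive with marginal value $1$ per unit and the small bidder steals at most one unit, the welfare you destroy is at most $1-v$ out of an optimum of $m$, so the PoA of your construction is bounded by $m/\bigl(m-(1-v)\bigr)\to 1$ as $m\to\infty$. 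The paper does the opposite: it fixes $m=2$ (valuation $(1,2)$ for the large bidder, $(v,v)$ for the small one), verifies the equilibrium with $G(x)=x/(1-x)$ and $F(y)=(v-\tfrac12)/(v-y)$ on $[0,\tfrac12]$, and then optimizes only over $v$, finding the maximum at $v\approx 0.643$ with $\E[SW]\approx 1.818$, giving PoA $\approx 1.099$. A footnote records that $m=3$ pushes this to $1.109$ and that the bound \emph{decreases} to $1$ for large $m$ — the opposite of what your heuristic predicts. So you need to replace ``let $m$ grow'' with ``set $m=2$ (or $3$)''.

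Two smaller points. First, the correlated-quantile trick you borrow from Section~\ref{sec:xos-LB} is unnecessary here: the large bidder bids the \emph{same} value on every unit, which is automatically non-increasing, and the small bidder bids a positive amount on exactly one unit, so the decreasing-bid constraint is trivially satisfied without any coupling device. Second, your explanation that the bound falls short of $e/(e-1)$ ``because $F$ can only approximate $\hat F$'' misidentifies the obstruction; the limitation is the two-player structure itself (the paper explicitly says they do not believe this construction is tight), not an inability to hit $\hat F$ exactly.
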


\begin{proof} We present an example for a discriminatory auction with
  submodular valuations and show that the PoA of mixed Nash equilibria
 is at least 1.099.

  We design a game with two players and two identical items. Player 1
  has valuation $(v,v),$ i.e., her valuation is $v$ if she gets  one or two items;
  whereas player 2 has valuation $(1,2),$ i.e., he is additive with value $1$ for each item. 
  We use the following distribution functions defined
  by Hassidim et al.~\cite{HKMN11}:

$$G(x)=\frac{x}{1-x}, \qquad x\in [0,1/2]; \qquad\qquad F(y) = \frac{v-1/2}{v-y},\qquad y\in [0,1/2]. $$

Consider the following mixed strategy profile. Player $1$ bids $(x,0)$
and player $2$ bids $(y,y)$, where $x$ and $y$ are drawn from $G(x)$
and $F(y)$, respectively.  Noting that player $2$ bids 0 with
probability $F(0)=1-1/2v$, we need a tie-breaking rule for the case of bidding $0$,
 in which player $2$ always gets the item. We claim that this mixed
strategy profile is a Nash equilibrium.

First we prove that playing $(x,0)$ for player $1$ is a best response for every
$x\in[0,1/2]$. Notice that $(x,x')$ with $x'\leq x$, is dominated by
$(x,0)$, since if player $1$ gets at least one item, she should pay at
least $x$ and getting both items doesn't add to her utility.

$$u_{1}(x,0)=F(x)\cdot(v-x)=v-1/2.$$
Clearly, bidding higher than $1/2$ guarantees the item but the payment is higher. 

Now we need to
show that $(y,y)$ is a best response for player $2$, for every $y \in [0,1/2]$.  Consider any
strategy $(y,y')$ with $y,y'\in[0,1/2]$ and $y\ge y'$.

\begin{eqnarray*}
  u_{2}(y,y')&=&\P[x\leq y'](2-y-y')+\P[x> y'](1-y)\\
  &=&G(y')(2-y-y')+(1-G(y'))(1-y)= G(y')(1-y')+1-y=1+y'-y\leq 1,
\end{eqnarray*}
and $u_2(y,y)=1$ is maximum possible. Bidding strictly higher than $1/2$ for both
items is not profitable, since then her utility is $2-2y<1.$ 

Now we calculate the expected social welfare of this Nash
equilibrium.
\begin{align*}
  \E[SW]&=\P[y\geq x]2+\P[x>y](1+v)\\
  &=2-(1-v)\P[x>y]\\
  &=2-(1-v)\int_{0}^{1/2}F(x)dG(x)
\end{align*}
This expression is maximized for $v=0.643$. For this value of $v$,
$\E[SW]= 1.818$. Since $SW(\O) = 2$, we get PoA$=1.099$.\footnote{This
  bound can be improved to 1.109 by a similar construction with 3
  items. For large $m$ the bound goes to one, so we do not believe that
  this construction is tight. That is the reason why we present here
  the simplest version of 2 items.}
\end{proof}

\subsubsection{Subadditive valuations}
\label{sec:discriminatory_lower_subadd}

We provide a tight lower bound of $2$ for subadditive valuations
 in discriminatory auctions which is similar to the lower bound of Section 
\ref{sec:subadditive_LB}, adjusted to discriminatory auctions.

\begin{theorem} \label{thm:lower_suba} For discriminatory auctions the price of anarchy in mixed Nash equilibria is at least $2$ for bidders with subadditive valuations.
\end{theorem}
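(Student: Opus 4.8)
The plan is to adapt, essentially verbatim, the two-player construction of Theorem~\ref{thm:lower_suba} in Section~\ref{sec:subadditive_LB} to the setting of $m$ identical units. I would take two bidders: player~$1$ is unit-demand with $v_1(j)=v$ for every $j\ge 1$, where $v=1/\sqrt m$; player~$2$ has $v_2(j)=1$ for $1\le j\le m-1$ and $v_2(m)=2$, which is subadditive. The candidate mixed equilibrium $\B$ is: player~$1$ submits the decreasing bid vector $(x,0,\dots,0)$ with $x$ drawn from the CDF $G(x)=\tfrac{(m-1)x}{1-x}$ on $[0,1/m]$, and player~$2$ submits $(y,y,\dots,y)$ with $y$ drawn from $F(y)=\tfrac{v-1/m}{v-y}$ on $[0,1/m]$ (so $\Pr[y=0]=1-\tfrac1{mv}$); ties are broken in favour of player~$2$. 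The only structural change relative to the item-bidding proof is that player~$1$ no longer randomizes over which item to target: her single positive bid $x$ now competes directly with player~$2$'s \emph{lowest} winning bid, which streamlines rather than complicates the analysis.

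First I would pin down the allocation: against player~$1$'s vector $(x,0,\dots,0)$ with $x>0$, a decreasing bid vector $(y_1,\dots,y_m)$ of player~$2$ wins all $m$ units if $x\le y_m$ and exactly $m-1$ units if $x>y_m$ (and player~$1$ wins one unit iff $x>y_m$). Hence for such a deviation player~$2$'s expected utility equals $G(y_m)\bigl(2-\sum_{j=1}^m y_j\bigr)+\bigl(1-G(y_m)\bigr)\bigl(1-\sum_{j=1}^{m-1}y_j\bigr)=1-\sum_{j=1}^{m-1}y_j+G(y_m)(1-y_m)$; since $G(y_m)(1-y_m)=(m-1)y_m$ for $y_m\le 1/m$ and $G(y_m)(1-y_m)\le(m-1)y_m$ for $y_m>1/m$ (there $G(y_m)=1$ and $(m-1)y_m\ge 1-y_m$), this is at most $1-\sum_{j=1}^{m-1}y_j+(m-1)y_m\le 1$, with equality precisely for constant vectors $(y,\dots,y)$ with $y\le 1/m$. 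Symmetrically, if player~$1$ submits $(x_1,x_2,\dots,x_m)$ she wins at least one unit with probability $F(x_1)$ and then pays at least $x_1$, so her utility is at most $F(x_1)(v-x_1)=v-1/m$ for $x_1\in(0,1/m]$, and $(x,0,\dots,0)$ with $x\in[0,1/m]$ attains exactly $v-1/m$ (any extra positive coordinate, or any bid above $1/m$, is strictly worse). Therefore every vector in the supports of $\B_1$ and $\B_2$ is a best response, so $\B$ is a mixed Nash equilibrium; the checks that $F$ and $G$ are valid CDFs are identical to those in Section~\ref{sec:subadditive_LB}.

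It then remains to compute the ratio. The optimum allocates all $m$ units to player~$2$, so $SW(\O)=2$. In $\B$ player~$2$ bids $0$ on all units with probability $1-\tfrac1{mv}$, in which case player~$1$ (who bids positive almost surely) wins one unit and the realized welfare is $v+1$; otherwise the welfare is at most $2$. Hence $SW(\B)\le\bigl(1-\tfrac1{mv}\bigr)(v+1)+\tfrac1{mv}\cdot 2=1+v+\tfrac1{mv}-\tfrac1m$, which for $v=1/\sqrt m$ equals $1+\tfrac2{\sqrt m}-\tfrac1m$, so $\mathrm{PoA}\ge\frac{2}{1+2/\sqrt m-1/m}\to 2$ as $m\to\infty$.

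I expect the main obstacle to be the best-response verification for player~$2$ against an arbitrary decreasing bid vector: one must determine the induced allocation exactly (including the tie-breaking and the degenerate cases where several of player~$2$'s coordinates, or its lowest coordinate $y_m$, equal $0$), and then clear the rearrangement $\sum_{j=1}^{m-1}y_j\ge(m-1)y_m$ together with the case split at $y_m=1/m$. The player-$1$ side and the CDF-validity computations are routine, being essentially those of Section~\ref{sec:subadditive_LB}.
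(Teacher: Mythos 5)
Your proposal is correct and follows essentially the same route as the paper's proof in Section~\ref{sec:discriminatory_lower_subadd}: the same two valuation profiles, the same CDFs $G$ and $F$ carried over from the item-bidding construction, the same equilibrium verification via the rearrangement $\sum_{j<m} y_j \geq (m-1)y_m$, and the same welfare calculation with $v=1/\sqrt m$. The only point where you are (mildly) more explicit than the paper is the case split at $y_m>1/m$; the paper leaves that to the earlier remark that bidding above $1/m$ is dominated, so this is a presentational rather than substantive difference.
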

\begin{proof}
  Consider two players and $m$ items with
  the following valuations: player $1$ is a unit-demand player with
  valuation $v<1$ if she gets at least one item; player $2$ has
  valuation $1$ for getting less than $m$ but at least one items, and
  $2$ if she gets all the items.  Inspired by \cite{HKMN11}, we use
  the following distribution functions:

$$G(x)=\frac{(m-1)x}{1-x}, \qquad x\in [0,1/m]; \qquad\qquad F(y) = \frac{v-1/m}{v-y},\qquad y\in [0,1/m]. $$

Player $1$ bids ${b}_1=(x,0,...,0)$ and player $2$ bids
${b}_2=(y,...,y)$. $x$ and
$y$ are drawn from $G(x)$ and $F(y),$ respectively.  In case of a tie, the
item is always allocated to player $2$.

Let $\B=(B_1,B_2)$ denote this mixed bidding profile.  We are going to prove that $\B$ 
is a mixed Nash equilibrium for every $v>1/m.$

If player $1$ bids any $x$ in the range $(0,1/m]$ for  one item,
she gets the item with probability $F(x)$, since a tie occurs with
zero probability. Her expected utility is $F(x)(v-x)=v-1/m$. So, for
every $x\in (0,1/m]$ her utility is $v-1/m$. If player $1$ picks $x$
according to $G(x)$, her utility is still $v-1/m$, since she bids $0$
with zero probability. Bidding something greater than $1/m$ results in
a utility less than $v-1/m$. Regarding player $1$, it remains to show
that her utility when bidding for only one item is at least as high as her
utility when bidding for more items. Suppose player $1$ bids $(x_1,...,x_m)$,
where $x_i \geq x_{i+1}$, for $1\leq i \leq
m-1$. Player $1$ doesn't get any item if and
only if $y \geq x_1.$ So, with probability $F(x_1)$, she gets at least
one item and she pays at least $x_1$. Therefore, her expected utility
is at most $F(x_1)(v-x_1)=v-1/m$, but it would be strictly less if she
had nonzero payments for other items with positive probability. This
means that bidding only $x_1$ for one item and zero for the rest of
them dominates the strategy $(x_1,...,x_m).$

If player $2$ bids $y$ for all items, where $y \in[0,1/m]$, she gets
$m$ items with probability $G(y)$ and $m-1$ items with probability
$1-G(y)$. Her expected utility is
$G(y)(2-my)+(1-G(y))(1-(m-1)y)=G(y)(1-y)+1-(m-1)y=1$. Bidding
something greater than $1/m$ results in utility less than $1$. Suppose
now that player $2$ bids $(y_1,...,y_m)$, where $y_i \geq y_{i+1}$ 
for $1\leq i \leq m-1$. 
If $x \leq y_m$, player $2$ gets all the items; otherwise she
gets $m-1$ items and she pays her $m-1$ highest bids. So, her utility
is

\begin{eqnarray*}
  && G(y_m)\left(2-\sum_{i=1}^m y_i\right) + (1-G(y_m))\left(1-\sum_{i=1}^{m-1} y_i\right)\\
  &=& G(y_m)(1- y_m) + 1-\sum_{i=1}^{m-1} y_i\\
  &=& my_m + 1 - \sum_{i=1}^{m} y_i \\
  &\leq& my_m + 1 - \sum_{i=1}^{m} y_m = 1.
\end{eqnarray*}

Overall, we proved that $\B$ is a mixed Nash
equilibrium. It is easy to see that the social welfare in the optimum
allocation is $2$. In this Nash equilibrium, player $2$ bids $0$ with probability
$1-\frac{1}{mv}$, so, with at least this probability, player $1$ gets
one item.

\begin{eqnarray*}
  SW(\B) \leq \left(1-\frac{1}{mv} \right)(v+1)+\frac{1}{mv} 2 =1+v+\frac{1}{mv}-\frac{1}{m}
\end{eqnarray*}
If we set $v=1/\sqrt{m}$, then $SW(\B)\leq
1+\frac{2}{\sqrt{m}}-\frac{1}{m}$. So, $PoA \geq
\frac{2}{1+\frac{2}{\sqrt{m}}-\frac{1}{m}}$ which, for large $m$, converges to 2.
\end{proof}


\paragraph{Acknowledgements} The authors are indebted to Orestis
Telelis for discussions on multi-unit auctions. We are also grateful to two anonymous reviewers for their useful comments and suggestions about a preliminary version of the paper.


\bibliographystyle{plain}\bibliography{poa}
\begin{appendix}
  \section{Mixed Nash equilibria with submodular valuations}
\label{sec:LBd-dimensions}

Here, we construct a group of instances  with submodular
valuations, which have at least one mixed Nash equilibrium. 
These instances are a generalization of the lower bound of 
Section \ref{sec:submodular_LB}. We believe that this construction is interesting in its own right. 
\\

  Consider an instance with $n+1$ players and $n^d$
  items, where $d$ divides $n$.
 We will refer to the first $n$ players as the \emph{real} players 
and to the last one as the \emph{dummy} player.
 Let $[n]$ denote the set of
  integers $\{1,\ldots,n\}.$ We define the set of items as $M=[n]^d,$
  that is, they correspond to all the different vectors $w=(w_1, w_2,
  ... , w_d)$ with $w_i\in [n].$ Intuitively, they can be thought of
  as the nodes of a $d$ dimensional grid, with coordinates in $[n]$
  in each dimension.

We divide the \emph{real} players into $d$ groups of $n/d$ players.
 Let $g(i)$ denote the group that player $i$ belongs to. 
We associate each group with one of the
dimensions (directions) of the grid. In particular, for any fixed
player $i$, his valuation for a subset of items $S\subseteq M$ is the
size (number of elements) in the $d-1$-dimensional projection of $S$
in direction $g(i)$, times $v$. Formally, 
 $$v_i(S)=v|\{w_{-g(i)}\,|\,\exists w_{g(i)}\,\,s.t. \,\,(w_{g(i)},w_{-g(i)})\in S \}|.$$
 It is straightforward to check that
$v_i$ has decreasing marginal valuations, and is therefore submodular.
The valuation of the \emph{dummy} player for any subset of items $S\subseteq M$ is $(v-1)|S|$.

Given these valuations, we describe a mixed Nash equilibrium
$\B=(B_1,\ldots ,B_n)$ having a PoA $\left(1-\left(1-\frac{1}{n}\right)^{n+\frac{n}{d}-1}\right)^{-1}$ 
which is arbitrarily close to $\frac{1}{1-e^{-1-\frac{1}{d}}}$,
for large enough $n$. The dummy player bids $v-1$ for every item, and
receives the item if all of the real players bid at most $v-1$ for it. The
utility of the dummy player is always $0$. For real
players the mixed strategy $B_i$ is the following. Every player $i$
 picks a number $\ell\in [n]$ uniformly at random, and an $x$ according
to the distribution with CDF 
$$G(x)=(n-1)\left(\frac{1}{\left(v-x\right)^{\frac{1}{n-1}}}-1\right),$$
where $x\in\left[v-1,v-\left(\frac{n-1}{n}\right)^{n-1}\right]$.
Subsequently, he bids $x$ for every item $w=(\ell, w_{-g(i)})$, with
$w_{g(i)}=\ell$ as $g(i)^{th}$ coordinate, and bids $v-1$ for the rest of the
items. 
only in case of $v-1$ bids for an item, in which case the $dummy$ player
gets the item.

Let $F(x)$ denote the probability that any bidder $i$ gets a fixed item
$j$, given that he bids $b_i(j)=x>v-1$ for this item, and the bids in
$\b_{-i}$ are drawn from $\B_{-i}$ (due to symmetry, this probability
is the same for all items $w=(\ell, w_{-g(i)})$). For every other player
$k,$ the probability that he bids $v-1$ for item $j$ is $(n-1)/n,$ and
the probability that $j$ is in his selected slice but he bids lower
than $x$ is $G(x)/n.$ Multiplying over the $n-1$ other players, we
obtain
\[F(x)= \left(\frac{G(x)}{n}+\frac{n-1}{n}\right)^{n-1}=
\frac{\left(1-\frac{1}{n}\right)^{n-1}}{v-x}.\] Notice that $v_i$ is an
additive valuation restricted to the slice of items that player $i$
bids for in a particular $b_i.$ Therefore the expected utility of $i$
when he bids $x$ in $b_i$ is $F(x)(v-x)$ for one of these items,
and comprising all items
$\E[u_i(b_i)] =n^{d-1} F(x)(v-x)=n^{d-1} \left(1-1/n\right)^{n-1}$.

Next we show that for $v\geq
\left(1-\frac{1}{n}\right)^{-\frac{n}{d}+1}$, $\B$ is a Nash
equilibrium. In particular, the bids $b_i$ in the support of $B_i$
maximize the expected utility of a fixed player $i$.

First, we fix an arbitrary $w_{-g(i)},$ and focus on the set of items
$C:=\{(\ell,w_{-g(i)})\,|\, \ell \in [n]\},$ which we call a
\emph{column} for player $i$. Recall that $i$ is interested in getting
only one item within $C,$ on the other hand his valuation is additive
over items from different columns. Observe that restricted to a fixed
column, submitting any bid
$x\in\left[v-1,v-\left(\frac{n-1}{n}\right)^{n-1}\right]$ for one
arbitrary item results in the constant expected utility of $
\left(1-\frac{1}{n}\right)^{n-1}$, whereas a bid higher than
$v-\left(\frac{n-1}{n}\right)^{n-1}$ guarantees the item but pays more
so the utility becomes strictly less than $
\left(1-\frac{1}{n}\right)^{n-1}$ for this column.

We introduce two functions, $F_1(x)$ and $F_2(x)$,
\[F_1(x) = \left(\frac{G(x)}{n}+\frac{n-1}{n}\right)^{\frac{d-1}{d}n}
=\left(\frac{1-\frac{1}{n}}{(v-x)^\frac{1}{n-1}}\right)^{\frac{d-1}{d}n},\]
\[F_2(x) = \left(\frac{G(x)}{n}+\frac{n-1}{n}\right)^{\frac{n}{d}-1} =
\left(\frac{1-\frac{1}{n}}{(v-x)^\frac{1}{n-1}}\right)^{\frac{n}{d}-1}.\]
$F_1(x)$ denotes the probability that any bidder $i$ gets a fixed item
$j$, while the rest of the players in $g(i)$ bid $v-1$, given that
player $i$ bids $b_i(j)=x$ for this item, and the rest bids in
$\b_{-i}$ are drawn from $\B_{-i}$. Similarly, $F_2(x)$ denotes the
same probability while the players of all the other groups, apart from
$g(i)$, bid $v-1$. Notice that $F(x)=F_1(x)F_2(x)$. In any fixed
$\b_{-i}$, every other player of group $k\neq g(i)$ submits the same
bid for all items in $C,$ because either the whole $C$ is in the
current slice of $k$, and he bids the same value $x$, or no item from
the column is in the slice and he bids $v-1$. Therefore, $F_1(x)$ for
items in $C$ are \emph{fully dependent} distributions, whereas
$F_2(x)$ for items in $C$ are \emph{independent} distributions.

We first show that if bidder $i$ bids more than $v-1$ for at least two
items in $C$, bidding $v-1+\varepsilon$ for all of these items, for a
significantly small $\varepsilon>0$, is a better strategy. Suppose
that bidder $i$ bids $\mathbf{x}$, which is $x_j > v-1$ for every item
$j \in R \subseteq C$ and $v-1$ for the rest, where $k$ is the
cardinality of $R$. Reorder the items in $R$ in a way that the bids
are in non-increasing order. We will use mathematical induction over
$k$. Let $u_i(\mathbf{x},k)$ be the utility of player $i$ for bidding
$\mathbf{x}$, when the number of bids strictly greater than $v-1$ is
$k$.
\begin{eqnarray*}
E[u_i(\mathbf{x},2)] &=& F(x_2)F_2(x_1)(v-x_1-x_2) + F(x_2)(1-F_2(x_1))(v-x_2)\\
&&+F_2(x_1)(F_1(x_1)-F(x_2))(v-x_1)\\
&=& F(x_2)(v-x_2)+F(x_1)(v-x_1)-F(x_2)F_2(x_1)v\\
&=& 2\left(1-\frac{1}{n}\right)^{n-1}-F(x_2)F_2(x_1)v
\end{eqnarray*}
$E[u_i(\mathbf{x},2)]$ is maximized when both $x_1$ and $x_2$ are
minimized, so for $x_1=x_2=v-1+\varepsilon$. Let $R_{-1}$ denote the
set of all $k$ items apart from the first one (for which bidder $i$
bids the most).  Assume that $E[u_i(\mathbf{x},k-1)]$ is maximized
when $x_j=v-1+\varepsilon$, for all $j \in R_{-1}$. Moreover, let
$\P(S\neq \emptyset|S\subseteq R_{-1})$ and $\P(S)$ be the
probabilities that he gets at least one item from $R_{-1}$ and gets
$S$, respectively. It is easy to see that $\P(S\neq
\emptyset|S\subseteq R_{-1})=\sum_{\substack{S\subseteq R_{-1} \\S
    \neq \emptyset }}\P(S)$.
\begin{equation*}\begin{array}{ccl}
  E[u_i(\mathbf{x},k)] &=& F_2(x_1)(F_1(x_1)-\P(S\neq \emptyset|S\subseteq R_{-1}))(v-x_1) \\
  &&+ F_2(x_1)\sum_{S\subseteq R_{-1},S \neq \emptyset}\P(S)(v-\sum_{j \in
      S}x_j-x_1)\\
  &&+(1-F_2(x_1))E[u_i(\mathbf{x}_{-1},k-1)]    \\
  &=& F(x_1)(v-x_1)-F_2(x_1)\P(S\neq \emptyset|S\subseteq R_{-1})(v-x_1) \\
  &&+ F_2(x_1)E[u_i(\mathbf{x}_{-1},k-1)]-
  F_2(x_1)x_1\sum_{S\subseteq R_{-1},S \neq \emptyset
    }\P(S)\\
  &&+(1-F_2(x_1))E[u_i(\mathbf{x}_{-1},k-1)]\\
  &=& \left(1-\frac{1}{n}\right)^{n-1} + E[u_i(\mathbf{x}_{-1},k-1)] -F_2(x_1)\P(S\neq \emptyset|S\subseteq R_{-1})(v-x_1) \\
  &&- F_2(x_1)x_1\P(S\neq \emptyset|S\subseteq R_{-1})\\
  &=& \left(1-\frac{1}{n}\right)^{n-1} + E[u_i(\mathbf{x}_{-1},k-1)] -F_2(x_1)\P(S\neq \emptyset|S\subseteq R_{-1})v
\end{array}\end{equation*}
$F_2(x_1)$ is minimized when $x_1=v-1+\varepsilon$, for a
significantly small $\varepsilon>0$, $E[u_i(\mathbf{x}_{-1},k-1)]$ is
maximized when $x_j=v-1+\varepsilon$, for all $j \in R_{-1}$ and
$\P(S\neq \emptyset|S\subseteq R_{-1})$ is minimized for the same
values. So, $E[u_i(\mathbf{x},k)]$ is maximized when
$x_j=v-1+\varepsilon$, for all $j \in R$.

We next prove that for $v\geq
\left(1-\frac{1}{n}\right)^{-\frac{n}{d}+1}$, bidding $x>v-1$ only for
one item dominates the strategy of bidding $x$ for more than one
items.

\begin{lemma}\label{lem:binomMath}
  For any integer $k\geq 1$, $$\sum_{r=1}^k \binom{k}{r}x^r(1-x)^{k-r}
  = 1-(1-x)^k \text{ and }\sum_{r=1}^k \binom{k}{r}x^r(1-x)^{k-r}r =
  kx.$$
\end{lemma}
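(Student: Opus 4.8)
Both identities are instances of the binomial theorem together with the standard absorption identity $r\binom{k}{r}=k\binom{k-1}{r-1}$; equivalently, they record the total mass and the mean of a $\mathrm{Binomial}(k,x)$ random variable. The plan is to derive the first identity directly and then reduce the second to the first.

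First I would apply the binomial theorem with $a=x$ and $b=1-x$:
\[
1=\bigl(x+(1-x)\bigr)^k=\sum_{r=0}^{k}\binom{k}{r}x^r(1-x)^{k-r}.
\]
Isolating the $r=0$ term, which equals $(1-x)^k$, and moving it to the left-hand side gives $\sum_{r=1}^{k}\binom{k}{r}x^r(1-x)^{k-r}=1-(1-x)^k$, which is the first claim. (One should note this holds for all real $x$, and in particular for the values $x=1/n$, etc., at which the lemma is invoked.)

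For the second identity, I would use $r\binom{k}{r}=k\binom{k-1}{r-1}$, valid for $1\le r\le k$. Hence
\[
\sum_{r=1}^{k}\binom{k}{r}x^r(1-x)^{k-r}r
= k\sum_{r=1}^{k}\binom{k-1}{r-1}x^r(1-x)^{k-r}
= kx\sum_{s=0}^{k-1}\binom{k-1}{s}x^{s}(1-x)^{(k-1)-s},
\]
after substituting $s=r-1$. The remaining sum is $\bigl(x+(1-x)\bigr)^{k-1}=1$ by the binomial theorem, so the whole expression equals $kx$, as required. The only care needed is the index bookkeeping (the shift $s=r-1$ and the exponent rewriting $k-r=(k-1)-s$); there is no genuine obstacle, since the statement is purely a routine binomial computation.
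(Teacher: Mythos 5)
Your proof is correct and follows essentially the same route as the paper: the first identity via the binomial theorem after isolating the $r=0$ term, and the second via the absorption identity $r\binom{k}{r}=k\binom{k-1}{r-1}$ followed by an index shift, which is exactly what the paper does (it just derives the absorption identity inline from factorials rather than naming it).
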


\begin{proof}
  \[\sum_{r=1}^k \binom{k}{r}x^r(1-x)^{k-r} = \sum_{r=0}^k
  \binom{k}{r}x^r(1-x)^{k-r} -
  \binom{k}{0}x^0(1-x)^{k-0}\]
  \[=(x+1-x)^k-(1-x)^k=1-(1-x)^k\]
  \begin{eqnarray*}
    \sum_{r=1}^k \binom{k}{r}x^r(1-x)^{k-r}r &=& \sum_{r=1}^k \frac{k!}{r!(k-r)!}x^r(1-x)^{k-r}r \\
    &=& k\sum_{r=1}^k \binom{k-1}{r-1}x^r(1-x)^{k-r}\\
    &=& kx\sum_{r=0}^{k-1} \binom{k-1}{r}x^r(1-x)^{k-1-r}\\
    &=& kx(x+1-x)^{k-1} = kx
  \end{eqnarray*}
\end{proof}

By using Lemma \ref{lem:binomMath}, the utility of bidding $x$ for $k$ items is,

\begin{eqnarray*}
E[u_i(x,k)] &=& F_1(x)\sum_{r=1}^k \binom{k}{r}F_2(x)^r(1-F_2(x))^{k-r}(v-rx)\\
&=& F_1(x)\left(\left(1-(1-F_2(x))^k\right)v - kF_2(x)x\right).
\end{eqnarray*}

We are going to bound the value of $v$, so that the utility decreases
as $k$ increases. So, we would like the following to hold.

\begin{equation*}
\begin{array}{ccl}
  E[u_i(x,k+1)] - E[u_i(x,k)] &\leq& 0 \\
  F_1(x)\left(\left(1-(1-F_2(x))^{k+1}\right)v - (k+1)F_2(x)x\right) \\
  -  F_1(x)\left(\left(1-(1-F_2(x))^k\right)v - kF_2(x)x\right) &\leq& 0 \\
  (1-F_2(x))^k v - (1-F_2(x))^{k+1}v - F_2(x)x  &\leq& 0\\
  F_2(x)(1-F_2(x))^k v - F_2(x)x  &\leq& 0\\
  (1-F_2(x))^k v -x  &\leq& 0
\end{array}
\end{equation*}

The quantity $(1-F_2(x))^k v -x$ is maximized when $x$ is minimized. Therefore,
\begin{eqnarray*}
(1-F_2(x))^k v -x &\leq& (1-F_2(v-1))^k v -v+1\\
&\leq& (1-F_2(v-1)) v -v+1\\
&=& 1 -F_2(v-1)v.
\end{eqnarray*}

Thus, $E[u_i(x,k+1)] - E[u_i(x,k)] \leq 0$, if $v\geq F_2^{-1}(v-1) =
\left(1-\frac{1}{n}\right)^{-\frac{n}{d}+1}$. Hence, for $v\geq
\left(1-\frac{1}{n}\right)^{-\frac{n}{d}+1}$, $\B$ is a Nash
equilibrium.

It remains to calculate the expected social welfare of $\B,$ and the
optimal social welfare.  We define a random variable w.r.t. the
distribution $\B.$ Let $Z_j=v$ if one of the real players $1,\ldots
,n$ gets item $j,$ and $Z_j=v-1$ if the \emph{dummy} player gets the
item. The expected social welfare is
\begin{eqnarray*}
\E_{\b\sim\B}[SW(\b)]&=&\sum_{j}E[Z_j]\\
&=&n^d((1-Pr(\text{\small no real player bids for  j}))v+ Pr(\text{\small no real player bids for  j})(v-1))\\
&=& n^d\left(v-\left(1-\frac{1}{n}\right)^{n}\right).
\end{eqnarray*}
Finally, we show that the optimum social welfare is $n^dv.$ An optimal
allocation can be constructed as follows: For each item $(w_1,w_2,
...,w_d)$ compute $r=\left(\sum_{i=1}^n w_i \mod n\right)$. Allocate
this item to the player $r$.  Similar to the Section
\ref{sec:submodular_LB}, each player is allocated $n^{d-1}$ items.

Therefore, the price of anarchy is
$\frac{v}{v-\left(1-\frac{1}{n}\right)^{n}}$.  For
$v=\left(1-\frac{1}{n}\right)^{-\frac{n}{d}+1}$, the price of anarchy
becomes
$\frac{\left(1-\frac{1}{n}\right)^{-\frac{n}{d}+1}}{\left(1-\frac{1}{n}\right)^{-\frac{n}{d}+1}-\left(1-\frac{1}{n}\right)^{n}}$
which for large $n$ it converges to $\frac{1}{1-e^{-1-\frac{1}{d}}}$.

It is easy to see that the case of $d=n$, is the special case of
Section \ref{sec:submodular_LB}, for which the Price of Anarchy is
$\frac{1}{\left(1-\left(1-\frac{1}{n}\right)^{n}\right)}$, and for
large $n$ converges to $\frac{1}{\left(1-\frac{1}{e}\right)}\approx
1.58$.


  \section{A lower bound example for the single item Bayesian PoA.}
\label{sec:single_LB_Bayes}

Bayesian equilibria were
known to be inefficient~\cite{Kri02}. Here, for the sake of completeness, we present a lower bound example for the Bayesian price of
anarchy, with two players and only one item. 
\begin{theorem}\label{thm:addmixed1} For single-item auctions the PoA
  in Bayesian Nash equilibria is at least 1.06.
\end{theorem}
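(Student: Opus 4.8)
The plan is to exhibit one concrete two-player, single-item Bayesian instance together with a Bayesian Nash equilibrium of it whose expected social welfare is at most a $1/1.06$ fraction of the expected optimum; this separates the Bayesian price of anarchy from the mixed-strategy price of anarchy of single-item auctions, which equals $1$. I would take player $1$ to have a publicly known deterministic value $v_1$ for the item, and player $2$ to have a value $v_2$ drawn from a simple distribution whose support lies entirely below $v_1$. With this choice the efficient allocation always hands the item to player $1$, so $\E[\mathrm{OPT}]=v_1$ and the only loss of efficiency comes from the probability mass on which player $2$ wins; the price of anarchy is then exactly $v_1/\E[SW]$. Both a two-point distribution for $v_2$ and the uniform distribution on an interval work here; a finitely supported $v_2$ keeps the equilibrium verification completely explicit, which is presumably why one ends up reporting a clean constant such as $1.06$ rather than the exact extremal value.

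The heart of the argument is writing down and verifying the equilibrium. As is standard for asymmetric first-price auctions, player $2$ of type $v$ plays an increasing bid function $\beta_2$, while player $1$ randomizes his bid with a CDF $G_1$ supported on an interval $[0,\bar b]$. Two conditions determine everything. First, player $1$ must be indifferent among all bids in his support: bidding $x$ he wins with probability $\mathrm{Pr}[\beta_2(v)\le x]$, so $\mathrm{Pr}[\beta_2(v)\le x]\,(v_1-x)$ must be constant on $[0,\bar b]$, which inverts to give $\beta_2$ up to one constant. Second, each type $v$ of player $2$ must best respond: its first-order condition $G_1'(x)(v-x)=G_1(x)$ evaluated at $x=\beta_2(v)$ is an ODE that determines $G_1$ by integration. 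The remaining constant is fixed by the boundary relations $\beta_2(\sup v)=\bar b$ and $G_1(\bar b)=1$; one then checks that any atom of $G_1$ (typically at $0$) and the tie-breaking rule are consistent with these best responses and --- crucially --- that player $1$ gains nothing by bidding above $\bar b$ to "jump over" a high-valued player $2$.

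Once the equilibrium is in hand, the conclusion is essentially a one-line computation: player $2$ wins precisely when $\beta_2(v_2)$ exceeds player $1$'s realized bid $b_1$, so
\[
\E[SW] \;=\; v_1 \;-\; \E\big[(v_1-v_2)\,\mathbf{1}[\,\beta_2(v_2)>b_1\,]\big],
\]
and evaluating this expectation against $G_1$ and the law of $v_2$ and plugging in the chosen parameters gives $\E[SW]\le v_1/1.06$, hence $\mathrm{PoA}\ge 1.06$. I expect the main obstacle to be precisely the equilibrium verification rather than the welfare bound: it is easy to produce strategies satisfying the first-order conditions, but one must rigorously exclude all non-local deviations (especially upward deviations by the deterministic player), and handle mass points and ties carefully; choosing a two-point $v_2$ is what makes this manageable in closed form, at the cost of an unoptimized constant.
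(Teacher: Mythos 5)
Your outline follows the same route as the paper: an asymmetric first-price auction between a bidder with commonly known value $v_1=1$ and a bidder whose value $v_2$ is random and (almost surely) at most $v_1$; the equilibrium is pinned down by indifference conditions for both bidders; and the welfare deficit is the value gap weighted by the probability that the weaker bidder wins. You also correctly flag the delicate points (atoms, tie-breaking, ruling out deviations above the support).

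What is missing is the actual instance. You never fix the distribution of $v_2$, never solve for the equilibrium, and never evaluate the welfare, so nothing in what you wrote produces the constant $1.06$ --- and that production is the entire content of the theorem. Your guess about the paper's construction is also off: it is neither a two-point nor a uniform $v_2$. The paper places probability $1/2$ on $v_2=0$ and spreads the remaining mass over $[1-2/e,\,1]$ with CDF $\tfrac{2}{e(1-x)+2}$, reverse-engineered so that player $1$'s equilibrium bid is uniform on $[1-2/e,\,1-1/e]$, player $2$ of type $v$ bids $(v+1-2/e)/2$, and the equilibrium welfare evaluates to about $0.942$, whence $1/0.942\approx 1.06$. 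For the record, the two-point alternative you mention does work: with $v_2\in\{0,\tfrac34\}$ equiprobable and ties resolved for player $1$, the indifference conditions give $G_1(x)=\tfrac{1/4}{3/4-x}$ on $[0,1/2]$ (an atom of $1/3$ at $0$) for player $1$ and $H(x)=\tfrac{x}{1-x}$ on $(0,1/2]$ for the type-$\tfrac34$ bidder, and a short computation gives $\E[SW]=\tfrac98-\tfrac12\ln\tfrac32\approx 0.922$, i.e.\ PoA $\approx 1.08>1.06$. So the plan is sound, but it only becomes a proof once some such instance is spelled out and the numbers are checked.
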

\begin{proof}
  In the lower-bound instance we have two bidders and only one
  item. The valuation of bidder~1 is always $1.$ Let $l=1-2/e,$ and
  $r=1-1/e.$ The valuation of bidder 2 is distributed according to the
  cumulative distribution function $H:$

  \begin{displaymath} H= \left\{ \begin{array}{ll}
        \frac{1}{e(1-l)} =\frac{1}{2},& x \in [0,l]\\
        \frac{1}{e(1-\frac{x+l}{2})}=\frac{2}{e(1-x)+2},&x \in [l,1].\\
      \end{array}\right.
  \end{displaymath}
  Observe that $v_2=0$ with probability $1/2,$ and $v_2$ is distributed over
  $[l,1]$ otherwise.  Consider the following bidding strategy
  $\B=(B_1,B_2):\,\,$ $B_1$ has a uniform distribution on $[l,r]$ with
  CDF $G(x)=\frac{x-l}{r-l}=ex-e+2$ on $[l,r];$ whereas the
  distribution $B_2$ is determined by the distribution of $v_2:$
 
  \begin{displaymath} b_2(v_2)= \left\{ \begin{array}{ll}
        0, & v_2=0\\
        \frac{v_2+l}{2},&v_2 \in [l,1].\\
      \end{array}\right.
  \end{displaymath}
  Let $F(x)$ denote the CDF of $b_2.$ We can compute the distribution
  of $b_2$ as follows. For $x\in [0,l),$ we have $F(x)=\P[b_2\le
  x]=\frac{1}{2};$ for $x\in [l,r]$ we have $F(x)=\P[b_2\le
  x]=\P[\frac{v_2+l}{2}\le x]=\P[v_2\le 2x-l]=\frac{1}{e(1-x)}.$
  Finally, $F(x)=1$ for $x\geq r.$ In summary, $b_2=0$ with
  probability $1/2,$ and is distributed over $[l,r]$ otherwise. On the
  other hand, $b_1$ is uniformly distributed over $[l,r].$ We do not
  need to bother about tie-breaking, since there are no mass points in
  $[l,r].$

  We prove next, that $\B$ is a Bayesian Nash equilibrium. Consider first player
  $2.$ If $v_2=0,$ his utility is clearly maximized. If $v_2\in
  [l,1],$ then $\E[u_2(b_2)]=G(b_2)(v_2-b_2).$ By straightforward
  calculation we obtain that over $[l,r]$ the function
  $G(x)(v_2-x)=\frac{(x-l)(v_2-x)}{r-l}$ is maximized in
  $b_2=(v_2+l)/2,$ so $b_2(v_2)$ is best response for bidder $2.$

  Consider now player 1. Given that over $[l,r]$ the distribution of
  $b_2$ is $F(x)=\frac{1}{e(1-x)},$ every bid $b_1\in [l,r]$ is best
  response for player 1, since his utility $F(b_1)(1-b_1)=1/e$ is
  constant.  Now we are ready to compute the social welfare of this
  Nash equilibrium.

  \[SW(\B)=Pr[v_2\le l]\cdot 1+\int_l^1(v_2\cdot
  G(b_2(v_2))+1-G(b_2(v_2)))\cdot h(v_2)d v_2\leq 0.942\] So the PoA
  is at least $1.06$.
\end{proof}


  \section{Anonymity assumption}
\label{sec:anonymity}

Recall that we assume that $q_j^w(x)$ (or $q_j^l(x)$)
is the same for all bidders. 
Without the anonymity of $q^w_j(\cdot)$ over buyers, we can show by the following example that the
PoA is unbounded. Suppose there is a single
item to be sold to two players with valuation $v_1=1$ and
$v_2=\epsilon$. The losing payment is $0$ for both players but the
winning payments are different such that $q^w(x)=x$ for bidder $1$ and
$\bar{q}^w(x)= \epsilon \cdot x$ for bidder $2$. If there is a
tie, then the item is allocated to player $2.$ Now consider the bidding strategy $b_1=b_2=1$. It is easy to see
that it forms a Nash Equilibrium and has $PoA = 1/\epsilon$.


\end{appendix}

\end{document}